\documentclass[twoside]{article}
\usepackage{fullpage}

\usepackage[round]{natbib}
\renewcommand{\bibname}{References}

\setcitestyle{authoryear}

\usepackage[USenglish]{babel}
\usepackage[utf8]{inputenc}
\usepackage{amsthm}
\usepackage{amssymb}
\usepackage{amsmath}
\usepackage{leftidx}
\usepackage[all]{xy}
\usepackage{microtype}
\usepackage{hyperref}
\hypersetup{colorlinks=true,allcolors=blue}
\usepackage{tikz}
\usetikzlibrary{arrows}
\usepackage{listings}
\usepackage{graphicx}
\usepackage{todonotes}

\usepackage{algorithm}
\usepackage{bbm}
\usepackage[noend]{algpseudocode}
\usepackage{varwidth}
\newcounter{algsubstate}
\makeatletter

\makeatother

\newcommand\poly{\mathrm{poly}}
\newcommand\ranges{\mathrm{ranges}}

\newcommand\rng[1]{\mathrm{range}_{#1}}
\newcommand\sgn{\mathrm{sgn}}
\newcommand\E{\mathbb{E}}
\newcommand\nnz{\mathrm{nnz}}

\newcommand{\norm}[1]{\ensuremath{\left\| #1\right\|_2}}

\newcommand{\normp}[1]{\ensuremath{\left\| #1\right\|_p}}
\newcommand{\normq}[1]{\ensuremath{\left\| #1\right\|_q}}

\newcommand{\norminf}[1]{\ensuremath{\left\| #1\right\|_\infty}}
\newcommand\R{\ensuremath{R^{-1}}}
\newcommand{\RL}{\ensuremath{\mathbb{R}}}

\newtheorem{mydef}{Definition}[section]
\newtheorem{lem}[mydef]{Lemma}      
      
\newtheorem{thm}{Theorem}
\newtheorem{pro}[mydef]{Proposition}
\newtheorem{cor}[mydef]{Corollary}

\theoremstyle{definition}

\title{\vspace{-2em}$p$-Generalized Probit Regression and Scalable Maximum\\Likelihood Estimation via Sketching and Coresets}

\author{Alexander Munteanu\thanks{Dortmund Data Science Center, Faculties of Statistics and Computer Science, TU Dortmund University, Dortmund, Germany. Email: \texttt{alexander.munteanu@tu-dortmund.de}.}
\and Simon Omlor \thanks{Faculty of Statistics, TU Dortmund University, Dortmund, Germany. Email: \texttt{simon.omlor@tu-dortmund.de}.}
\and Christian Peters \thanks{Faculty of Statistics, TU Dortmund University, Dortmund, Germany. Email: \texttt{christian2.peters@tu-dortmund.de}.}}

\begin{document}
\allowdisplaybreaks

\maketitle

\begin{abstract}
  We study the $p$-generalized probit regression model, which is a generalized linear model for binary responses. It extends the standard probit model by replacing its link function, the standard normal cdf, by a $p$-generalized normal distribution for $p\in[1, \infty)$. The $p$-generalized normal distributions \citep{Sub23} are of special interest in statistical modeling because they fit much more flexibly to data. Their tail behavior can be controlled by choice of the parameter $p$, which influences the model's sensitivity to outliers. Special cases include the Laplace, the Gaussian, and the uniform distributions. We further show how the maximum likelihood estimator for $p$-generalized probit regression can be approximated efficiently up to a factor of $(1+\varepsilon)$ on large data by combining sketching techniques with importance subsampling to obtain a small data summary called coreset.
\end{abstract}

\section{INTRODUCTION}
Probit regression is arguably one of the most successful models when considering predictive models with binary responses. Historically it preceded the logit model and those two are still the gold standards in many statistical domains \citep{Cramer02}. Probit regression enjoys large popularity in toxicology \citep{LeiS18}, biostatistics \citep{VarinC09}, and also in econometrics \citep{Gu09,Moussa19}.
Viewing it as a latent variable model yields an efficient Gibbs sampler for Bayesian analysis \citep{AlbertC93}.

The probit model can be defined as a generalized linear model \citep{McCullaghN89} where the expected value of the response is connected to a linear predictor using the cumulative distribution function (cdf) of a standard normal distribution $\Phi(\cdot)$ as its link function, i.e.,
\[ \mathbb{E}[Y] = \Phi(Z\beta). \]
In principle, the link function can be replaced by any cdf, e.g., by the cdf of a standard logistic distribution for logistic regression. In what follows, we extend the probit model to a $p$-\emph{generalized probit model} by employing the cdf of a $p$-generalized normal distribution \citep{Sub23} as its link function. We work with a standardized form given in \citep{KalkeR13}:
\[\Phi_p(x) = \frac{p^{1-1/p}}{2\Gamma(1/p)} \int_{-\infty}^x  \exp(-|t|^p/p) \,dt , x\in \mathbb{R}, p>0.\]
This family of distributions is of special interest in statistical modeling. It fits more flexibly to data because the tail behavior (kurtosis) can be controlled by choice of the parameter $p \geq 1$\footnote{Extending to $p\in (0,\infty)$ is in principal possible, but $p<1$ leads to non-convex level sets in the multivariate case and the maximum likelihood estimation problem thus becomes inefficient to solve.}. 
Probability density functions (pdf) and cumulative distribution functions (cdf) for various values of $p$ are shown in Figure \ref{fig:pdfcdf}.
The standard normal distribution and the standard probit model are obtained in the case $p=2$ with {squared} exponential tails. The case $p=1$ corresponds to the Laplace distribution with exponential tails, akin to the logistic distribution, and thus models more robustness to outliers. Finally, in the limiting case for $p\rightarrow \infty$ we obtain a uniform distribution over $[-1,1]$ whose tails are cut-off. Such a model is very sensitive to outliers and will fit the extreme data points, dominating other points with average behavior.

After introducing the generalized probit model, we set up the likelihood in order to learn the parameter $\beta\in \mathbb{R}^d$ via maximum likelihood estimation resp. minimization of the negative log-likelihood. The resulting $p$-probit loss function is strictly monotonic and convex for $p\geq 1$.
\begin{figure}[t!]
    \centering
    \includegraphics[width=0.49\linewidth]{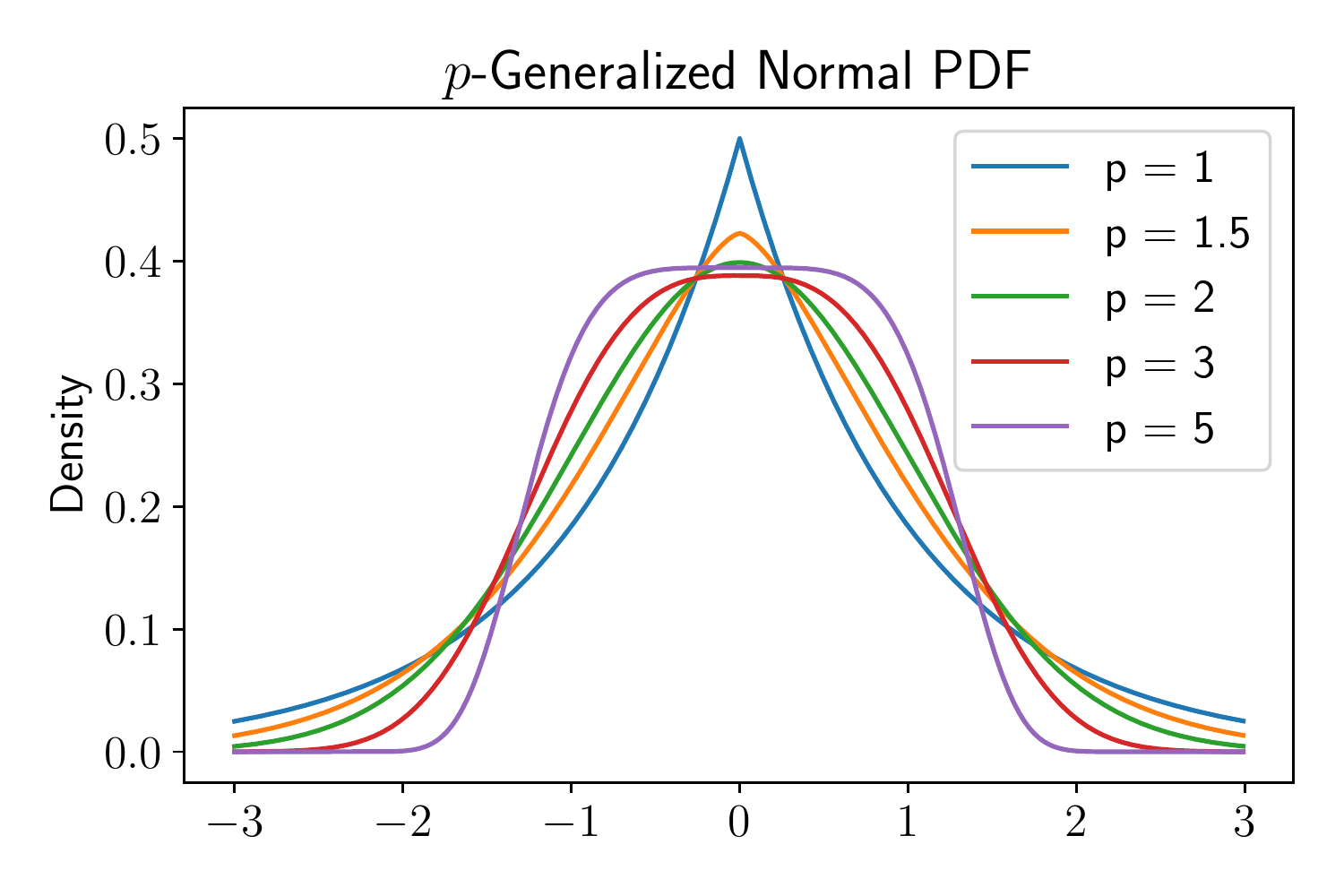}
    \includegraphics[width=0.49\linewidth]{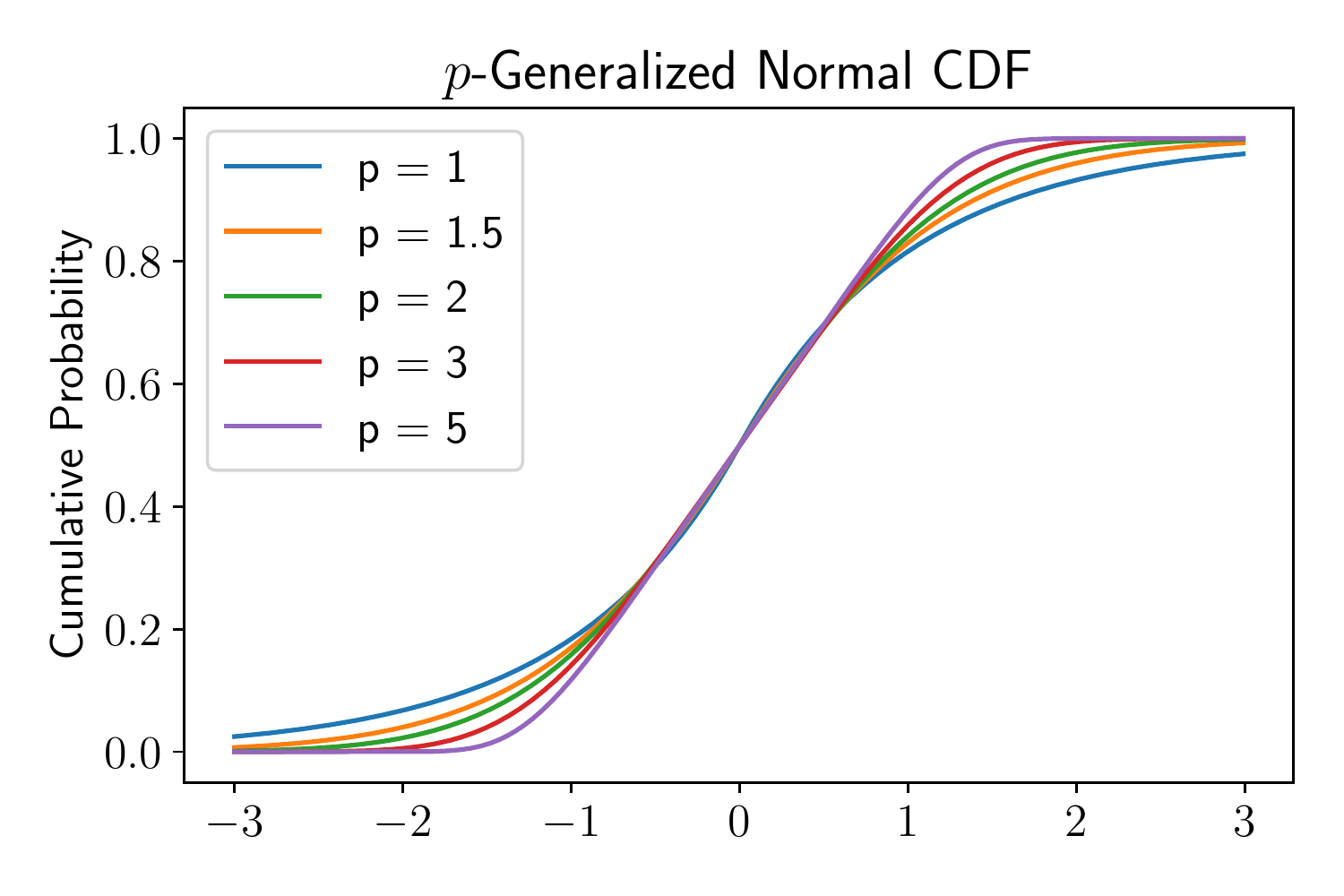}
    \caption{Probability density functions (pdf) and cumulative distribution functions (cdf) of $p$-generalized normal distributions for various values of $p$.}
    \label{fig:pdfcdf}
\end{figure}
We note that efficient routines are available for evaluating the probability density function, the cumulative distribution function and drawing random variables \citep{KalkeR13}\footnote{Implementations are available in the Python package \texttt{scipy.stats.gennorm} as well as in the R package \texttt{pgnorm}.}. This allows to employ standard optimization methods like gradient descent or Newton's method \citep{Bubeck15}. However, working with large scale data poses a severe limitation since each iteration scales at least linearly with the number of data points $n$.
We note that stochastic gradient descent (SGD) and mini batching strategies provide scalable alternatives, and often work well in practice, but do not give the desired accuracy guarantees that we pursue in this paper. Examples and experiments where those methods fail were given in \citep{MunteanuSSW18, MunteanuOW21}. Further, well-known random projections (e.g. Johnson-Lindenstrauss, Count Sketch, etc.) do not work for asymmetric functions since the sign of vectors are not preserved, and they have near linear lower bounds for $p>2$. Another disadvantage of random projections is that linear combinations of data decrease their interpretability.
We thus focus on approximating the \emph{full} gradient via importance sampling to cope with the limitations arising with large $n$.

Two methods called \emph{coresets} and \emph{sketching} \citep{Phillips17,MunteanuS18,Feldman20} received a lot of attention recently and led to the so called \emph{sketch and solve} paradigm \citep{Woodruff14}. The first step of sketch and solve is a data reduction: we compute coresets and sketches which are concise data summaries that approximate the original full data with respect to a given loss function. In a second step, we pass the reduced data to the standard optimization algorithm that we would have used on the full data. But the algorithm now performs much more efficiently on the summary due to its reduced size. The theoretical analysis ensures that the result is close to the result of analyzing the original large data. In this paper we combine methods from the sketching and coreset literature to facilitate an efficient estimation for $p$-generalized probit regression on large data.

\subsection{Related Work}
The $p$-generalized normal distribution was introduced by \citet{Sub23} and became widely popular in the late twentieth century \citep{Goodman73,OsiewalskiS93,Kotz94unicont}. We refer to \citep{Dytso18} for an extensive survey on applications and analytical properties of the generalized normal distribution. Among other results, this reference provides an asymptotic characterization of the tails of generalized normal distributions, which we concretize in a non-asymptotic way, similar to the classic work of \citet{Gordon41} on the standard normal distribution. Another nice property is the decomposability into independent marginals, which characterizes the class of multivariate generalized normal distributions \citep{Sinz09,Dytso18}. In summary, the class of $p$-generalized normal distributions naturally extends the standard normal distribution and retains several of its useful and desirable analytical properties. Hereby, it offers more parametric flexibility allowing for tails that are either heavier $(p < 2)$ or lighter $(p > 2)$ than normal $(p=2)$ which makes it an excellent choice in many modeling scenarios \citep{Dytso18}.

Most related to our work are coreset and sketching algorithms for \emph{linear} $\ell_p$ regression \citep{Clarkson05,DasguptaDHKM09,SohlerW11,MengM13,WoodruffZ13,ClarksonDMMMW16}, which aims at minimizing $\|Z\beta-Y\|_p$, and can be seen as a standard linear model $Y=Z\beta + \eta$, where the error term $\eta$ follows a $p$-generalized normal distribution\footnote{The connection is not explicitly elaborated in those references.}. The earlier works relied on subsampling according to $\ell_p$ norms derived from a well-conditioned basis, whose approximation posed the computational bottleneck. Subsequent works improved the previous results significantly by approximating those bases via fast linear sketching techniques. To our knowledge, we are the first to study coresets and sketching for $\ell_p$ regression in the setting of \emph{generalized} linear models.

Coresets for \emph{generalized} linear models were studied by \citet{KerstingMM18}, who gave an impossibility result for Poisson regression and a heuristic based on a latent variable log-normal count model. Starting with \citep{ReddiPS15,HugginsCB16}, a series of works focused on logistic regression. In \citep{MunteanuSSW18} it was shown that there are no sublinear sketches or coresets for logistic regression, and to overcome this limitation the authors introduced a complexity parameter $\mu$ for compressing the data, which is related to the statistical modeling and their assumption that $\mu$ is small is naturally met on real world data and applications. The theoretical bounds have been recently improved \citep{MaiRM21} using $\ell_1$ Lewis weights instead of sampling according to the square root of $\ell_2$ leverage scores. However, the practical performance was only slightly improved at the cost of a $O(\log\log n)$ factor increase in the running time and number of passes over the data, required for approximating the Lewis weights. %Another milestone
Recently, \citet{MunteanuOW21} developed the first oblivious linear sketch for solving logistic regression up to $O(1)$ error in a single pass over a turnstile stream. 

A different series of related works studied regularization as a means to overcome the lower bounds \citep{FeldmanT18,TukanMF20}. Recently it was shown that if the regularization is strong enough then a uniform sample suffices \citep{SamadianPMIC20}. In this paper we do not rely on regularization because a coreset for the regularized version of a problem does not yield a coreset for the unregularized version. On the other hand, if we have a coreset for the unregularized setting, it remains valid under any regularization term that is non-negative and does not depend on the data.
This holds for the most common cases, e.g. for ridge, LASSO and elastic net regularization.

\subsection{Our Contribution}
A simple observation is that the tails of the logistic and the $1$-generalized normal distributions are both exponential, and thus very similar up to a constant factor\footnote{The distributions differ more significantly near their means, i.e., around zero.}. We thus expect similar results for the generalized linear models employing those cdfs as their link functions with respect to both, lower and upper bounds. Moreover, the intuition behind this paper is that we can generalize the aforementioned notion of $\mu$-complexity and hereby extend the class of generalized linear models that admit small coresets in a natural way, to obtain results for the broader class of $p$-generalized probit regression.
Specifically, our contributions are:

\begin{itemize}
    \item We introduce the $p$-generalized probit model as a flexible framework for modeling binary data, and for classification.
    \item To facilitate an efficient and scalable maximum likelihood estimation for the parameters of the $p$-generalized probit model, we develop coreset constructions for the associated $p$-probit loss function via the sensitivity framework \citep{LangbergS10,FeldmanL11} and combine with sketching techniques of \citet{WoodruffZ13} to obtain well-conditioned $\ell_p$ bases and to approximate $\ell_p$ leverage scores efficiently.
    \item To this end we derive analytical properties of the $p$-generalized normal distribution that continue and concretize the asymptotic tail bounds of \citet{Dytso18} using similar techniques as the classic non-asymptotic work of \citet{Gordon41} on the tails of the standard normal distribution. This result may be of broader independent interest.
    \item We provide a new VC dimension bound for our weighted loss functions by a novel fine-grained analysis, that also improves the $O(d\log n)$ bound of \citet{MunteanuSSW18} for logistic regression to $O(d\log(\mu/\varepsilon))$\footnote{It is widely believed that $O(d)$ suffices but there is no formal proof for this, except in the case of equal weights.}.
    \item We conduct an empirical evaluation on benchmark data. We compare the case $p=1$ to logistic regression, assessing their proximity. We assess the results obtained from $p$-probit regression for different values of $p$. We further evaluate our coreset constructions for different values of $p$ with respect to their approximation accuracy and algorithmic efficiency.
\end{itemize}

In summary, we advance the statistical modeling of binary data, make contributions in the field of analytical properties of the $p$-generalized normal distributions, generalize existing coreset constructions to a broader class of loss functions, and demonstrate the practical relevance of our methods.

\section{TECHNICAL OVERVIEW}
Several details are omitted due to page limitations. All missing pieces can be found in the appendix.

\subsection{Preliminaries}
For a fixed constant $p\in[1,\infty)$, and an unknown paramter $\beta\in\mathbb{R}^d$ we define the $p$-generalized probit model as a generalized linear model \citep{McCullaghN89}:
\[ \E[Y] = \Phi_p(Z\beta), \]
whose link function is given by $\Phi_p(r)= \frac{p^{1-1/p}}{2\Gamma(1/p)}\int_{-\infty}^{r} \exp(-|t|^p/p)\,dt$, i.e., the cumulative distribution function of the $p$-generalized normal distribution.
Suppose we observe data $\{(z_i,y_i)\}_{i=1}^n$, where we have \emph{row vectors} $z_i\in\mathbb{R}^d$ and $y_i\in \{0,1\}$ for each $i\in[n]$. We have $\Pr[y_i = 1] = \E[y_i] = \Phi_p(z_i\beta)$ and similarly $\Pr[y_i = 0] = 1-\Phi_p(z_i\beta) = \Phi_p(-z_i\beta)$. The likelihood of the model is given by 
\begin{align*}
\mathcal{L}(\beta|Z,Y) &= \prod_{i=1}^n \Phi_p(z_i\beta)^{y_i} \; \Phi_p(-z_i\beta)^{(1-y_i)} \\
&= \prod_{i=1}^n \Phi_p((2y_i-1)\cdot z_i\beta).
\end{align*}
We set $x_i = -(2y_i-1)z_i$ for all $i\in[n]$ for convenience of presentation\footnote{Indeed, $y_i$ and $z_i$ always appear in that combination so we can assume the data consists only of $x_i\in\mathbb{R}^d$ for $i\in [n]$.}. Consequently the log-likelihood simplifies to
$\ell(\beta|X,Y) = \sum_{i=1}^n \ln(\Phi_p(-x_i\beta))$.
Maximizing the likelihood is thus equivalent to minimizing the following $p$-probit loss function, where we add a weight vector $w\in\mathbb{R}^n_{>0}$ for technical reasons detailed below:
\begin{align*}
f_w(X\beta) &
= \sum_{i=1}^n -\ln(\Phi_p(-x_i\beta)) \cdot w_i.
\end{align*}
We omit the subscript whenever the weights are uniform, i.e., $w_i=1$ for all $i\in[n]$.
Moreover, to simplify notations we define the individual loss function
\begin{align}\label{eq:g_function}
    g(r)=-\ln(\Phi_p(-r)) .
\end{align}
The above loss function $f_w$ can be minimized within the framework of convex optimization \citep{Bubeck15} and can be regarded as solved for small to medium sized data. We provide a derivation of the gradient and the Hessian in Appendix \ref{app:grad_hess} for completeness.

\subsection{Coresets for \texorpdfstring{$p$}{p}-Probit Regression}
We focus on the situation where the number of data points $n \gg d$ is very large.
In this case, applying a standard algorithm directly is not a viable option due to being either too slow or even impossible requiring too much working memory. Following the sketch and solve paradigm, our goal is to reduce the data without losing much information, before we approximate the problem efficiently on the reduced data using a standard solver. More formally our goal is now to develop an $\varepsilon$-coreset:

\begin{mydef}\label{def:coreset}
A weighted $\varepsilon$-coreset $C=(X', w)$ for $f$ is a matrix $X' \in \mathbb{R}^{k \times d}$ together with a weight vector $w \in \mathbb{R}^{k}_{> 0}$ such that for all $\beta \in \mathbb{R}^d$ it holds that
\[  |f_w(X'\beta)-f(X\beta)|\leq \varepsilon\cdot f(X\beta). \]
\end{mydef}
Small $\varepsilon$-coresets with $k \ll n$ cannot be obtained in general. There are examples where no $\varepsilon$-coreset of size $k=o(n/\log n)$ exist, even if $\varepsilon$ attains an arbitrarily \emph{large} value \citep{MunteanuSSW18}. When $X'$ is constrained to be a subset of the input, then the bound can be strengthened to $\Omega(n)$ \citep{FeldmanT18}. Those impossibility results rely on the monotonicity of the loss function and thus extend to all $p$-generalized probit losses.
To get around those strong limitations, \citet{MunteanuSSW18} introduced a parameter $\mu$ as a natural notion for the complexity of compressing the input matrix $X$ for logistic regression, which we adapt to $p$-generalized probit models to parameterize our results.

\begin{mydef}\label{def:mu_complex}
	Let $X \in \mathbb{R}^{n \times d }$ be any matrix. For a fixed $p\geq 1$ we define
	\[ \mu_p(X)=\sup_{\beta \in \mathbb{R}^d\setminus\{0\}} \frac{\sum_{x_i\beta >0}|x_i\beta|^p}{\sum_{x_i\beta <0}|x_i\beta|^p}. \]
	We say that $X$ is $\mu$-complex if $\mu_p(X)\leq \mu < \infty$.
\end{mydef}

Computing $\mu$ is hard in general and is mainly for parameterizing the size of coresets in the theoretical analysis. In practice we spend as much memory as we can afford and do not precompute $\mu$. Fortunately, for many real world data sets $\mu$ turns out to be sufficiently small \citep{MunteanuSSW18}.

Our main result shows that if $\mu$ is small, then there exists a small coreset $C$.
In fact, the size of $C$ does not depend on $n$ at all and it can be computed efficiently in two passes over the data:

\begin{thm}\label{thm:mainoverview}
If $X \in \mathbb{R}^{n \times d}$ is $\mu$-complex for any fixed $p \in [1, \infty)$ then with constant probability %at most $\delta \in (0, \frac{1}{2})$ 
we can compute an $\varepsilon$-coreset $C=(X', w)$ for $p$-probit regression of size $k={O}(\frac{S}{\varepsilon^2}(d\ln(\varepsilon^{-1} \mu)\ln S))$ in two passes over the data, where
\begin{align*}
S= \begin{cases}
{O}\left(\mu d \right), & \text{for }p=2\\
{O}\left(\mu d^p(d\log d)^{2}\right), & \text{for }p \in [1, 2)\\
{O}\left(\mu d^{2p}(d\log d)^{2}\right), & \text{for }p \in (2, \infty).
\end{cases}
\end{align*}
Algorithm \ref{alg:main} runs in ${O}(\nnz(X)d+\poly(d))$\footnote{Here, $T=\poly(d)$ means that there exists a constant $c\geq 1$ such that $T = \Theta(d^c)$.} %\footnote{TODO Here ${O}(\poly(d))$ means that there exists some polynomial $f(t)$ such that the coreset size is bound by ${O}(f(d))$.}
time for $p\in[1,2]$ and in ${O}(\nnz(X)d+\poly(d)n^{1-\frac{2}{p}}\log n)$ time for $p>2$, were $\nnz{}$ denotes the number of non-zeros.
\end{thm}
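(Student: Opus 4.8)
The plan is to cast the construction into the Feldman--Langberg sensitivity sampling framework: I would sample rows of $X$ according to (upper bounds on) their sensitivities $\sigma_i = \sup_{\beta\neq 0} g(x_i\beta)\big/\sum_{j} g(x_j\beta)$ and reweight each sampled row by the reciprocal of its sampling probability, so that the resulting estimator of $f(X\beta)$ is unbiased. By the standard guarantee of that framework, if I can exhibit sensitivity upper bounds with total mass $S=\sum_i\sigma_i$ and bound the VC dimension $\Delta$ of the range space induced by the weighted functions $\beta\mapsto w_i\,g(x_i\beta)$, then a sample of size $O(\tfrac{S}{\varepsilon^2}(\Delta\ln S+\ln\tfrac1\delta))$ is an $\varepsilon$-coreset with probability $1-\delta$. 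Fixing $\delta$ to a constant, the theorem reduces to three sub-tasks: (i) bounding the individual sensitivities, (ii) bounding the total sensitivity $S$ by the stated quantities, and (iii) proving $\Delta=O(d\ln(\mu/\varepsilon))$.

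For step (i) I would first control the shape of $g(r)=-\ln(\Phi_p(-r))$ using the non-asymptotic, Gordon-type tail bounds on $\Phi_p$ announced among the contributions. These should sandwich $g$ so that $g(r)\to 0$ as $r\to-\infty$, $g(0)=\ln 2$, and $g(r)=\Theta(|r|^p)$ as $r\to+\infty$, with $g$ convex and monotone throughout. The crucial leverage is $\mu$-complexity. Since Definition~\ref{def:mu_complex} is symmetric under $\beta\mapsto-\beta$, it forces the positive mass $P(\beta)=\sum_{x_j\beta>0}|x_j\beta|^p$ and the negative mass $N(\beta)=\sum_{x_j\beta<0}|x_j\beta|^p$ to lie within a factor $\mu$ of one another, whence $P(\beta)\ge\frac{1}{1+\mu}\|X\beta\|_p^p$. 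Because $g$ dominates its $\ell_p$-type envelope on the positive side, the denominator $\sum_j g(x_j\beta)\ge P(\beta)$ is lower bounded by a $(1+\mu)^{-1}$-fraction of $\|X\beta\|_p^p$ (up to additive contributions near zero that are always positive). This reduces the asymmetric $p$-probit sensitivity to the symmetric quantity $|x_i\beta|^p/\|X\beta\|_p^p$, i.e. the $\ell_p$ leverage scores (or Lewis weights) of $X$, scaled by $\mu$.

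For step (ii) I would bound $\sum_i\sigma_i$ by the total $\ell_p$ leverage score relative to a well-conditioned $\ell_p$ basis. The conditioning constants of such a basis contribute the polynomial-in-$d$ factors, a factor $d^p$ for $p\in[1,2)$ and $d^{2p}$ for $p>2$, together with the $(d\log d)^2$ overhead incurred by approximating the basis, while $p=2$ collapses to the clean $O(\mu d)$ bound via exact $\ell_2$ leverage scores; the $\mu$ factor is inherited from the denominator bound above. For the efficiency claim I would compute these approximate scores through the linear sketches of \citet{WoodruffZ13}: the first pass applies the sketch to extract a well-conditioned basis and approximate leverage scores, and the second pass performs the importance sampling, yielding $O(\nnz(X)d+\poly(d))$ time for $p\le 2$; for $p>2$ the subspace-embedding dimension forces the extra $\poly(d)\,n^{1-2/p}\log n$ term.

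The step I expect to be the main obstacle is (iii), the VC dimension bound. A naive application of the classical result gives $O(d\log n)$, which would reintroduce a dependence on $n$ in the coreset size. Improving this to $O(d\ln(\mu/\varepsilon))$ requires a fine-grained argument showing that only $O(\mu/\varepsilon)$ distinct weight scales can meaningfully shatter the range space: rows whose reweighted contribution is negligible relative to $\varepsilon\cdot f(X\beta)$ can be excluded from the shattering analysis, and the sensitivity-based reweighting confines the surviving weights to a bounded range governed by $\mu$ and $\varepsilon$. Substituting this refined bound together with the sensitivity total from (i)--(ii) into the Feldman--Langberg estimate gives the stated coreset size $k=O(\tfrac{S}{\varepsilon^2}(d\ln(\varepsilon^{-1}\mu)\ln S))$ with constant success probability in two passes, completing the proof.
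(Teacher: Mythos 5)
Your steps (i) and (ii) track the paper's own proof closely: the same sensitivity framework (Proposition \ref{thm:sensitivity}), the same Gordon-type analytic bounds on $g$ (Lemma \ref{lem:g-prop}), the same use of $\mu$-complexity to lower-bound the denominator by the positive $\ell_p$ mass $P(\beta)\geq \frac{1}{1+\mu}\|X\beta\|_p^p$ and thereby reduce sensitivities to $\ell_p$ leverage scores (Lemma \ref{lem:sensbound}), and the same well-conditioned-basis-via-sketching machinery for the $d^{O(p)}(d\log d)^2$ factors and the two-pass, input-sparsity running time. The genuine gap is in step (iii), which you correctly single out as the main obstacle, but whose sketch would fail. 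Sensitivity sampling assigns weights $w_i = S/(k s_i)$, so it is the \emph{high}-sensitivity rows ($s_i > s_0 \approx \mu S\ln(p/\varepsilon)/(\varepsilon n)$) whose weights sweep the interval $[S/k,\, S/(k s_0))$, which contains $\Theta\bigl(\log\bigl(\varepsilon n/(\mu S \ln(p/\varepsilon))\bigr)\bigr)$ power-of-two scales --- a quantity that grows with $n$. Your claim that ``the sensitivity-based reweighting confines the surviving weights to a bounded range governed by $\mu$ and $\varepsilon$'' is therefore false precisely for the most important rows, and your proposed remedy --- excluding rows whose reweighted contribution is negligible relative to $\varepsilon f(X\beta)$ --- targets the wrong set: the rows that break the weight-counting argument are the high-sensitivity ones, which are anything but negligible. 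Moreover, the sensitivity framework offers no mechanism for discarding functions from the shattering analysis based on a $\beta$-dependent notion of negligibility; the VC dimension is a combinatorial property of the entire weighted class, quantified over all $(\beta, r)$.

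The paper closes this gap with two ingredients you would need to supply. First, all sensitivities below $S/n$ are rounded \emph{up} to $S/n$ (inflating $S$ to at most $3S$); only after this do the low-sensitivity weights occupy $O(\log(\mu/\varepsilon))$ power-of-two scales. Second --- the key device --- on the high-sensitivity index set $I_1$ the loss $g$ is \emph{replaced} by $G^+(r)=\max\{r,0\}^p/p$. The range space of weighted $G^+$ functions has VC dimension $d+1$ \emph{independently of the weights}, because $w_i G^+(x_i\beta)\geq r$ rewrites as $x_i\beta \geq (pr/w_i)^{1/p}$, a linear threshold over the augmented points $\bigl(x_i,(p/w_i)^{1/p}\bigr)\in\mathbb{R}^{d+1}$; this sidesteps the weight-counting problem entirely for those rows. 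The substitution must then be shown harmless: since $|I_1|\leq S/s_0$ and $g\leq(1+\varepsilon)G^+ + c_2\ln(p/\varepsilon)$ pointwise (Lemma \ref{lem:gbound}), the accumulated additive error is at most $\varepsilon n/\mu$ (Lemma \ref{lem:senssplit}), which is absorbed by the lower bound $f(X\beta)=\Omega(n/\mu)$ (Lemma \ref{lem:f-bound}); Corollary \ref{cor:senssplit} together with Lemma \ref{lem:I1ass} then converts a coreset for the substitute objective $\tilde f$ into a $7\varepsilon$-coreset for $f$. Without this substitution step (or an equivalent device), the bound $\Delta=O(d\log(\mu/\varepsilon))$ --- and with it the claimed $n$-free coreset size --- does not follow from your outline.
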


In the case $p=2$, which is of special importance since it corresponds to the standard probit regression model, we have the following improvements:

\begin{cor}\label{cor:p2}
Consider the setting of Theorem \ref{thm:mainoverview}, for $p=2$. The running time can be reduced to ${O}(\nnz(X)\log n + \poly(d))$. Moreover there exists a single pass online algorithm (Algorithm \ref{alg:online}) that runs in time $O(nd^2 + \poly(d))$ and computes a coreset of size $${O}\left(\frac{\mu d^2 \ln(\|X\|_2)}{\varepsilon^2}\ln(\varepsilon^{-1}\mu)\ln(\mu d \ln(\|X\|_2))\right),$$ where $\|X\|_2$ denotes the largest singular value of $X$.
\end{cor}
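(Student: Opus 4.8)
The plan is to treat the two claims separately, in both cases specializing the machinery of Theorem~\ref{thm:mainoverview} to the exponent $p=2$, where the general well-conditioned $\ell_p$ basis collapses to an orthonormal ($\ell_2$) basis and much faster sketching tools become available.

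For the running-time improvement, first I would recall that the $O(\nnz(X)d)$ bottleneck in Theorem~\ref{thm:mainoverview} is incurred only by the construction of the $\ell_p$ basis and the subsequent approximation of the $\ell_p$ leverage scores that upper bound the sensitivities. For $p=2$ a well-conditioned basis is obtained from any $\ell_2$ subspace embedding: applying a sparse embedding $S$ (CountSketch/OSNAP) with $\poly(d)$ rows costs $O(\nnz(X))$ time, and a thin $QR$ factorization of $SX$ yields in $\poly(d)$ time a matrix $R\in\REAL^{d\times d}$ for which $XR^{-1}$ is approximately orthonormal, so that the $\ell_2$ leverage score of row $i$ is $\ell_i \approx \norm{x_i R^{-1}}^2$. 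To avoid forming $XR^{-1}$ explicitly (which would again cost $O(\nnz(X)d)$), I would post-multiply by a Johnson--Lindenstrauss matrix $G\in\REAL^{d\times O(\log n)}$ and use $\tilde\ell_i = \norm{x_i R^{-1} G}^2$ as a constant-factor estimate of all $n$ scores simultaneously; computing $XR^{-1}G$ costs $O(\nnz(X)\log n + \poly(d))$, which is exactly the source of the claimed $\log n$ factor. Feeding these estimates into the same sensitivity-sampling step as in Theorem~\ref{thm:mainoverview} then yields the coreset within the stated time.

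For the online algorithm, the key change is to replace the two-pass leverage-score estimation by \emph{online} leverage scores, which can be maintained in a single pass. As row $x_i$ arrives I would update the running Gram matrix $M_i = \sum_{j\le i} x_j^\top x_j$ and compute the online leverage score $\ell_i^{\mathrm{on}} = x_i M_i^{-1} x_i^\top$ (maintaining $M_i^{-1}$ by a rank-one Sherman--Morrison update), keeping each row independently with probability proportional to an upper bound on its sensitivity derived from $\ell_i^{\mathrm{on}}$ and $\mu$; each step costs $O(d^2)$, for a total of $O(nd^2 + \poly(d))$. The size of the resulting coreset is governed, exactly as in Theorem~\ref{thm:mainoverview}, by the total sensitivity, which in the online setting is controlled by $\sum_i \ell_i^{\mathrm{on}}$. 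A standard potential argument on $\det(M_i)$ (via the matrix-determinant lemma) bounds this sum by $O(d\ln\norm{X})$, so the effective total sensitivity becomes $S = O(\mu d\ln\norm{X})$; substituting this into the size bound $O(\tfrac{S}{\varepsilon^2}\, d\ln(\varepsilon^{-1}\mu)\ln S)$ of Theorem~\ref{thm:mainoverview} reproduces the claimed $O\!\big(\tfrac{\mu d^2\ln(\norm{X})}{\varepsilon^2}\ln(\varepsilon^{-1}\mu)\ln(\mu d\ln\norm{X})\big)$ bound.

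The main obstacle, for the online part, is to certify that the online leverage scores still yield \emph{valid} sensitivity upper bounds for the probit loss $g$ of \eqref{eq:g_function}, and not merely for plain $\ell_2$ regression: one must combine the $\mu_2$-complexity of Definition~\ref{def:mu_complex} with the growth and tail estimates for $g$ to convert an $\ell_2$ leverage score into a pointwise bound on $g(x_i\beta)/f(X\beta)$, and then argue that this conversion survives when the scores are computed only with respect to the prefix $M_i$ rather than the full matrix. Controlling the regularization and the smallest singular value in the determinant argument, so that exactly the $\ln\norm{X}$ factor (and nothing worse) appears, is the remaining delicate point.
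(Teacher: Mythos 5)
Your proposal follows essentially the same route as the paper's proof. The first part (sparse $\ell_2$ sketch, thin QR of the sketch, then a Johnson--Lindenstrauss matrix $G\in\mathbb{R}^{d\times O(\log n)}$ applied on the right so that the row norms $\|x_iR^{-1}G\|_2^2$ are constant-factor estimates computable in $O(\nnz(X)\log n + \poly(d))$ total time) is exactly the paper's argument, which only adds the remark that this is relevant when $d=\omega(\ln n)$. For the online part, what you sketch --- Sherman--Morrison maintenance of the inverse Gram matrix, online scores $x_iM_i^{-1}x_i^\top$, and a determinant-potential bound $\sum_i \ell_i^{\mathrm{on}} = O(d\ln\|X\|_2)$ --- is precisely the content of the online $\ell_2$ leverage score results of \citet{CohenMP20} and \citet{ChhayaC0S20}, which the paper invokes as a black box and implements as Algorithm \ref{alg:online}; the paper also records the point you gloss over, namely that $O(d)$ of the updates cost $O(d^3)$ when the rank grows (pseudo-inverse recomputation), which is absorbed into the $\poly(d)$ term. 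The obstacle you flag at the end is not genuinely delicate: since $M_i \preceq M_n$ in the Loewner order and $x_i$ lies in the row space of $M_i$, we have $x_i M_i^{\dagger} x_i^\top \geq x_i M_n^{\dagger} x_i^\top = u_i$, so the online scores are pointwise over-estimates of the true $\ell_2$ leverage scores, and the conversion to probit sensitivities via $\mu$-complexity (Lemma \ref{lem:sensbound}) applies verbatim to any upper bound on $u_i$; the only price is that the total sensitivity grows from $O(\mu d)$ to $O(\mu d \ln\|X\|_2)$, which, substituted into the sampling bound of Proposition \ref{thm:sensitivity} exactly as you do, is where the extra $\ln(\|X\|_2)$ factor in the stated coreset size comes from (the normalization issues in the determinant argument being handled in the cited works).
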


The coreset of Theorem \ref{thm:mainoverview} or Corollary \ref{cor:p2} can then be used to compute a $(1+\varepsilon)$-approximation for the optimal maximum likelihood estimator for $\beta$:

\begin{cor}\label{cor:minimization}
Let $(X',w)$ be a weighted $\varepsilon$-coreset for $f$. Let $\tilde{\beta}\in \operatorname{argmin}_{\beta\in\mathbb{R}^d} f_w(X'\beta)$. Then it holds that $f(X\tilde\beta )\leq (1+3\varepsilon)\min_{\beta\in\mathbb{R}^d} f(X\beta)$.
\end{cor}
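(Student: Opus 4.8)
The plan is to exploit the two-sided approximation guarantee in Definition \ref{def:coreset}. Since $|f_w(X'\beta)-f(X\beta)|\leq \varepsilon f(X\beta)$ holds for \emph{every} $\beta\in\mathbb{R}^d$, I would first rewrite it as the multiplicative sandwich $(1-\varepsilon)f(X\beta)\leq f_w(X'\beta)\leq (1+\varepsilon)f(X\beta)$. The strategy is then the standard sketch-and-solve comparison: let $\beta^{*}\in\operatorname{argmin}_{\beta\in\mathbb{R}^d}f(X\beta)$ denote the true optimizer on the full data, and relate $f(X\tilde\beta)$ to $f(X\beta^{*})$ by routing through the coreset objective $f_w(X'\cdot)$, on which $\tilde\beta$ is optimal by assumption.

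Concretely, I would apply the lower bound of the sandwich at the point $\tilde\beta$, giving $f(X\tilde\beta)\leq \frac{1}{1-\varepsilon}f_w(X'\tilde\beta)$ (valid for $\varepsilon<1$, since we divide by the positive quantity $1-\varepsilon$). Next, since $\tilde\beta$ minimizes $f_w(X'\cdot)$, I have $f_w(X'\tilde\beta)\leq f_w(X'\beta^{*})$. Finally, the upper bound of the sandwich at $\beta^{*}$ yields $f_w(X'\beta^{*})\leq (1+\varepsilon)f(X\beta^{*})$. Chaining these three inequalities produces
\[
f(X\tilde\beta)\;\leq\;\frac{1+\varepsilon}{1-\varepsilon}\,f(X\beta^{*})\;=\;\frac{1+\varepsilon}{1-\varepsilon}\min_{\beta\in\mathbb{R}^d}f(X\beta).
\]

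It remains only to convert the factor $\frac{1+\varepsilon}{1-\varepsilon}$ into the claimed $1+3\varepsilon$. This is the sole genuine calculation, and it is elementary: after clearing the positive denominator, $\frac{1+\varepsilon}{1-\varepsilon}\leq 1+3\varepsilon$ is equivalent to $0\leq \varepsilon(1-3\varepsilon)$, which holds precisely for $\varepsilon\leq 1/3$. As $\varepsilon$ is a small approximation parameter we may assume $\varepsilon\leq 1/3$, and the bound follows. I do not anticipate any real obstacle: the argument is a three-line chain, and the only point to verify is that the multiplicative formulation and the division by $1-\varepsilon$ are legitimate, which holds because $f(X\beta)=\sum_i g(x_i\beta)$ is nonnegative, each $g(r)=-\ln(\Phi_p(-r))>0$ since $\Phi_p$ takes values in $(0,1)$.
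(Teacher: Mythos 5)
Your proposal is correct and is essentially identical to the paper's proof: the same three-step chain $f(X\tilde\beta)\leq f_w(X'\tilde\beta)/(1-\varepsilon)\leq f_w(X'\beta^*)/(1-\varepsilon)\leq (1+\varepsilon)f(X\beta^*)/(1-\varepsilon)\leq(1+3\varepsilon)f(X\beta^*)$. You even add a small point of rigor the paper leaves implicit, namely that the final bound $\frac{1+\varepsilon}{1-\varepsilon}\leq 1+3\varepsilon$ requires $\varepsilon\leq 1/3$.
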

\paragraph*{High Level Description of the Algorithm}
Before getting into the details we outline Algorithm \ref{alg:main}:
\begin{enumerate}
    \item We make a first pass to sketch the data for the purpose of estimating their individual importance.
    \item We make another pass to subsample the data proportional to their importance to obtain a coreset.
    \item We solve the reduced problem on the coreset using a standard algorithm for convex optimization.
\end{enumerate}

This approach implements the sensitivity sampling framework. The importance measure that it builds upon is called sensitivity, which measures the worst case contribution of each input point to the objective function. For efficiency reasons we first compute a sketch of the data in one pass. In the second pass, the sketch is used to approximate the $\ell_p$ leverage scores, which upper bound the sensitivities of the input points. Hereby, we pass them one-by-one to a reservoir sampler to obtain the coreset. In the following paragraphs we will prove that the output has the desired coreset property. To this end we will outline the sensitivity framework, then establish the connection of our loss function to $\ell_p$ spaces. We will further bound the VC dimension for our loss function and bound the sensitivities in terms of $\ell_p$ leverage scores. We will then show how we can approximate the $\ell_p$ leverage scores by means of well-conditioned bases via sketching, to use them as importance measure in the actual algorithm. Putting all those pieces together will prove Theorem \ref{thm:mainoverview}.
Finally, we can solve the original problem approximately using gradient descent or other standard methods for convex optimization \citep[see][]{Bubeck15} on the resulting coreset.
We continue with a more detailed description of all pieces and then put them together.

\paragraph*{The Sensitivity Framework}
To prove Theorem \ref{thm:mainoverview}, we start with the so called sensitivity framework \citep{LangbergS10,FeldmanL11,BravermanFL16,FeldmanSS20}. This framework provides a meta theorem, that can be used to construct coresets via importance subsampling. The main parameters that we need to bound is the VC dimension $\Delta$ of a weighted set of functions that can roughly be thought of as the dimension of the parameter space, and the sensitivities of the input points. The latter quantify the worst-case contributions of single input points to the loss function. Recall Eqn. (\ref{eq:g_function}). 
\begin{mydef}[\citet{LangbergS10}]\label{def:sensitivity}
For all $i\in[n]$, we define the sensitivity $\zeta_i$ of point $x_i$ by
\begin{align*}
\zeta_i=\sup_{\beta \in \mathbb{R}^d} \frac{g(x_i\beta)}{f(X\beta)}.
\end{align*}
The total sensitivity is given by $Z=\sum_{i=1}^n \zeta_i$
\end{mydef}

Calculating the sensitivities is often difficult in the sense that it requires to solve the problem to optimality first, before it allows us to obtain an approximation. Fortunately it suffices to calculate upper bounds $S=\sum\nolimits_{i=1}^{n} s_i \geq \sum\nolimits_{i=1}^{n} \zeta_i = Z$, which is often tractable, though it is crucial to control the increased total sensitivity bound $S$. Our bound $S$ will depend on $p, d,$ and $\mu $ as detailed in Theorem \ref{thm:mainoverview}. The meta theorem (Proposition \ref{thm:sensitivity}) then ensures that if we sample $k=O\left( \frac{S}{\varepsilon^2}\left( \Delta \ln S + \ln \left(\frac{1}{\delta}\right) \right) \right)$ points each with probability $p_i = \frac{s_i}{S}$, then the resulting subsample, reweighted by $w_i=\frac{1}{k p_i}$, is an $\varepsilon$-coreset with probability at least $1-\delta$. More formal details are deferred to Appendix \ref{app:sensitivity}.

\paragraph{Properties of the Loss Function}
The first step for bounding the parameters of the sensitivity framework is to analyze the tails of the generalized normal distribution and relate its negative logarithm to $g(r)\approx {r^p}/{p}$ for $r\geq 0$, and to $\exp(-{|r|^p}/{p})$ for $r<0$. This generalizes classic results on the tails of the standard normal distribution \citep{Gordon41}. The following lemma makes this more precise and follows by combining the technical derivations of Lemma \ref{lem:h-prop} in the appendix.

\begin{lem}\label{lem:g-prop}
The function $g$ (see Eqn. (\ref{eq:g_function})) is convex and strictly increasing.
Further for any $r\geq 0$ we have
\begin{align*}
g'(r) &\geq r^{p-1},
\intertext{for any $r\geq 1$ we have}
g'(r) &\leq r^{p-1}+\frac{p-1}{r}.
\intertext{and there exists a constant $c_1>0$ such that}
g(r) &\geq c_1 e^{-|r|^p/p}
\end{align*}
for any $r<0$.
\end{lem}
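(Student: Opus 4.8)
The plan is to handle the three assertions separately, deriving everything from the defining integral $\Phi_p(-r)=C_p\int_{-\infty}^{-r}\exp(-|t|^p/p)\,dt$ with $C_p=\frac{p^{1-1/p}}{2\Gamma(1/p)}$, together with the symmetry $\Phi_p(-r)=1-\Phi_p(r)$. Strict monotonicity is immediate: $\Phi_p$ is a cdf and hence strictly increasing, so $r\mapsto\Phi_p(-r)$ is strictly decreasing and $g=-\ln\Phi_p(-r)$ is strictly increasing. For convexity I would observe that the density $\exp(-|t|^p/p)$ is log-concave for $p\ge 1$, since $|t|^p$ is convex; by the classical fact that a log-concave density has a log-concave survival function, the map $r\mapsto 1-\Phi_p(r)=\Phi_p(-r)$ is log-concave, whence $g=-\ln\Phi_p(-r)$ is convex. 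Equivalently, writing $F(r)=\Phi_p(-r)$, one checks directly that $g''\ge 0$ is the same as $(F')^2\ge F''F$, i.e.\ log-concavity of $F$.

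For the derivative bounds I would first record the hazard-rate form obtained by differentiation (the normalizing constant $C_p$ cancels),
\[ g'(r)=\frac{\exp(-r^p/p)}{\int_r^\infty\exp(-t^p/p)\,dt},\qquad r\ge 0, \]
so the two claimed inequalities are lower and upper bounds on a Mills-type ratio, generalizing Gordon's bounds from $p=2$. The lower bound is clean: since $\frac{d}{dt}\exp(-t^p/p)=-t^{p-1}\exp(-t^p/p)$, we have the exact identity $\int_r^\infty t^{p-1}\exp(-t^p/p)\,dt=\exp(-r^p/p)$, and because $t^{p-1}\ge r^{p-1}$ for $t\ge r$ when $p\ge 1$, this gives $\exp(-r^p/p)\ge r^{p-1}\int_r^\infty\exp(-t^p/p)\,dt$, which rearranges to $g'(r)\ge r^{p-1}$.

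For the upper bound I would set $h(r)=r^{p-1}+\frac{p-1}{r}$ and study $G(r)=\int_r^\infty\exp(-t^p/p)\,dt-\frac{\exp(-r^p/p)}{h(r)}$. Since $G(r)\to 0$ as $r\to\infty$, it suffices to show $G'(r)\le 0$. A short computation reduces this to $r^{p-1}h(r)+h'(r)\le h(r)^2$, and substituting $h$ yields the difference $h(r)^2-\bigl(r^{p-1}h(r)+h'(r)\bigr)=p(p-1)r^{-2}\ge 0$ for $p\ge 1$. Hence $G\ge 0$, i.e.\ $\int_r^\infty\exp(-t^p/p)\,dt\ge \exp(-r^p/p)/h(r)$, which rearranges to $g'(r)\le r^{p-1}+\frac{p-1}{r}$.

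Finally, for $r<0$ write $u=|r|>0$. Using $\Phi_p(-r)=1-\Phi_p(r)$ and the elementary inequality $-\ln(1-x)\ge x$ reduces the claim to a tail lower bound,
\[ g(r)\ge \Phi_p(r)=C_p\int_u^\infty\exp(-t^p/p)\,dt, \]
so it remains to bound $\int_u^\infty\exp(-t^p/p)\,dt$ from below by a constant multiple of $\exp(-u^p/p)$ and to combine the resulting constant with $C_p$ to produce $c_1$. This last tail estimate is the technically delicate point and is exactly the step I expect to be the main obstacle: one needs a non-asymptotic lower bound in the spirit of Gordon, uniform over the whole range $r<0$ rather than merely asymptotic, and controlling the constant here (not just the rate) is what makes this part harder than the $r\ge 0$ bounds above. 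This is precisely the content I would defer to Lemma~\ref{lem:h-prop} in the appendix.
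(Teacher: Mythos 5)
Everything in your proposal up to the negative-tail bound is correct, and two parts take genuinely different routes from the paper. The monotonicity argument and the lower bound $g'(r)\geq r^{p-1}$ coincide in substance with the paper's: your exact identity $\int_r^\infty t^{p-1}e^{-t^p/p}\,dt=e^{-r^p/p}$ plus the comparison $t^{p-1}\geq r^{p-1}$ is precisely the computation in (\ref{eq:h'help2}). Your convexity argument via log-concavity of the density and the induced log-concavity of the survival function is a cleaner, more conceptual route; the paper instead shows $g''(r)=-h'(r)/h(r)^2>0$ on $[0,\infty)$ using $h'<0$ from Lemma \ref{lem:h-prop}, and handles $r<0$ separately through the explicit formula $g''(r)=g'(r)\left(g'(r)-\sgn(r)|r|^{p-1}\right)$. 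Your upper bound on $g'$ is also a different derivation of the same estimate: your claim $G\geq 0$, verified by the identity $h(r)^2-\left(r^{p-1}h(r)+h'(r)\right)=p(p-1)r^{-2}$ (which checks out), is equivalent to the Mills-ratio lower bound $\int_r^\infty e^{-t^p/p}\,dt\geq e^{-r^p/p}\cdot\frac{r}{r^p+p-1}$, i.e., to the paper's inequality (\ref{eq:hlb}); the paper reaches it via a Gordon-style analysis of $h'''$ and a contradiction argument establishing $h''\geq 0$, so your monotone auxiliary-function argument is shorter and avoids third derivatives entirely.

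The gap is the third inequality, which you reduce to the tail bound $\int_u^\infty e^{-t^p/p}\,dt\geq c\,e^{-u^p/p}$ and defer. This step is not merely delicate: it is false for $p>1$. By the paper's own bound (\ref{eq:hub}), $\int_u^\infty e^{-t^p/p}\,dt=e^{-u^p/p}h(u)<u^{1-p}e^{-u^p/p}$, so the ratio tends to $0$ as $u\to\infty$ and no constant $c>0$ works; in fact the lemma's third claim itself fails for $p>1$, since for $r\leq -1$ one has $g(r)=-\ln(1-x)\leq \frac{x}{1-x}$ with $x=C_p^{-1}\int_{|r|}^\infty e^{-t^p/p}\,dt=O\left(|r|^{1-p}e^{-|r|^p/p}\right)$, which is $o\left(e^{-|r|^p/p}\right)$. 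You should know that the paper's own proof slips at exactly this point: its final chain uses $\frac{u}{u^p+p-1}\geq\frac{1}{p}$ for $u\geq 1$, but this function is decreasing on $[1,\infty)$ with value $\frac1p$ at $u=1$, so the inequality is reversed for $u>1$. The correct salvage, available from (\ref{eq:hlb}) together with $u^p+p-1\leq pu^p$, is $g(r)\geq \frac{1}{pC_p}|r|^{1-p}e^{-|r|^p/p}$ for $r\leq -1$, or, absorbing the polynomial factor, $g(r)\geq c_1 e^{-2|r|^p/p}$ for all $r<0$; either version still supports the downstream Lemma \ref{lem:f-bound}, since Jensen's inequality applies to the convex function $e^{-2x/p}$ just as well and only the constants change. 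So your diagnosis of where the difficulty lies was exactly right, but completing your write-up requires weakening the claimed bound, not proving it as stated.
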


\paragraph*{Bounding the VC Dimension}
It is well-known \citep{HugginsCB16,MunteanuSSW18} that for uniform weights, the VC dimension of any invertible function of $X\beta$ can be bounded by relating to the class of affine separators, whose VC dimension equals $d+1$, see \citep{KearnsV94}. \citet{MunteanuSSW18} showed that the number of distinct weights for logistic regression can be bounded by $|W|=O(\log n)$ by rounding the sensitivities to the next powers of $2$, implying the VC dimension is bounded by $O(d\log n)$. The same arguments would apply to $p$-probit loss as well. In our fine-grained analysis we start by \emph{additionally} rounding all sensitivities below $\frac{S}{n}$ to $\frac{S}{n}$ while increasing the total sensitivity to at most $2S+n\frac{S}{n}=3S$. Then we inspect large and small sensitivities separately, depending on their relation to a threshold $s_0$. For the large sensitivities $s_i>s_0$ we can show using Lemmas \ref{lem:g-prop} and \ref{lem:gbound} that their contribution can be approximated well when we replace $g(r)$ by a function $G^+(r)$ that equals zero for $r < 0$, and $\frac{r^p}{p}$ for $r\geq 0$, and whose VC dimension is bounded by $O(d)$.
\begin{lem}\label{lem:senssplit}
Let $I_1$ be the index set of all data points with $s_i>s_0:=\frac{\mu S c \ln(p\varepsilon^{-1})}{ \varepsilon  n }$ for some constant $c \in \mathbb{R}_{>0}$.
Then for all $\beta \in \mathbb{R}^d$ it holds that
\begin{align*}
\sum_{i \in I_1}G^+(x_i\beta) &\leq \sum_{i \in I_1}g(x_i \beta) 
\leq (1+\varepsilon)\sum_{i \in I_1}G^+(x_i\beta) ~+~ \varepsilon \cdot \frac{n}{\mu}.
\end{align*}
\end{lem}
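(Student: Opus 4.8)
The plan is to prove the two inequalities separately, relating the per-point loss $g(x_i\beta)$ to the truncated polynomial surrogate $G^+(x_i\beta)$ on the index set $I_1$ of high-sensitivity points. For the lower bound $\sum_{i \in I_1} G^+(x_i\beta) \leq \sum_{i \in I_1} g(x_i\beta)$, I would argue pointwise. Recall $G^+(r) = 0$ for $r < 0$ and $G^+(r) = r^p/p$ for $r \geq 0$. When $x_i\beta < 0$ the left term vanishes and the inequality is immediate since $g \geq 0$ (as $g$ is a negative log of a cdf value in $(0,1)$, hence nonnegative). When $r = x_i\beta \geq 0$, I would integrate the derivative bound from Lemma \ref{lem:g-prop}: since $g'(t) \geq t^{p-1}$ for all $t \geq 0$, integrating from $0$ to $r$ gives $g(r) - g(0) \geq r^p/p = G^+(r)$, and since $g(0) \geq 0$ we obtain $g(r) \geq G^+(r)$ pointwise. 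Summing over $i \in I_1$ closes the first inequality.

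The upper bound is the substantive direction and where the hypothesis $s_i > s_0$ enters. Again I would split each point by the sign of $r = x_i\beta$. For $r \geq 1$, integrating the upper derivative bound $g'(t) \leq t^{p-1} + (p-1)/t$ from $1$ to $r$ yields $g(r) \leq g(1) + r^p/p - 1/p + (p-1)\ln r$, which I want to absorb into $(1+\varepsilon) G^+(r) = (1+\varepsilon) r^p/p$; the logarithmic and constant terms are lower-order relative to $r^p/p$ and can be charged to the slack $\varepsilon \cdot r^p/p$, at least once $r$ is bounded below by a suitable constant. The region $0 \leq r < 1$ and the near-zero behavior must be handled by the additive error term: here $G^+(r)$ is tiny or the gap $g(r) - G^+(r)$ is $O(1)$ per point. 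For $r < 0$, we have $G^+(r) = 0$, so the entire contribution $g(r)$ must be charged to the additive term $\varepsilon \cdot \frac{n}{\mu}$; by the tail bound $g(r) \leq$ (something like) a constant times $e^{-|r|^p/p}$ plus lower-order pieces from Lemma \ref{lem:gbound}, each such contribution is small, and summing over at most $n$ points gives a total controlled by $\varepsilon \cdot n/\mu$.

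The hard part will be the bookkeeping that ties the additive error $\varepsilon \cdot n/\mu$ to the threshold $s_0 = \frac{\mu S c \ln(p\varepsilon^{-1})}{\varepsilon n}$. The key mechanism is that membership in $I_1$ forces $x_i\beta$ to be large in the relevant regime: by the definition of sensitivity, $s_i > s_0$ together with the sensitivity upper bounds means that point $i$ contributes a nonnegligible fraction of the total loss, which via the $\mu$-complexity condition (Definition \ref{def:mu_complex}) and the total sensitivity bound $S$ constrains $r = x_i\beta$ away from the bad small/negative region. I expect the constant $c$ in the definition of $s_0$ to be chosen precisely so that the logarithmic term $(p-1)\ln r$ and the exponential tail contributions are dominated, matching the $\ln(p\varepsilon^{-1})$ factor that appears when one inverts the exponential tail bound $g(r) \geq c_1 e^{-|r|^p/p}$ to translate a lower bound on sensitivity into a lower bound on $|r|$. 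Carefully propagating the factor $n/\mu$ through the per-point charges and verifying that the total additive slack telescopes to exactly $\varepsilon \cdot n/\mu$ is the step most likely to require the detailed estimates deferred to Lemma \ref{lem:gbound}, and I would treat that lemma as the main technical engine underlying the upper bound.
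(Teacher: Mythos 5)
Your lower bound and your identification of Lemma \ref{lem:gbound} as the technical engine are both correct, and your per-point estimate $g(r)\leq(1+\varepsilon)G^+(r)+c_2\ln(p/\varepsilon)$ is exactly what the paper uses. But there is a genuine gap in how you convert the per-point additive error into the total additive term $\varepsilon\cdot\frac{n}{\mu}$, and the mechanism you propose in its place is wrong. The paper's argument is a pure counting step: since the upper bounds $s_i$ sum to at most $S$ and every $i\in I_1$ satisfies $s_i>s_0$, the set $I_1$ has cardinality at most $\frac{S}{s_0}=\frac{\varepsilon n}{c\,\ln(p\varepsilon^{-1})\,\mu}$. Choosing $c=c_2$ and applying Lemma \ref{lem:gbound} pointwise, the total additive error is at most
\begin{align*}
|I_1|\cdot c_2\ln\left(\frac{p}{\varepsilon}\right)\;\leq\;\frac{\varepsilon n}{c_2\ln(p\varepsilon^{-1})\,\mu}\cdot c_2\ln\left(\frac{p}{\varepsilon}\right)\;=\;\varepsilon\cdot\frac{n}{\mu},
\end{align*}
and the proof is finished. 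The threshold $s_0$ enters \emph{only} through this cardinality bound; the $\ln(p\varepsilon^{-1})$ factor in $s_0$ is there precisely so that it cancels against the $c_2\ln(p/\varepsilon)$ additive constant of Lemma \ref{lem:gbound}.

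Your proposal instead claims that membership in $I_1$ ``forces $x_i\beta$ to be large in the relevant regime'' and constrains $r=x_i\beta$ away from the negative/small region. This cannot work: the sensitivity $s_i$ involves a supremum over all $\beta$, so $s_i>s_0$ says nothing about the value of $x_i\beta$ for the \emph{particular} $\beta$ at which the lemma must hold --- and the lemma is a uniform statement over all $\beta\in\mathbb{R}^d$, including those for which every $x_i\beta$ with $i\in I_1$ is negative. Relatedly, your bookkeeping for the $r<0$ case (``summing over at most $n$ points gives a total controlled by $\varepsilon\cdot n/\mu$'') is a non sequitur: $n$ points contributing $O(1)$ each gives $O(n)$, which exceeds $\varepsilon n/\mu$ since $\mu\geq 1$ and $\varepsilon<1$. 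Both issues disappear once you replace the attempted geometric constraint on $x_i\beta$ by the cardinality bound $|I_1|\leq S/s_0$, which is the single missing idea.
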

The additive error can be charged using a lower bound on the $p$-probit loss function.
\begin{lem}\label{lem:f-bound}
Assume $X\in\mathbb{R}^{n\times d}$ is $\mu$-complex. Then we have for any $\beta \in \mathbb{R}^d$ that
\begin{align*}
f(X\beta)=\Omega\left(\frac{n}{\mu}\left(1+\ln(\mu)\right)\right).
\end{align*}
\end{lem}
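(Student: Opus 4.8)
The plan is to lower bound $f(X\beta)=\sum_{i=1}^n g(x_i\beta)$ by splitting the indices according to the sign of $x_i\beta$ and treating the non-negative and the negative points with the complementary estimates from Lemma \ref{lem:g-prop}. Since $g$ is strictly increasing with $g(0)=-\ln\Phi_p(0)=\ln 2$ (as $\Phi_p(0)=1/2$ by symmetry), every point with $x_i\beta\geq 0$ contributes at least $\ln 2$. Writing $P=\{i:x_i\beta>0\}$, $N=\{i:x_i\beta<0\}$ and $Z=\{i:x_i\beta=0\}$, if $|P|+|Z|\geq n/2$ we immediately get $f(X\beta)\geq \frac{n}{2}\ln 2=\Omega(n)$, which already suffices: because $\frac{1+\ln\mu}{\mu}\leq 1$ for all $\mu\geq 1$, we have $\frac{n}{\mu}(1+\ln\mu)=O(n)$. (The degenerate case $X\beta=0$ gives $f=n\ln 2$ directly.) Hence the only interesting regime is $m:=|N|>n/2$.

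In that regime I would combine three ingredients. First, integrating $g'(r)\geq r^{p-1}$ from $g(0)=\ln 2$ gives $g(r)\geq \ln 2+r^p/p$ for $r\geq 0$, so the positive points cost at least $\frac{1}{p}\sum_{i\in P}|x_i\beta|^p=\frac{M_P}{p}$, where $M_P:=\sum_{i\in P}|x_i\beta|^p$ and $M_N:=\sum_{i\in N}|x_i\beta|^p$. Second, applying $\mu$-complexity to $-\beta$ swaps the roles of $P$ and $N$ and yields $M_N/M_P\leq\mu$, i.e. $M_P\geq M_N/\mu$; this also rules out $M_P=0$, so the estimate is meaningful. Third, for the negative points the tail bound $g(r)\geq c_1 e^{-|r|^p/p}$ together with the convexity of $t\mapsto e^{-t/p}$ and Jensen's inequality gives $\sum_{i\in N}g(x_i\beta)\geq c_1\sum_{i\in N}e^{-|x_i\beta|^p/p}\geq c_1 m\,e^{-M_N/(pm)}$. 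Setting $a:=M_N/(pm)$, these combine to
\[
f(X\beta)\;\geq\; c_1 m\,e^{-a}+\frac{M_N}{p\mu}\;=\;m\left(c_1 e^{-a}+\frac{a}{\mu}\right).
\]

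Finally I would minimize the univariate function $a\mapsto c_1 e^{-a}+a/\mu$ over $a\geq 0$. Its unconstrained minimizer is $a^{*}=\ln(c_1\mu)$, with value $\frac{1+\ln(c_1\mu)}{\mu}$, which is of order $\frac{1+\ln\mu}{\mu}$ once $\mu$ exceeds an absolute constant; when $c_1\mu<1$ the minimum is attained at $a=0$ with value $c_1$, a positive constant, so that $f(X\beta)\geq m\,c_1=\Omega(n)$. Using $m>n/2$ in either case yields $f(X\beta)=\Omega\!\left(\frac{n}{\mu}(1+\ln\mu)\right)$, as claimed.

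I expect the main obstacle to be organizing the case distinction cleanly so that the counting bound ($\Omega(n)$ from the non-negative points) and the refined bound (from the Jensen and $\mu$-complexity estimates on the negative points) glue together into the single expression $\frac{n}{\mu}(1+\ln\mu)$ across the whole range $\mu\geq 1$. In particular, one must check that the constant $c_1$ from Lemma \ref{lem:g-prop} does not spoil the logarithmic factor for small $\mu$, which is precisely why the two sub-regimes of the minimization ($c_1\mu\geq 1$ versus $c_1\mu<1$) are handled separately before being absorbed into the $\Omega(\cdot)$.
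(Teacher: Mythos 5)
Your proof is correct and follows essentially the same strategy as the paper's: lower bound the positive part by $\sum G^+(x_i\beta)$ via $g'(r)\geq r^{p-1}$, lower bound the negative part by $c_1e^{-|r|^p/p}$ plus Jensen's inequality, tie the two masses together with $\mu$-complexity, and finish by minimizing the univariate function $c_1e^{-a}+a/\mu$. The organizational differences are minor but worth noting. The paper avoids your majority-sign case split by bounding \emph{every} point's contribution below by $c_1e^{-|z_i^-|^p/p}$ (where $z_i^-=z_i$ for negative points and $0$ otherwise, using $g(z_i)\geq g(0)\geq c_1$ for non-negative points) and then applying Jensen over all $n$ points at once, which is slightly slicker. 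On the other hand, your explicit treatment of the regime $c_1\mu<1$ (minimum attained at $a=0$, giving $\Omega(n)$, which dominates since $(1+\ln\mu)/\mu\leq 1$ for $\mu\geq 1$) is genuinely more careful than the paper: the paper's final bound $\frac{n}{\mu}(1+\ln(c_1\mu))$ can be negative, hence vacuous, when $c_1\mu<e^{-1}$, and your sub-case is exactly what is needed to close that gap and to absorb the constant $c_1$ inside the logarithm into the $\Omega(\cdot)$.
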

For the remaining small sensitivities, they are bounded between narrow thresholds $\frac{S}{n}\leq s_i \leq s_0$ and by the previous rounding to the next power of $2$ we can argue that there are only $|W|=O(\log\frac{\mu}{\varepsilon})$ different weights. Overall we can bound the VC dimension by $O(d\log\frac{\mu}{\varepsilon})$ using the following argument.

By the technical Corollary \ref{cor:senssplit} in the appendix, our goal of obtaining a coreset for $f$ reduces to obtaining a coreset for the substitute function $$\tilde{f}(X\beta)=\sum_{i\in [n] \setminus I_1}g(x_i\beta) + \sum_{i\in I_1}G^+(x_i\beta).$$
To this end we set $\mathcal{F}_1=\{ w_i G^+_i ~|~ i \in I_1 \}$ where $G^+_i(\beta)=G^+(x_i\beta)$ and $\mathcal{F}_2=\{ w_i g_i ~|~ i \in I_2:=[n]\setminus I_1  \}$ where $g_i(\beta)=g(x_i\beta)$.
Further we set $\mathcal{F}=\mathcal{F}_1 \cup \mathcal{F}_2$ and show that the VC dimension of $\mathcal{F}$ can be bounded as desired:

\begin{lem}\label{lem:VCdim}
For the VC dimension $\Delta$ of $ \mathfrak{R}_{\mathcal{F}}$ we have
\begin{align*}
    \Delta &\leq (d+1) \left(\log_2\left( \mu  c \varepsilon^{-2}\right)+2 \right) ={O}(d\log({\mu}/{\varepsilon})).
\end{align*}
\end{lem}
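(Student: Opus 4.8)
The plan is to bound the VC dimension of the range space $\mathfrak{R}_{\mathcal{F}}$ whose ground set is the index set of $\mathcal{F}=\mathcal{F}_1\cup\mathcal{F}_2$ and whose ranges are the super-level sets $R_{\beta,r}=\{f\in\mathcal{F} : f(\beta)\geq r\}$ parameterized by $(\beta,r)\in\mathbb{R}^d\times\mathbb{R}_{\geq0}$. Everything reduces to the classical fact that affine separators (halfspaces) in $\mathbb{R}^d$ have VC dimension $d+1$ \citep{KearnsV94}, after which we pay a multiplicative factor equal to the number of distinct weights occurring among the small-sensitivity points. I would treat $\mathcal{F}_1$ and $\mathcal{F}_2$ separately and then combine them through a single shattering argument.

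First I would handle $\mathcal{F}_1$. Since $G^+$ is positively homogeneous of degree $p$ (namely $G^+(r)=r^p/p$ for $r\geq0$ and $0$ otherwise), the weight can be absorbed into the data point: for $w_i>0$ we have $w_i G^+(x_i\beta)=G^+(w_i^{1/p}x_i\beta)$. Writing $\tilde x_i=w_i^{1/p}x_i$, the range condition $w_i G^+(x_i\beta)\geq r$ becomes, for $r>0$, the affine halfspace $\tilde x_i\beta\geq (pr)^{1/p}$ (which already forces $\tilde x_i\beta\geq0$), and is vacuous for $r\leq0$. Hence the restriction of $\mathfrak{R}_{\mathcal{F}}$ to $I_1$ is contained in the halfspace range space on the rescaled points $\tilde x_i$, whose VC dimension is at most $d+1$, no matter how many distinct weights appear in $\mathcal{F}_1$.

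Next I would handle $\mathcal{F}_2$. Here $g$ is strictly increasing by Lemma \ref{lem:g-prop}, hence invertible on its range $(0,\infty)$, so $w_i g(x_i\beta)\geq r$ is equivalent to $x_i\beta\geq g^{-1}(r/w_i)$. Unlike $G^+$, the function $g$ is not homogeneous, so the weight cannot be folded into the point; instead I partition $I_2$ into the $|W|$ classes of points that share a common (power-of-two rounded) weight. On a single weight class the threshold $g^{-1}(r/w)$ is one real number that sweeps all of $\mathbb{R}$ as $r$ ranges over $(0,\infty)$ (with $r\leq0$ again giving the full set), so the restriction of $\mathfrak{R}_{\mathcal{F}}$ to that class is contained in the affine halfspace range space of VC dimension at most $d+1$. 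The crucial counting step is the standard fact that any subset of a shattered set is again shattered; consequently, if a set $S$ is shattered by $\mathfrak{R}_{\mathcal{F}}$, then each of its intersections with a fixed part of the ground set is shattered by the ranges restricted to that part. Splitting a shattered $S=S_1\sqcup S_2$ with $S_1\subseteq I_1$, and then splitting $S_2$ further into its $|W|$ weight classes, every one of these $|W|+1$ pieces is shattered by halfspaces and therefore has size at most $d+1$. Summing gives $|S|\leq (d+1)(|W|+1)$, that is $\Delta\leq (d+1)(|W|+1)$.

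Finally I would count the distinct weights. The points of $I_2$ have sensitivities confined to $[S/n,\,s_0]$ with $s_0=\frac{\mu S c\ln(p\varepsilon^{-1})}{\varepsilon n}$, and since the weights $w_i=\frac{1}{kp_i}$ are in one-to-one correspondence with the power-of-two rounded sensitivities, the number of distinct weights is at most $\log_2\!\bigl(\tfrac{s_0}{S/n}\bigr)+O(1)=\log_2\!\bigl(\tfrac{\mu c\ln(p\varepsilon^{-1})}{\varepsilon}\bigr)+O(1)$, which is $\log_2(\mu c\varepsilon^{-2})+O(1)$ after using $\ln(p\varepsilon^{-1})=O(\varepsilon^{-1})$ and adjusting the constant. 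Substituting $|W|\leq \log_2(\mu c\varepsilon^{-2})+1$ yields $\Delta\leq (d+1)\bigl(\log_2(\mu c\varepsilon^{-2})+2\bigr)=O(d\log(\mu/\varepsilon))$. I expect the main obstacle to be making the two reductions to affine separators fully rigorous: verifying that restricting the super-level ranges to $I_1$ via the homogeneity substitution, and to a single weight class of $I_2$ via invertibility of $g$, genuinely produces only halfspace traces, and correctly treating the degenerate $r\leq0$ ranges. Once those containments are established, the shattering/partition count and the weight-counting are routine.
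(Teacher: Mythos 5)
Your proposal is correct and follows essentially the same route as the paper's proof: both reduce $\mathcal{F}_1$ and each power-of-two weight class of $\mathcal{F}_2$ to affine separators of VC dimension $d+1$ (using positivity of $G^+$ to handle $r\leq 0$ and invertibility of $g$ for the weight classes), combine the $|W|+1$ disjoint parts (you by direct summation over a shattered set, the paper by pigeonhole and contradiction, which are equivalent here since super-level ranges restrict pointwise to each part), and bound $|W|\leq \log_2\left(\mu c\varepsilon^{-2}\right)+O(1)$. Your explicit absorption of the weight via homogeneity, $w_i G^+(x_i\beta)=G^+\left(w_i^{1/p}x_i\beta\right)$, is in fact a cleaner justification of the step the paper performs implicitly when it passes from the per-point thresholds $\left(pr/w_i\right)^{1/p}$ to a single uniform threshold $s$.
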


\paragraph*{Bounding the Sensitivities}
Using the analytic bounds and the assumption of $\mu$-complex data we can relate the sensitivities for the negative domain to roughly $\frac{\mu}{n}$, and for the positive part to the $\ell_p$ leverage scores, defined by $u_i = \sup_{\beta\neq 0} \frac{|x_i\beta|^p}{\|X\beta\|^p_p}$, cf. \citep{DasguptaDHKM09}. This gives us the following lemma. 
\begin{lem}\label{lem:sensbound}
There is a constant $c_s$ such that the sensitivity $\zeta_i$ of $x_i, i\in[n]$ for $\tilde{f}$ is bounded by
\begin{align*}
\zeta_i \leq c_s \mu\left(\frac{1}{n}+u_i\right)
\end{align*}
\end{lem}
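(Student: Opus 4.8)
The plan is to bound the sensitivity $\zeta_i = \sup_{\beta} \frac{h_i(x_i\beta)}{\tilde f(X\beta)}$, where $h_i = G^+$ for $i \in I_1$ and $h_i = g$ for $i \in I_2:=[n]\setminus I_1$, by splitting the supremum according to the sign of $x_i\beta$. The two summands $\frac1n$ and $u_i$ in the target bound will arise respectively from the region $x_i\beta < 0$, where the per-point loss is bounded by an absolute constant, and from the region $x_i\beta \geq 0$, where the per-point loss grows like $|x_i\beta|^p$ and can be charged against the $\ell_p$ leverage score. The common ingredient for both regimes is a uniform lower bound $\tilde f(X\beta) = \Omega(n/\mu)$. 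First I would derive this: combining Lemma~\ref{lem:senssplit} with the fact that every loss term is nonnegative gives $\tilde f(X\beta) \geq \frac{1}{1+\varepsilon} f(X\beta) - \varepsilon \frac n\mu$, and then Lemma~\ref{lem:f-bound} yields $\tilde f(X\beta) = \Omega(n/\mu)$ for $\varepsilon$ small enough.

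For the positive regime I would first observe that $\tilde f$ dominates the $\ell_p$ mass of the positive coordinates of $X\beta$. Indeed, for $i \in I_2$ with $x_i\beta \geq 0$ integrating the bound $g'(r) \geq r^{p-1}$ of Lemma~\ref{lem:g-prop} from $0$ gives $g(x_i\beta) \geq \frac{(x_i\beta)^p}{p}$, while for $i \in I_1$ we have $G^+(x_i\beta) = \frac{(x_i\beta)^p}{p}$ exactly; hence $\tilde f(X\beta) \geq \frac1p \sum_{x_j\beta > 0} |x_j\beta|^p$. The crucial step is then to relate $\|X\beta\|_p^p$ to this positive mass using $\mu$-complexity. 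Since Definition~\ref{def:mu_complex} takes a supremum over all $\beta$, applying it to both $\beta$ and $-\beta$ shows the negative mass is also at most $\mu$ times the positive mass, so $\|X\beta\|_p^p \leq (1+\mu)\sum_{x_j\beta>0}|x_j\beta|^p \leq (1+\mu)p\, \tilde f(X\beta)$. Combining with the definition of $u_i$ gives $\frac{|x_i\beta|^p}{\tilde f(X\beta)} \leq (1+\mu)p\, u_i$, which already handles $i \in I_1$ (where the numerator is exactly $|x_i\beta|^p/p$) and yields the $\mu u_i$ term.

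For $i \in I_2$ in the positive regime I still need an upper bound on $g$; integrating $g'(r) \leq r^{p-1} + \frac{p-1}{r}$ from Lemma~\ref{lem:g-prop} for $r\ge 1$ and using the boundedness of $g$ on $[0,1]$ gives $g(r) \leq C(|r|^p + 1)$ for a constant $C = C(p)$, where the logarithmic term from the integration is absorbed into $|r|^p$. The $|r|^p$ part is charged to $u_i$ as above, while the additive constant contributes $\frac{C}{\tilde f(X\beta)} = O(\mu/n)$. For the negative regime, $G^+ \equiv 0$ on $I_1$ so there is nothing to bound, and for $i \in I_2$ monotonicity of $g$ gives $g(x_i\beta) \leq g(0) = \ln 2$, so $\frac{g(x_i\beta)}{\tilde f(X\beta)} = O(\mu/n)$. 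Taking the supremum over both regimes and collecting the contributions yields $\zeta_i \leq c_s\mu\!\left(\frac1n + u_i\right)$ for a suitable constant $c_s$.

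The main obstacle I anticipate is the careful bookkeeping needed to make the lower bound on $\tilde f$ interact correctly with the $\mu$-complexity inequality: the leverage score $u_i$ is defined with $\|X\beta\|_p^p$ in the denominator, whereas $\tilde f$ naturally controls only the positive $\ell_p$ mass, so the bidirectional use of Definition~\ref{def:mu_complex} to recover the full $\ell_p$ norm (and the fact that this costs exactly a factor $(1+\mu)$) is the linchpin of the argument. A secondary subtlety is ensuring the additive $+1$ in the upper bound on $g$ and the constant-loss contribution on the negative side are both genuinely $O(\mu/n)$, which is precisely where the uniform lower bound $\tilde f(X\beta) = \Omega(n/\mu)$ is used.
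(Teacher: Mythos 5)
Your proposal is correct and follows essentially the same route as the paper's proof: both establish the uniform lower bound $\tilde{f}(X\beta)=\Omega(n/\mu)$ from Lemmas \ref{lem:senssplit} and \ref{lem:f-bound}, bound $g(r)$ by a constant on the small/negative regime (charged against that lower bound to give the $\mu/n$ term) and by $O(r^p)$ on the large regime, and then use the bidirectional application of Definition \ref{def:mu_complex} to pass from the positive $\ell_p$ mass dominated by $\tilde{f}$ to the full $\|X\beta\|_p^p$ at a cost of $(1+\mu)$, yielding the $\mu u_i$ term. The only differences are cosmetic (your case split is at $0$ with a combined bound $g(r)\leq C(r^p+1)$, whereas the paper splits at $r=1$), so the arguments are interchangeable.
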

Summing over the combined upper bound yields a total sensitivity of $S\leq\sum_{i\in [n]} c_s\mu (1/n + u_i) = O(\mu \sum_{i\in [n]} u_i)$, which dominates the size of the coreset; see Theorem \ref{thm:mainoverview}. The sum of leverage scores can further be bounded by roughly $d^{O(p)}$ by means of $\ell_p$-well-conditioned bases for the column space of $X$.

\paragraph*{Well-conditioned Bases via Sketching Techniques}
One can approximate the $\ell_p$ leverage scores using an orthonormal basis for the column space of $X$. Unfortunately this gives only an $n^c$-approximation for $p\neq 2$. We thus work with a generalization to so called \emph{well-conditioned bases}.
An $(\alpha,\beta,p)$-well-conditioned basis $V$ is a basis that preserves the norm of each vector well, as detailed in the following definition.

\begin{mydef}[\citet{DasguptaDHKM09}]
\label{def:good_basis}
Let $X$ be an $n\times m$ matrix of rank $d$, let $p \in [1,\infty)$, and let $q$ be its dual norm, i.e., $q \in (1, \infty]$ satisfying $\frac{1}{p}+\frac{1}{q}=1$.
Then an $n \times d$ matrix $V$ is an \emph{$(\alpha,\beta,p)$-well-conditioned basis} for the column space of $X$ if\\
(1) $\Vert V \Vert_p:=\left( \sum_{i \leq n, j \leq d}|V_{ij}|^p\right)^{1/p}\leq \alpha$, and\\
(2) for all $z\in\mathbb{R}^d$, $\Vert z \Vert_q \leq \beta \Vert V z\Vert_p$.

We say that $V$ is a \emph{$p$-well-conditioned basis} for the column
space of $X$ if $\alpha$ and $\beta$ are $d^{O(1)}$,
independent of $m$ and $n$.
\end{mydef}
We can bound the leverage scores in terms of the row-wise $p$-norms of such a basis.% such that $\sum u_i\leq \beta^p \|q_i\|_p^p=\beta^p \|Q\|_p^p = (\alpha\beta)^p = d^{O(p)}$.
\begin{lem}\label{lem:levscorebound}
Let $V$ be an \emph{$(\alpha,\beta,p)$-well-conditioned basis} for the column space of $X$.
Then it holds for all $i\in[n]$ that $u_i \leq \beta^p \Vert v_i\Vert_p^p$. % and thus $$\sum_{i=1}^{n} u_i\leq \beta^p \|q_i\|_p^p=\beta^p \|Q\|_p^p = (\alpha\beta)^p.$$
As a direct consequence we have $\sum_{i=1}^{n} u_i\leq \beta^p \|V\|_p^p \leq (\alpha\beta)^p = d^{O(p)}$.
\end{lem}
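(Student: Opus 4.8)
The plan is to reduce the supremum defining the leverage score $u_i$ from the original matrix $X$ to the well-conditioned basis $V$, and then to apply Hölder's inequality together with the two defining properties of $V$. To avoid a clash with the basis parameter $\beta$, I will write the leverage-score supremum with a dummy vector $z$, i.e. $u_i=\sup_{z\neq 0}\,|x_i z|^p/\|Xz\|_p^p$. The first step is the only genuinely conceptual one: since $V$ is a basis for the column space of $X$, the two subsets $\{Xz : z\}$ and $\{Vy : y\}$ coincide as linear subspaces of $\REAL^n$. Hence for every $z$ there is a $y$ with $Xz=Vy$ (and conversely), and in particular the $i$-th coordinate and the full $p$-norm of the vector agree, $x_i z=v_i y$ and $\|Xz\|_p=\|Vy\|_p$. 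Therefore the supremum is unchanged under this substitution and we may write $u_i=\sup_{y\neq 0}\,|v_i y|^p/\|Vy\|_p^p$.

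Next I would bound the two factors of the ratio separately. For the numerator, Hölder's inequality in the dual pair $(p,q)$ with $\tfrac1p+\tfrac1q=1$ gives $|v_i y|\leq \normp{v_i}\,\normq{y}$, so $|v_i y|^p\leq \normp{v_i}^p\,\normq{y}^p$. For the denominator I would invoke property (2) of Definition \ref{def:good_basis}, namely $\normq{y}\leq\beta\,\|Vy\|_p$, which rearranges to $\|Vy\|_p^p\geq \normq{y}^p/\beta^p$. Combining the two bounds, the ratio satisfies $|v_i y|^p/\|Vy\|_p^p\leq \beta^p\,\normp{v_i}^p$, and since this bound is independent of $y$, taking the supremum over $y$ yields $u_i\leq \beta^p\,\normp{v_i}^p$, which is the first claim.

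For the consequence, I would simply sum the per-point bound over $i$ and recognize the row $p$-norms as the entries of $\|V\|_p$: using property (1) of Definition \ref{def:good_basis},
\begin{align*}
\sum_{i=1}^{n} u_i \;\leq\; \beta^p\sum_{i=1}^{n}\normp{v_i}^p \;=\; \beta^p\sum_{i\leq n,\,j\leq d}|V_{ij}|^p \;=\; \beta^p\,\|V\|_p^p \;\leq\; (\alpha\beta)^p.
\end{align*}
Finally, since $\alpha$ and $\beta$ are $d^{O(1)}$ by the definition of a $p$-well-conditioned basis, this equals $d^{O(p)}$, giving the stated total-leverage-score bound.

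I do not expect a serious obstacle here: the lemma is essentially a Hölder estimate dressed in the language of well-conditioned bases. The only point requiring care is the range identity $\range(X)=\range(V)$ that legitimizes rewriting the supremum over $V$ instead of $X$, together with applying Hölder and property (2) in the correct directions so that both the numerator upper bound and the denominator lower bound point the same way; the remaining algebra is routine.
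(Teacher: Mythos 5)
Your proposal is correct and follows essentially the same route as the paper's proof: rewrite the supremum over the column space of $X$ as one over the column space of $V$, bound the numerator by Hölder's inequality and the denominator from below via property (2) of Definition \ref{def:good_basis}, then sum the row norms using property (1). The only cosmetic difference is that you bound the ratio uniformly in $y$ and then take the supremum, whereas the paper assumes the supremum is attained (justified earlier by a compactness argument); your variant sidesteps that step but is otherwise the same argument.
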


A prominent example of a well-conditioned basis is the aforementioned orthonormal basis for $\ell_2$, which can be obtained by QR-decomposition (or SVD) in $O(nd^2)$. Such a basis $Q$ is $(\sqrt{d},1,2)$-well-conditioned, since $\|Q\|_F = \sqrt{d}$ and $\|Qz\|_2 = \|z\|_2$ due to rotational invariance of the $\ell_2$-norm. For general $p$ there exist so called Auerbach bases \citep{Auerbach1930} with $\alpha=d$ and $\beta=1$, and approximations thereof can be computed in time $O(nd^5\log n)$ via L\"owner–John ellipsoids \citep{Clarkson05,DasguptaDHKM09}.

To get around this computational bottleneck, we first apply a sketching matrix to our data that yields a so called subspace embedding for approximating the $\ell_p$ norms of all vectors in the column space of $X$ with a distortion of roughly $O((d\log d)^{1/p})$. To this end we apply an embedding into $\ell_\infty$ using $1/p$-powers of inverse exponential random variables \citep{Andoni17,WoodruffZ13} followed by a dimensionality reduction via sparse random embeddings \citep{ClarksonW17,Cohen16} to obtain a small sketch $\Pi X$.
\begin{lem}[\citet{WoodruffZ13,ClarksonW17}] \label{lem:wcb}
There exists a random embedding matrix $\Pi\in\mathbb{R}^{n'\times n}$ and $\gamma = {O}(d \log(d))$ such that
\begin{align*}%\label{eq:ranmatrix}
\forall \beta \in \mathbb{R}^d: ~ \frac{1}{\gamma^{1/p}} \Vert X \beta \Vert_p
\leq  \Vert \Pi X \beta \Vert_q \leq \gamma^{1/p}\Vert X \beta \Vert_p
\end{align*}
holds with constant probability, where $q=2, n'=O(d^2)$ if $p\in[1,2]$ and $q=\infty, n'=O(n^{1-\frac{2}{p}}\log n(d \log d)^{1+\frac{2}{p}} + d^{5+4p})$ if $p\in(2,\infty)$. For $p=2$ we have $\gamma = 2$. Further $\Pi X$ can be computed in ${O}(\nnz(X))$ time.
\end{lem}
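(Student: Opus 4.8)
The plan is to realize $\Pi$ as a composition $\Pi = S D$ of two independent sketches and to analyze them one after the other. The inner matrix $D = \operatorname{diag}(E_1^{-1/p},\dots,E_n^{-1/p})$ rescales each row of $X$ by the inverse $1/p$-th power of an i.i.d.\ standard exponential random variable, and the outer matrix $S$ is a sparse random embedding (a CountSketch/OSNAP-type matrix from \citet{ClarksonW17,Cohen16}) whose sole purpose is to reduce the number of rows from $n$ to $n'$ while approximately preserving the relevant norm on the $d$-dimensional column space of $DX$. The key analytic device for the inner sketch is the \emph{max-stability} of exponentials: for nonnegative $a_1,\dots,a_n$ one has $\max_i a_i/E_i \sim (\sum_i a_i)/E$ for a single standard exponential $E$. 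Applying this with $a_i = |x_i\beta|^p$ yields, for every fixed $\beta$,
\[ \Vert D X \beta\Vert_\infty^p \;=\; \max_{i}\frac{|x_i\beta|^p}{E_i} \;\sim\; \frac{\Vert X\beta\Vert_p^p}{E}, \]
so that, because a single exponential lies in a constant-factor window with constant probability, $\Vert DX\beta\Vert_\infty = \Theta(\Vert X\beta\Vert_p)$ holds pointwise.

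To upgrade this pointwise statement to the uniform guarantee over all $\beta$ claimed in the lemma, I would place a net on the $\ell_p$-unit sphere of the column space of $X$ (of size $e^{O(d)}$) and union-bound the two tail events. The lower bound $\Vert DX\beta\Vert_\infty \ge \gamma^{-1/p}\Vert X\beta\Vert_p$ is the easy direction, since it fails only if \emph{every} coordinate is simultaneously shrunk, an event governed by the distribution of a single $E$. The upper bound is the delicate one: a single unusually small $E_i$ can inflate one coordinate, so one must control $\max_i |x_i\beta|^p/E_i$ uniformly over the subspace; absorbing the $O(d)$ loss from the net together with the concentration of the maximum over $n$ coordinates is exactly what makes the distortion work out to $\gamma = O(d\log d)$. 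This establishes that $D$ is an $\ell_p \to \ell_\infty$ subspace embedding, and is the ingredient I would draw from \citet{Andoni17,WoodruffZ13}.

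The second stage then splits by regime. For $p \in (2,\infty)$ the target is $q=\infty$, so I would keep the $\ell_\infty$ geometry and use $S$ only to compress rows while preserving the maximum over the subspace up to constants; since $\ell_\infty$ is essentially incompressible, this forces the stated row count $n' = O(n^{1-2/p}\log n\,(d\log d)^{1+2/p} + d^{5+4p})$, whose $n^{1-2/p}$ factor matches the known lower bounds for low-distortion $\ell_p$ embeddings. For $p \in [1,2]$ the target is $q=2$: here the weights $E_i^{-2/p}$ have tail index $p/2 \le 1$, so the $\ell_2$ mass of $DX\beta$ concentrates on its few largest coordinates and $\Vert DX\beta\Vert_2 = \Theta(\Vert DX\beta\Vert_\infty)$ on the subspace; an ordinary sparse $\ell_2$ subspace embedding of dimension $n'=O(d^2)$ applied to $DX$ then reproduces $\Vert X\beta\Vert_p$ up to the same distortion. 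The special case $p=2$ needs no reweighting and is just a sparse $\ell_2$ subspace embedding, giving $q=2$ and $\gamma=2$ directly. In every case $D$ is diagonal and $S$ runs in input-sparsity time, which yields the claimed $O(\nnz(X))$ runtime.

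The step I expect to be the main obstacle is the uniform upper-tail control in the second paragraph: unlike the $\ell_2$ case, the image $DX\beta$ is dominated by a single heavy-tailed coordinate, so standard Johnson--Lindenstrauss/chaining arguments do not apply and one must reason directly about the order statistics of the $E_i$ across the entire subspace. A secondary difficulty is reconciling the $\ell_\infty$ output of the first stage with the $\ell_2$ target when $p\le 2$, i.e.\ verifying that the reweighted subspace is genuinely ``spiky'' enough for $\ell_2$ to track $\ell_p$; I would handle this by showing that, with constant probability over $D$, every $DX\beta$ in the subspace satisfies $\Vert DX\beta\Vert_2 = \Theta(\Vert DX\beta\Vert_\infty)$ before invoking the sparse $\ell_2$ embedding.
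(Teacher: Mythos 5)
Your construction is exactly the one the paper describes in its appendix ($\Pi=\Psi D E$: a diagonal rescaling by inverse $1/p$-th powers of i.i.d.\ exponentials composed with a CountSketch-type sparse embedding, the exponential part dropped when $p=2$), and your analysis plan --- max-stability of exponentials, a net/union-bound over the subspace, the $\ell_2$-versus-$\ell_\infty$ comparison of the rescaled vectors for $p\in[1,2]$, and bucket-noise control explaining the $n^{1-2/p}$ rows for $p>2$ --- is precisely the route of the works the lemma is imported from \citep{WoodruffZ13,ClarksonW17}. Note that the paper itself gives no proof of this lemma (it states the construction and cites those sources), so your sketch is consistent with, and essentially reconstructs, the cited argument.
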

For $p\in [1,2]$, we can obtain $n'=O(d\log d)$ in exchange for an increased running time ${O}(\nnz(X)\log d)$ by using the $\ell_2$ subspace embedding of \citet{Cohen16} as a replacement for \citep{ClarksonW17}. We further note that $\Omega(n^{1-\frac{2}{p}}\log n)$ is necessary for $p>2$ due to tight lower bounds for sketching the $p$th frequency moments \citep{AndoniNPW13}.

Now we decompose the sketch $\Pi X = QR$, which is fast due to its reduced size and argue that $V = XR^{-1}$ is $p$-well-conditioned, leading to the desired $d^{O(p)}$ bound.
\begin{lem}\label{lem:wellcondition}
If $\Pi$ satisfies Lemma \ref{lem:wcb} and $\Pi X=QR$ is the QR-decomposition of $\Pi X$ then $V=XR^{-1}$ is an \emph{$(\alpha,\beta,p)$-well-conditioned basis} for the columnspace of $X$, where for $\gamma = {O}(d \log(d))$ we have
\begin{align*}
(\alpha, \beta)= \begin{cases}
(\sqrt{2 d},\sqrt{2}), & \text{for }p=2\\
(d\gamma^{1/p},\gamma^{1/p}), & \text{for }p \in [1, 2)\\
(d\gamma^{1/p},d \gamma^{1/p}), & \text{for }p \in (2, \infty).
\end{cases}
\end{align*}

\end{lem}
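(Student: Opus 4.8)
The plan is to verify the two conditions of Definition \ref{def:good_basis} for $V = XR^{-1}$, transporting the $\ell_2$-orthonormality of the QR-factor $Q$ through the subspace embedding of Lemma \ref{lem:wcb}. Throughout, write $q$ for the dual exponent of $p$ (so $\tfrac1p+\tfrac1q=1$) and $s\in\{2,\infty\}$ for the exponent of the sketch norm in Lemma \ref{lem:wcb}, i.e.\ $s=2$ for $p\in[1,2]$ and $s=\infty$ for $p\in(2,\infty)$. Since $\Pi$ is a subspace embedding and $X$ has rank $d$, the sketch $\Pi X$ has full column rank, hence $R$ is invertible and $V$ spans the column space of $X$; so $V$ is genuinely a basis. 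The identity that drives everything is
\[ \Pi V \;=\; \Pi X R^{-1} \;=\; Q R R^{-1} \;=\; Q, \]
which lets me read sketch-space norms of $Vz$ directly as $\ell_s$-norms of $Qz$.

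First I would settle condition (1), the bound $\|V\|_p\le\alpha$, uniformly in $p$. Writing $\|V\|_p^p=\sum_{j=1}^d\|Ve_j\|_p^p$ and applying the left inequality of Lemma \ref{lem:wcb}, in the form $\|X\beta\|_p\le\gamma^{1/p}\|\Pi X\beta\|_s$, to the column $Ve_j=X(R^{-1}e_j)$ gives $\|Ve_j\|_p\le\gamma^{1/p}\|Qe_j\|_s$. Because $Q$ has $\ell_2$-orthonormal columns we have $\|Qe_j\|_s\le\|Qe_j\|_2=1$ for both $s=2$ and $s=\infty$, so $\|Ve_j\|_p\le\gamma^{1/p}$ and therefore $\|V\|_p\le(d\gamma)^{1/p}\le d\,\gamma^{1/p}$. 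For $p=2$ the same computation with $\gamma=2$ yields the sharper $\|V\|_2\le\sqrt{2d}$, which recovers the stated $\alpha$ in all three regimes.

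Next I would turn to condition (2), $\|z\|_q\le\beta\|Vz\|_p$ for all $z\in\mathbb{R}^d$. The right inequality of Lemma \ref{lem:wcb}, in the form $\|X\beta\|_p\ge\gamma^{-1/p}\|\Pi X\beta\|_s$, gives $\|Vz\|_p\ge\gamma^{-1/p}\|Qz\|_s$, so the task reduces to bounding the sketch norm $\|Qz\|_s$ below by $\|z\|_q$. For $p\in[1,2]$ this is immediate: here $s=2$ and $q\ge2$, orthonormality gives $\|Qz\|_2=\|z\|_2$, and $\|z\|_q\le\|z\|_2$ since $q\ge2$; hence $\|z\|_q\le\gamma^{1/p}\|Vz\|_p$, giving $\beta=\gamma^{1/p}$ (and $\beta=\sqrt2$ at $p=2$).

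The regime $p\in(2,\infty)$ is where I expect the real difficulty. Now $s=\infty$ and $q\in(1,2)$, and to land on $\beta=d\,\gamma^{1/p}$ one must establish $\|Qz\|_\infty\ge d^{-1}\|z\|_q$. The obstacle is that $\ell_2$-orthonormality alone only yields $\|Qz\|_\infty\ge\|Qz\|_2/\sqrt{n'}=\|z\|_2/\sqrt{n'}$, whose dependence on the large, $n$-dependent sketch height $n'$ is fatal, since $\beta$ must be $d^{O(1)}$; indeed a generic $\ell_2$-orthonormal $Q$ with $n'\gg\poly(d)$ rows admits no better $\ell_\infty$ lower bound. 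Overcoming this is the crux: one has to spend dimension factors only in the $d$-dimensional coefficient space, where $\|z\|_q\le d^{1/q-1/2}\|z\|_2\le\sqrt d\,\|z\|_2$ costs at most $\sqrt d$, and obtain the sketch-space comparison from the two-sided $\ell_\infty$ guarantee of Lemma \ref{lem:wcb} itself rather than from generic norm inequalities. Concretely, I would take the factor $R$ so that $Q=\Pi V$ is well-conditioned for $\ell_\infty$ (as provided by \citet{WoodruffZ13}) rather than merely $\ell_2$-orthonormal, so that $\|Qz\|_\infty\gtrsim\|z\|_q/d$ holds with an $n'$-free constant; reconciling this with the QR step while keeping $\beta$ independent of $n'$ is the single delicate point of the lemma, after which the three stated pairs $(\alpha,\beta)$ follow by assembling the column bound of the previous step with these coefficient-space conversions.
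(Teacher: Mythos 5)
Your handling of condition (1) for every $p$, and of condition (2) for $p\in[1,2]$, is correct and is essentially the paper's own argument: apply Lemma \ref{lem:wcb} column-by-column to $Ve_j=X(R^{-1}e_j)$, use $\|Qe_j\|_s\le\|Qe_j\|_2=1$, and for $p\le 2$ chain $\|z\|_q\le\|z\|_2=\|Qz\|_2=\|\Pi Vz\|_2\le\gamma^{1/p}\|Vz\|_p$; your bookkeeping $\|V\|_p\le(d\gamma)^{1/p}\le d\gamma^{1/p}$ is even marginally sharper than the paper's route via the triangle inequality and Cauchy--Schwarz.

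For $p>2$, your proposal stops short of a proof: you reduce condition (2) to $\|Qz\|_\infty\gtrsim\|z\|_q/d$ and then propose to \emph{replace} the $\ell_2$-QR factor by an $\ell_\infty$-well-conditioned factorization in the style of \citet{WoodruffZ13} (ellipsoid rounding of $\{z:\|\Pi Xz\|_\infty\le 1\}$), without reconciling this with the statement, which fixes $R$ to be the QR factor. So, judged against the lemma as written, this case is missing. However, your diagnosis of the obstruction is exactly where the paper's own proof is defective. The paper closes the case by the chain
\[ \|z\|_q\le\sqrt d\,\|z\|_2=\sqrt d\,\|Qz\|_2=\sqrt d\,\|\Pi XR^{-1}z\|_2\le d\,\|\Pi XR^{-1}z\|_\infty\le d\,\gamma^{1/p}\|Vz\|_p, \]
and the step $\sqrt d\,\|\Pi XR^{-1}z\|_2\le d\,\|\Pi XR^{-1}z\|_\infty$ asserts $\|v\|_2\le\sqrt d\,\|v\|_\infty$ for $v=Qz\in\mathbb{R}^{n'}$. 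As you observe, $\ell_2$-orthonormality of $Q$ only yields the factor $\sqrt{n'}$, and for $p>2$ we have $n'=\Omega(n^{1-2/p}\log n)$; nothing in the hypotheses supplies a $\sqrt d$ in its place. The defect is not cosmetic: take $d$ constant and $X=\mathbf 1_n$. With constant probability the guarantee of Lemma \ref{lem:wcb} holds, so $\|\Pi X\|_\infty\le\gamma^{1/p}n^{1/p}$, while $\|\Pi X\|_2^2=\Theta\bigl(\sum_{i=1}^n\lambda_i^{-2/p}\bigr)=\Theta(n)$, since signed hashing preserves the squared $\ell_2$ norm in expectation and $\E[\lambda_i^{-2/p}]=\Gamma(1-2/p)<\infty$ for $p>2$. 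The QR step then gives $R=\|\Pi X\|_2=\Theta(\sqrt n)$ and $V=X/R$, whence $\sup_{z\neq 0}\|z\|_q/\|Vz\|_p=\|\Pi X\|_2/\|X\|_p=\Theta(n^{1/2-1/p})$, so no $\beta=d^{O(1)}$ works: the conclusion itself fails for plain QR on this instance. Your refusal to bridge the gap with generic norm inequalities, and your pointer to the $\ell_\infty$-conditioning of \citet{WoodruffZ13} (at the cost of replacing QR by ellipsoid rounding in the algorithm), is therefore the sound way to repair the $p>2$ case, not a shortcoming relative to the paper.
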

We note that similar methods have been used before in the context of \emph{linear} $\ell_p$ regression \citep{SohlerW11,WoodruffZ13} and for approximating the $\ell_2$ leverage scores \citep{DrineasMMW12}.

\subsection{Proof of Theorem \ref{thm:mainoverview}}
The previous argumentation yields a coreset whose size does not depend on $n$ and that can be calculated efficiently on large data. We prove our main results.
\begin{proof}[Proof of Theorem \ref{thm:mainoverview}]
We first describe the algorithm (Algorithm \ref{alg:main}) and its running time:
First we apply our sketching matrix $\Pi$ from Lemma \ref{lem:wcb} to compute $\Pi X$ in time ${O}(\nnz(X))$ in one pass over the data. The number of rows is $n'={O}(d^2)$ for $p\in[1,2]$ and $n'={O}(n^{1-\frac{2}{p}}\log n\,\poly(d))$ for $p>2$.
Then we calculate the $QR$-decomposition of $\Pi X= QR$ in time ${O}(d^4)$ respectively in time $O(n^{1-\frac{2}{p}}\log n\,\poly(d))$ depending on the value of $p$, which is faster than $O(nd^2)$ without sketching.
In a second pass over the data we compute the row norms $\|v_i\|_p^p=\|x_iR^{-1}\|_p^p$ used in our sampling probabilities.
We set $s_i = \frac{1}{n}+ \beta^p \Vert v_i \Vert_p^p$.
By Lemma \ref{lem:levscorebound}, $S_0=1+(\alpha \beta)^p  $ is an upper bound for $\sum_{i=1}^n s_i$.
Next we set $s_i'=\max\{2^{\lceil \log_2(s_i) \rceil}, \frac{S_0}{n} \}$, i.e. we round $s_i$ to the next power of $2$ such that $S'=\sum_{i=1}^n s_i'\leq 2S_0+n\cdot\frac{S_0}{n}=3S_0$.
As we calculate those values, we can feed the point $x_i$ augmented with the corresponding sampling weight $s_i'$ directly to $k$ independent copies of a weighted reservoir sampler \citep{chao82}. The latter is an online algorithm and updates its sample in constant time.
The second pass takes ${O}(\nnz(X)d+\poly(d))$ time for $p\in[1,2]$, respectively ${O}(\nnz(X)d+\poly(d)\,n^{1-\frac{2}{p}}\log n)$ for $p>2$.
Lemma \ref{lem:sensbound} yields $S=\sum_{i\in [n]} c_s\mu (1/n + u_i) = O(\mu \sum_{i\in [n]} u_i)$ and by Lemma \ref{lem:levscorebound} we have that $\sum_{i\in [n]} u_i \leq (\alpha\beta)^p = d^{O(p)}$ where the values of $\alpha, \beta$ are detailed in Lemma \ref{lem:wellcondition}.
Using Lemmas \ref{lem:VCdim}, \ref{lem:sensbound}, and \ref{lem:levscorebound} to bound the parameters of Proposition \ref{thm:sensitivity} we get for the substitute function $\tilde f$ that 
   $\forall \beta \colon| \tilde{f}_w(X'\beta)- \tilde{f}(X\beta) |\leq \varepsilon \tilde{f}(X\beta)$
with probability at least $1-\delta$. By Corollary \ref{cor:senssplit} and Lemma \ref{lem:I1ass} this implies with high probability that $(X', w)$ is a $7\varepsilon$-coreset for $f$. Folding the constant into $\varepsilon$ completes the proof.
\end{proof}
\begin{proof}[Proof of Corollary \ref{cor:p2}]
If $p=2$ and $d= \omega(\ln n)$, we can use a Johnson–Lindenstrauss transform, i.e., a matrix $G \in \mathbb{R}^{d \times m}$ where $m={O}(\ln(n))$ and whose entries are i.i.d. $G_{ij}\sim N(0, \frac{1}{m})$ \citep{JohnsonL84} to compute a $\frac{1}{2} $-approximation to the row norms: We have $\|v_i'\|_2^2:=\Vert x_i(R^{-1}G) \Vert_2^2\geq {\Vert x_iR^{-1} \Vert_2^2}/{2}$ for all $i \in [n]$ simultanously with constant probability. The running time reduces to ${O}(\nnz(X)\ln(n)+\poly(d))$.
The online algorithm (Algorithm \ref{alg:online}) is obtained by running the online $\ell_2$ leverage score algorithm of \citet{ChhayaC0S20} that recently extended the previous work of \cite{CohenMP20}. Each row update takes $O(d^2)$ time except for at most $O(d)$ updates that take $O(d^3)$ time, implying $O(nd^2 + \poly(d))$ total running time. %TODO $(\sigma(X))$ 
The slightly increased coreset size results from an increase of the total sensitivity by at most $\log(\|X\|_2)$ due to the online procedure \citep{ChhayaC0S20}.
\end{proof}

\section{EMPIRICAL EVALUATION}

Our intention is to corroborate our theoretical results by investigating the following questions empirically:

\begin{itemize}
    \item[\textbf{(Q1)}]How does the $1$-probit model compare to logistic regression?
    \item[\textbf{(Q2)}]How do the $p$-probit models for different values of $p$ compare to one another?
    \item[\textbf{(Q3)}] How accurate are the maximum likelihood (ML) estimators obtained from $p$-probit coresets?
    \item[\textbf{(Q4)}] How fast can we obtain an accurate ML estimator from $p$-probit coresets?
\end{itemize}

All experiments were conducted on an AMD Ryzen 7 2700x processor (8 cores, 3.7GHz, 16GB RAM). 

Our results can be reproduced with our 
open Python implementation available at \url{https://github.com/cxan96/efficient-probit-regression}.
More information on data sets, pseudo code and several plots are in Appendix \ref{app:experiments}.

\textbf{I. Statistical Modeling Aspects:} For \textbf{(Q1)} \& \textbf{(Q2)} no data reduction is applied. Specifically, only statistical modeling aspects are studied here.

\textbf{(Q1)} We investigate how closely the $1$-probit model estimate equals logistic regression.
This seems to be a natural question since both functions have similar tail behaviors as we have argued before in the introduction regarding their link functions. The similarities naturally extend to their loss functions. We observe that the logistic and $1$-probit loss functions both converge to a linear function for positive arguments and to the exponential function for negative arguments. Near zero, however, the two loss functions differ more significantly. 
We compare the coefficients obtained for the two models on full data sets from public repositories, see Figure \ref{fig:2d-example}. % and \ref{fig:2d-example:appendix}.
The coefficients $\beta_i$ are very close to each other for the Covertype and Kddcup data. This is what we expected since the tails of the 1-generalized normal (Laplace) distribution and the logistic distribution have a very similar exponential decay.
For Webspam, however, there are few single values of $\beta_i$ that differ more significantly for the two models. Still, most coefficients are very similar and all less similar values have the same sign, so they point in the same direction regarding the influence of single variables.
The deviating cases indicate that many points are located close to the separating hyperplane, where the two distributions deviate most from each other. This impression is affirmed in Figure \ref{fig:2d-example}, where we see the resulting linear separators and (misclassified) residuals for a 2D data example. The logit regression line passes through a whole bunch of points, and as a result the logit model is more strongly attracted to the blue outliers, being closer to the cases $p=1.5$ or $p=2$ than to $p=1$. The logistic model thus seems to adapt to different situations, while the $1$-probit model always yields a robust estimator.

\begin{figure}[t!]
    \centering
    \includegraphics[width=0.49\linewidth]{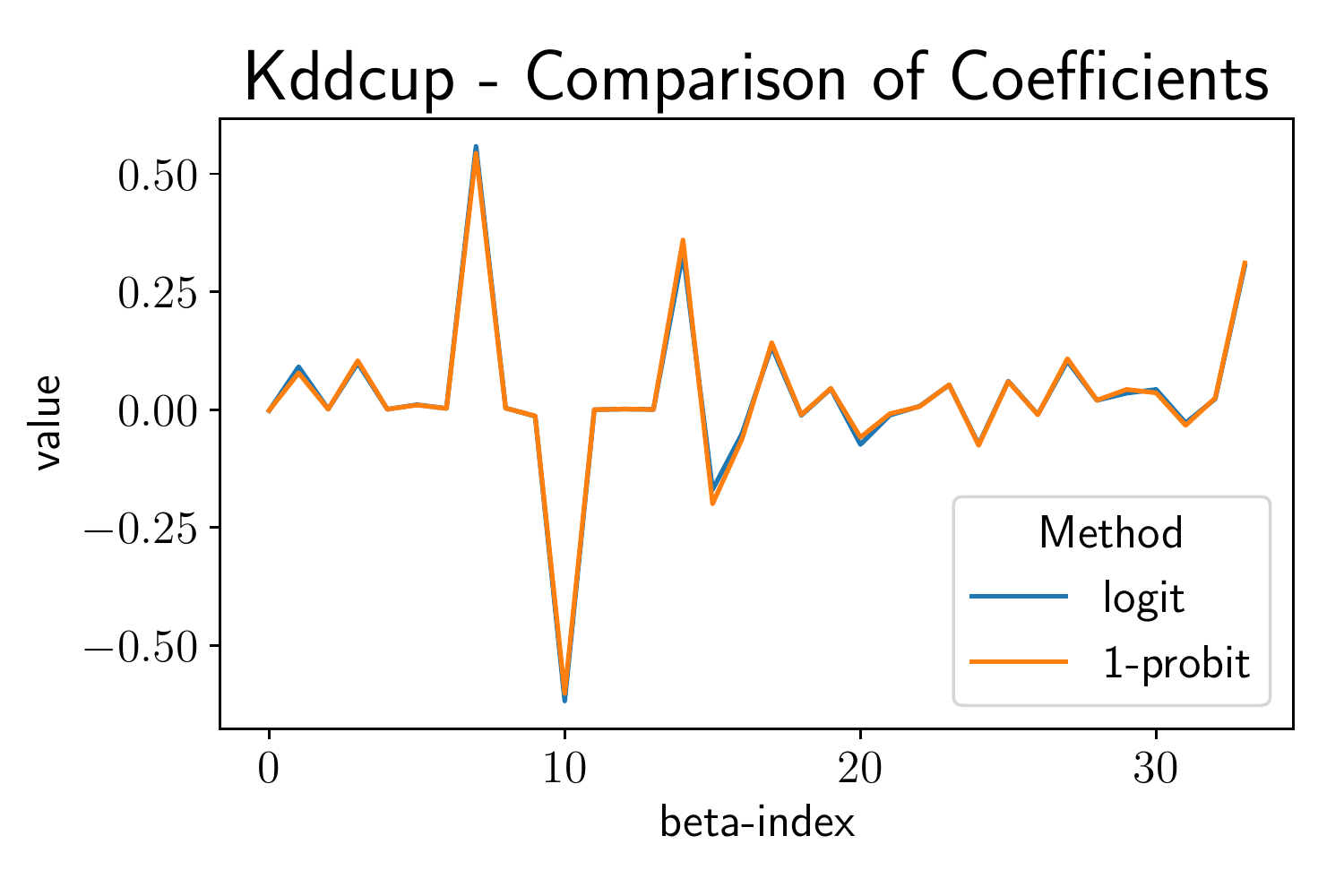}
    \includegraphics[width=0.49\linewidth]{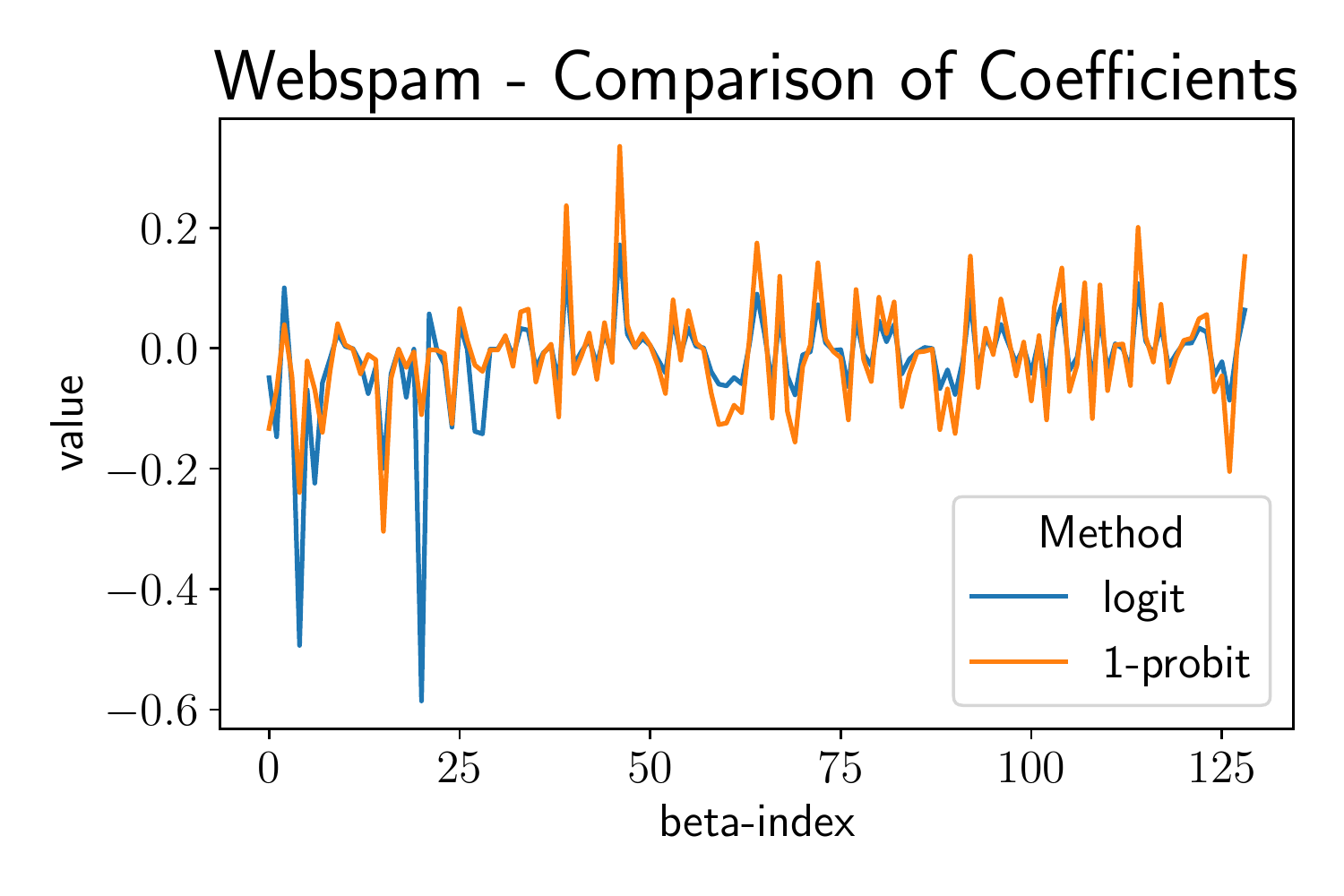}\\
    \includegraphics[width=.49\linewidth]{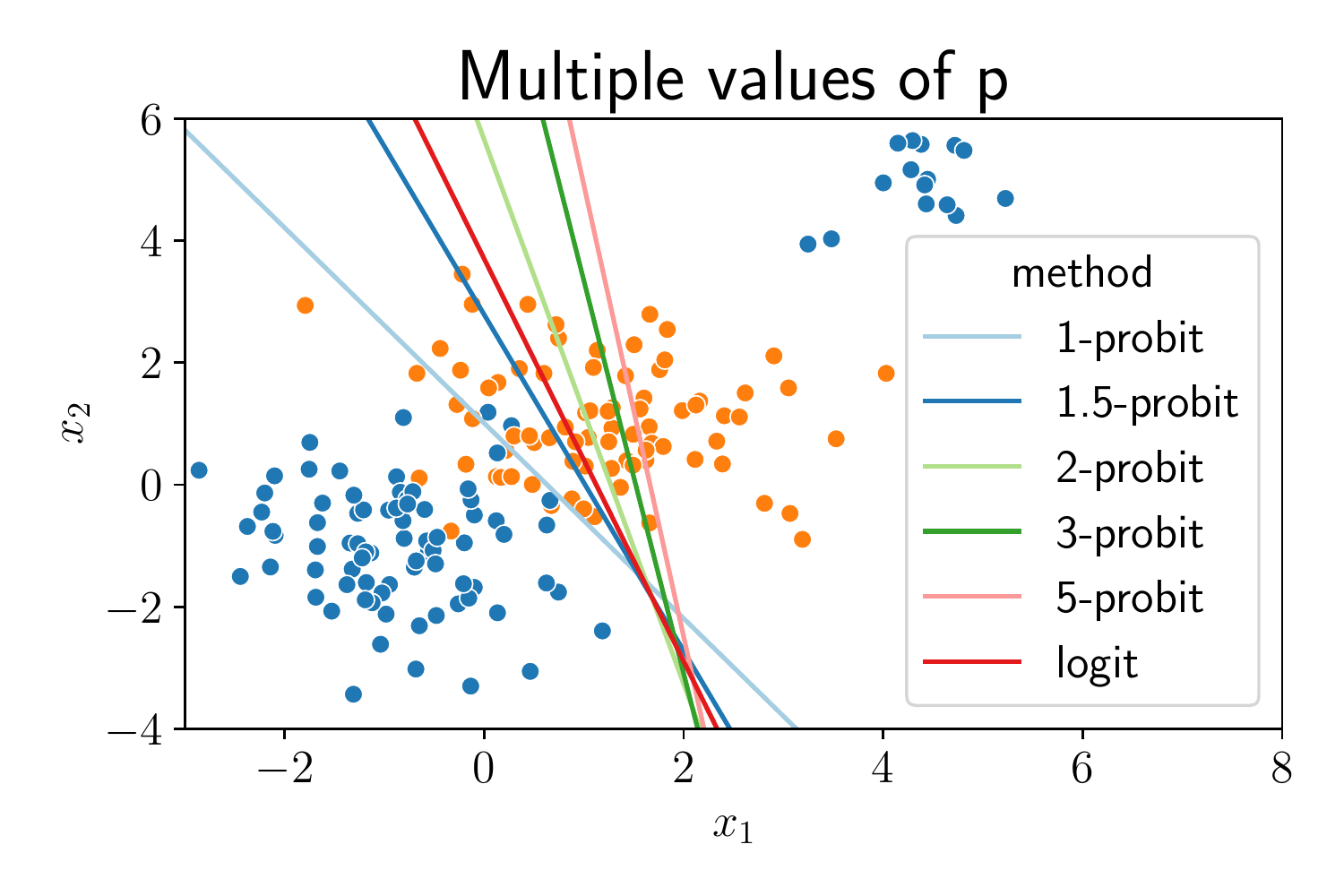}
    \includegraphics[width=.49\linewidth]{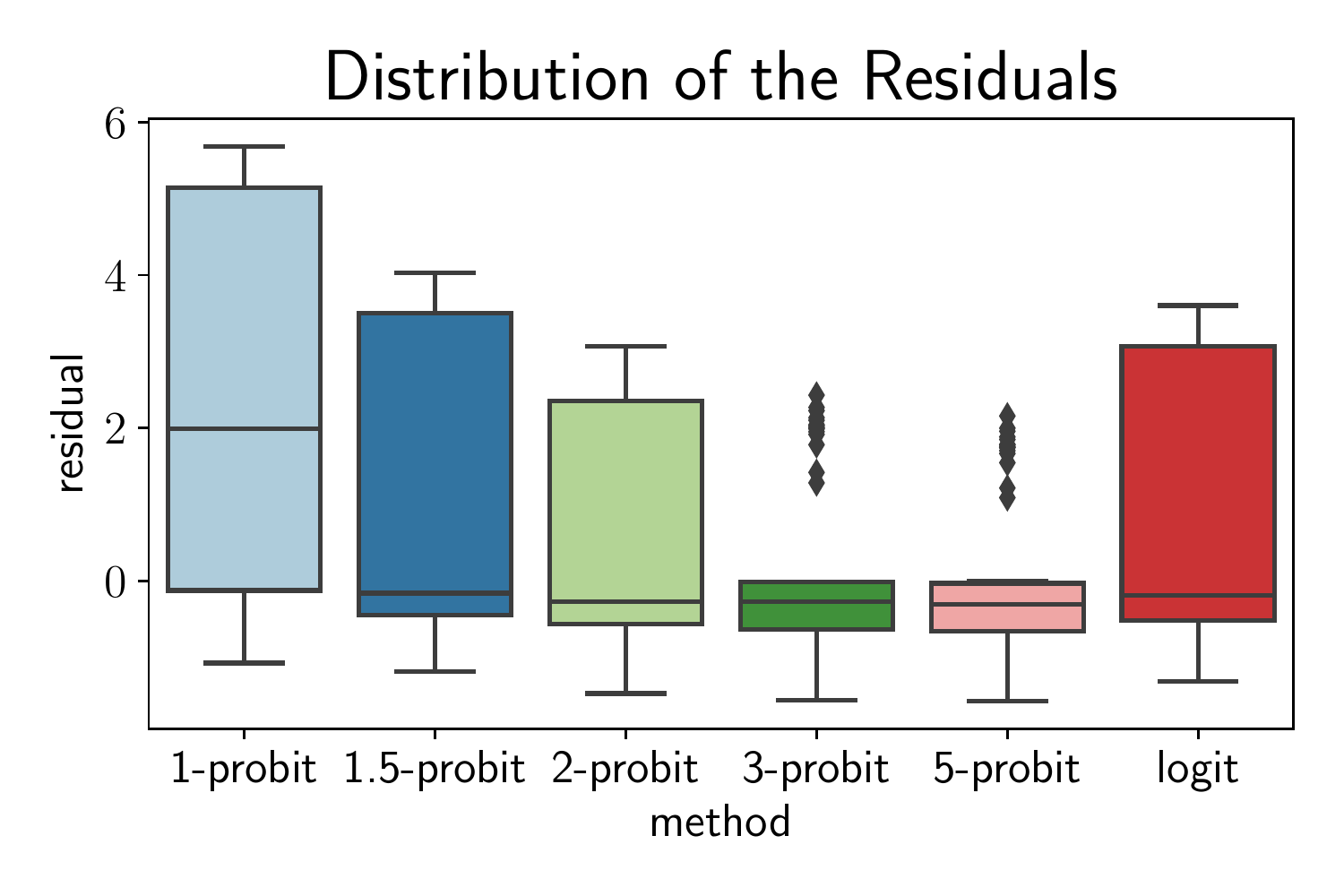}
    \caption{\textbf{(Q1)} \& \textbf{(Q2)}: (top) Comparison of normed coefficients for the logit vs. the 1-probit model on different data sets. (bottom) A 2D example data set that demonstrates how different values of $p$ affect the linear separator, as well as the distribution of the residuals $X{\beta}/{\|\beta\|_2}$ of misclassifications.}
    \label{fig:2d-example}
\end{figure}

\textbf{(Q2)} We assess the difference between $p$-probit models for different values of $p\in\{1,1.5,2,3,5\}$ on a 2D data set (plus intercept). In Figure \ref{fig:2d-example} we see that the case $p=1$ is most robust, fitting the majority of the points, and ignoring the outliers. With an increasing value of $p$, the model becomes more sensitive and attracted to the outliers. The boxplots show the residuals of the misclassified points. For small values of $p\in\{1,1.5,2\}$ the model decreasingly ignores the blue outliers, which lead to large positive residuals. There are no orange outliers, so the negative residuals are relatively small. With increasing $p\in\{3,5\}$ the model gradually tends to minimize the distance to the most outlying misclassifications, independent of what happens between those extremes. We finally note that we have tried larger values but the effect does not change significantly beyond $p=5$, which indicates that relatively small values are already close to the limiting case $p\rightarrow\infty$.

\begin{figure*}[t!]
\begin{center}
\begin{tabular}{ccc}
\includegraphics[width=0.309\linewidth]{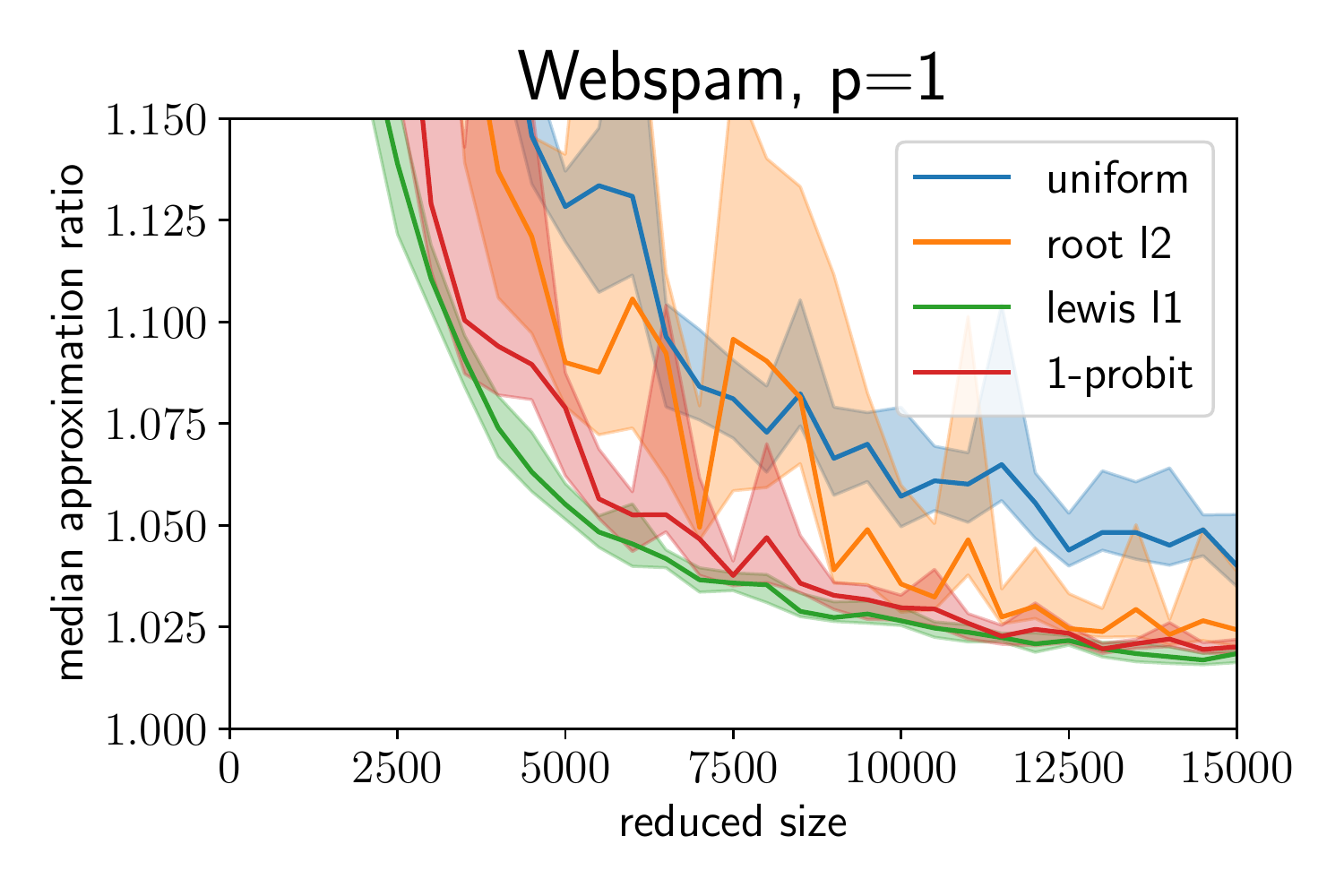}&
\includegraphics[width=0.309\linewidth]{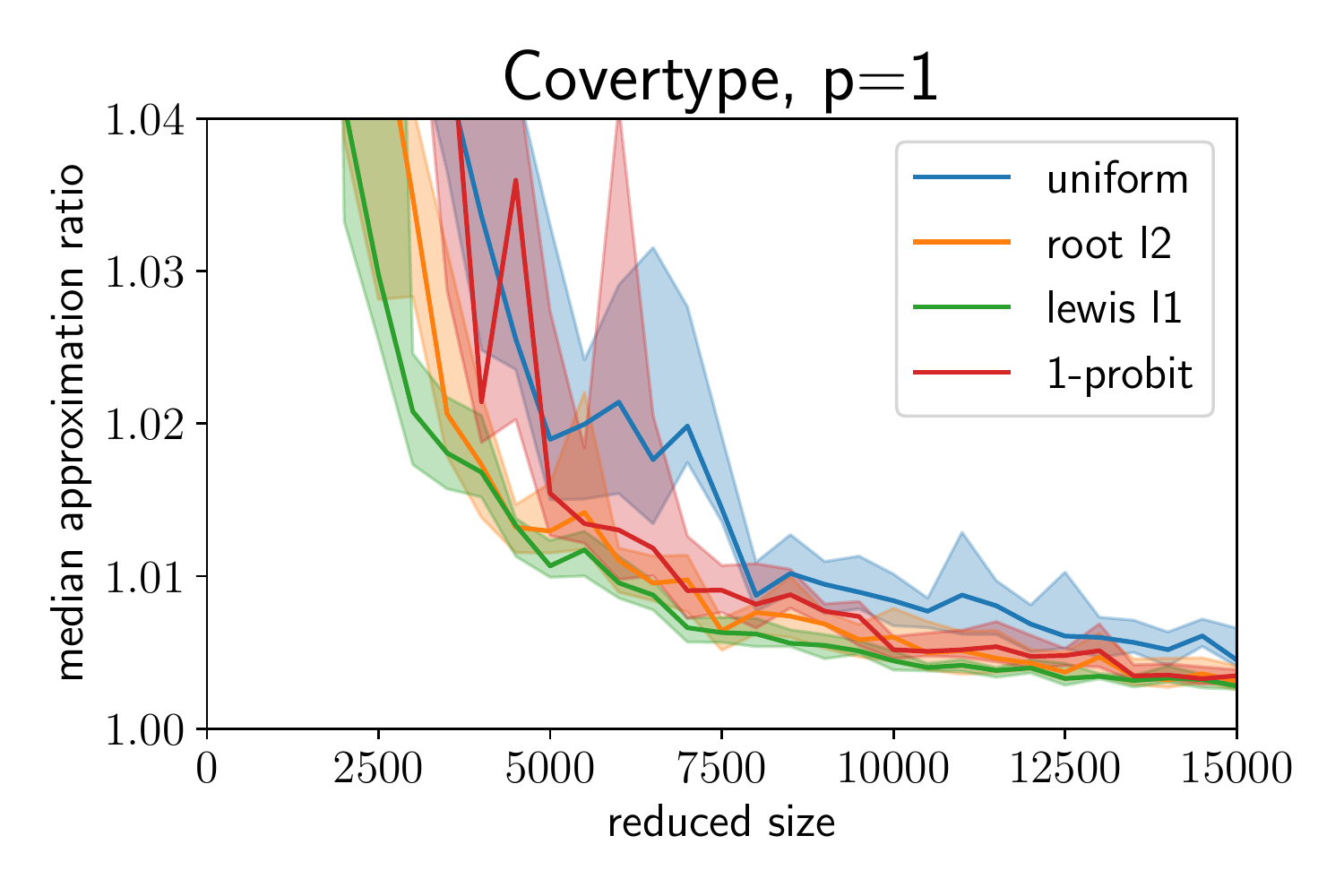}&
\includegraphics[width=0.309\linewidth]{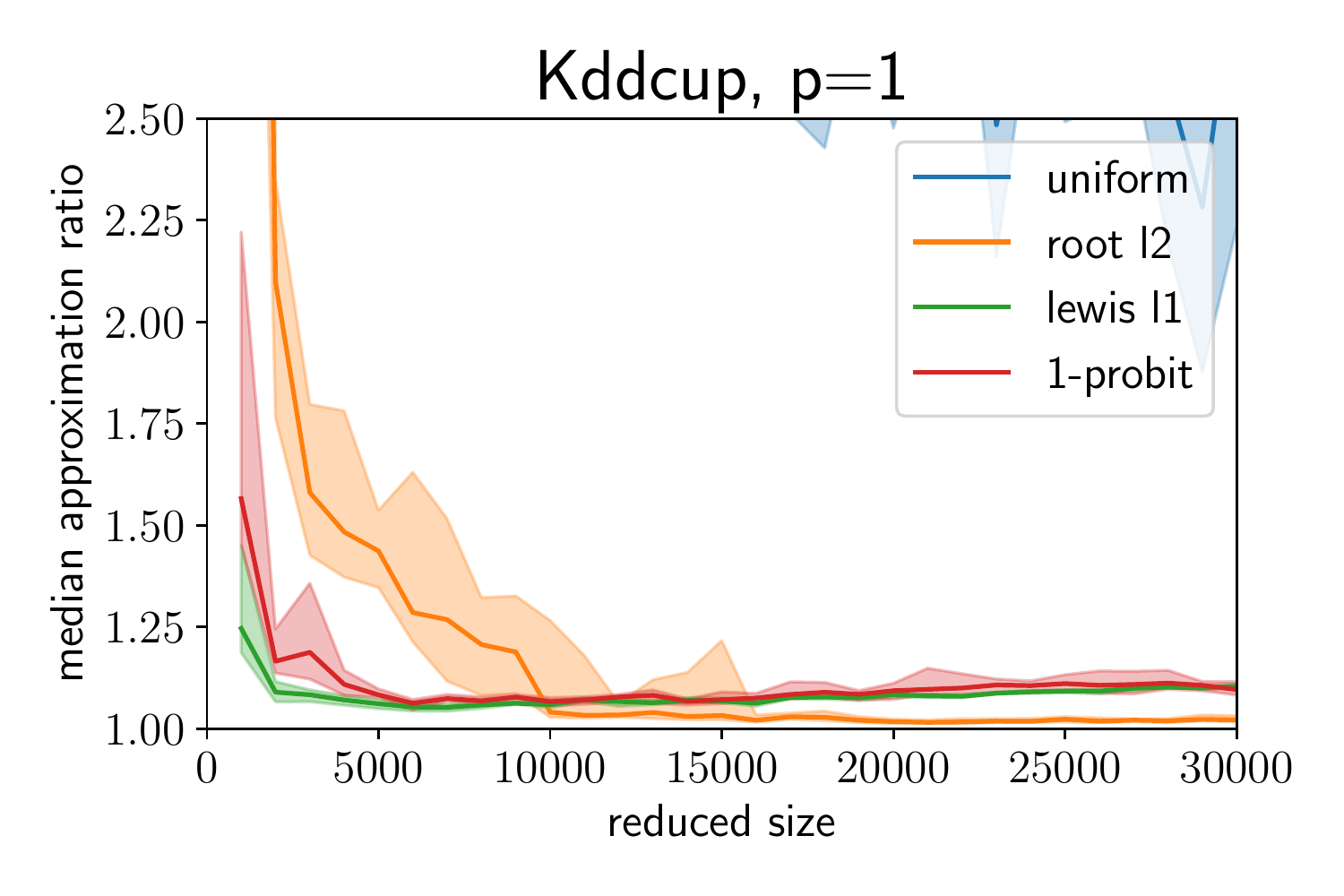}\\
\includegraphics[width=0.309\linewidth]{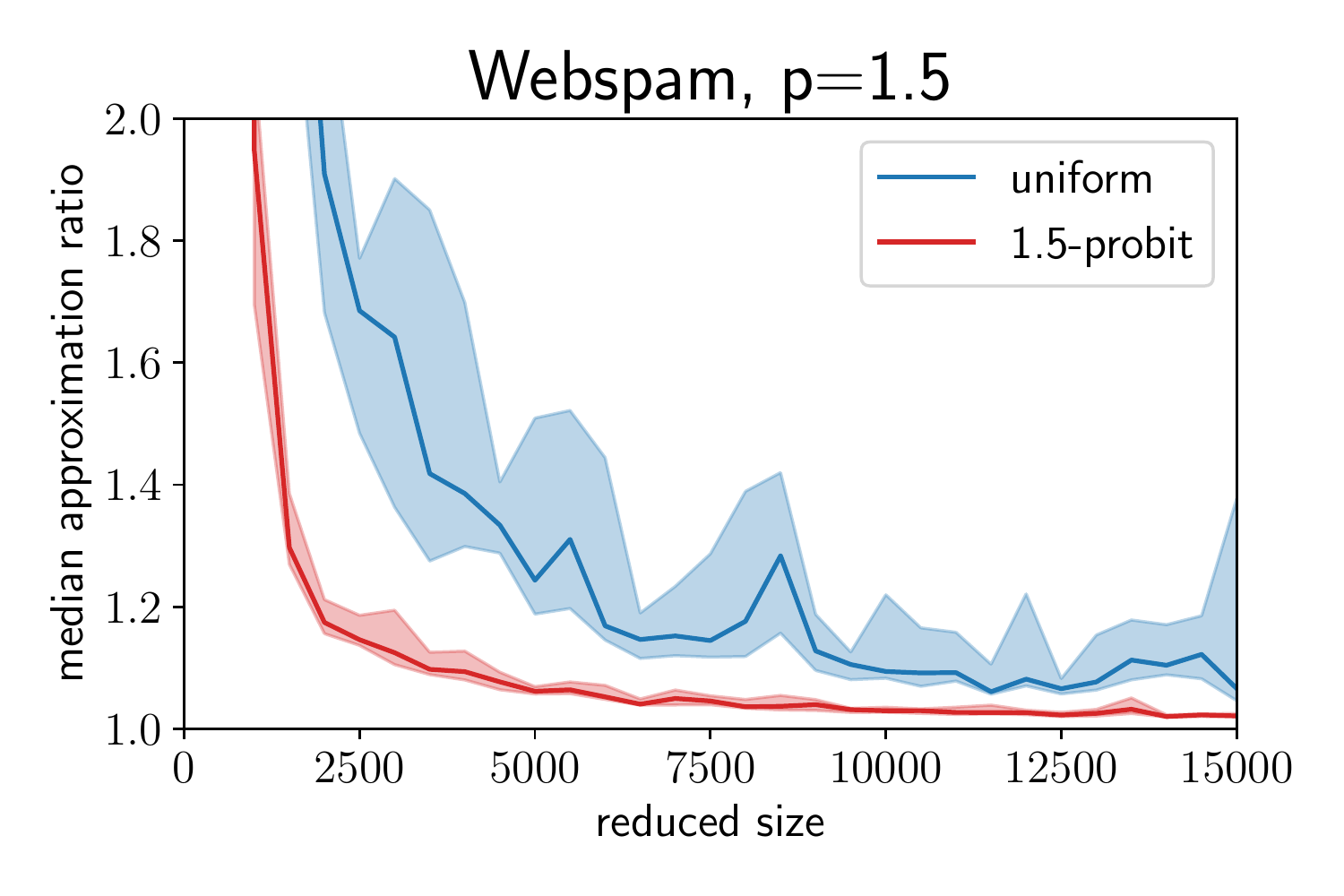}&
\includegraphics[width=0.309\linewidth]{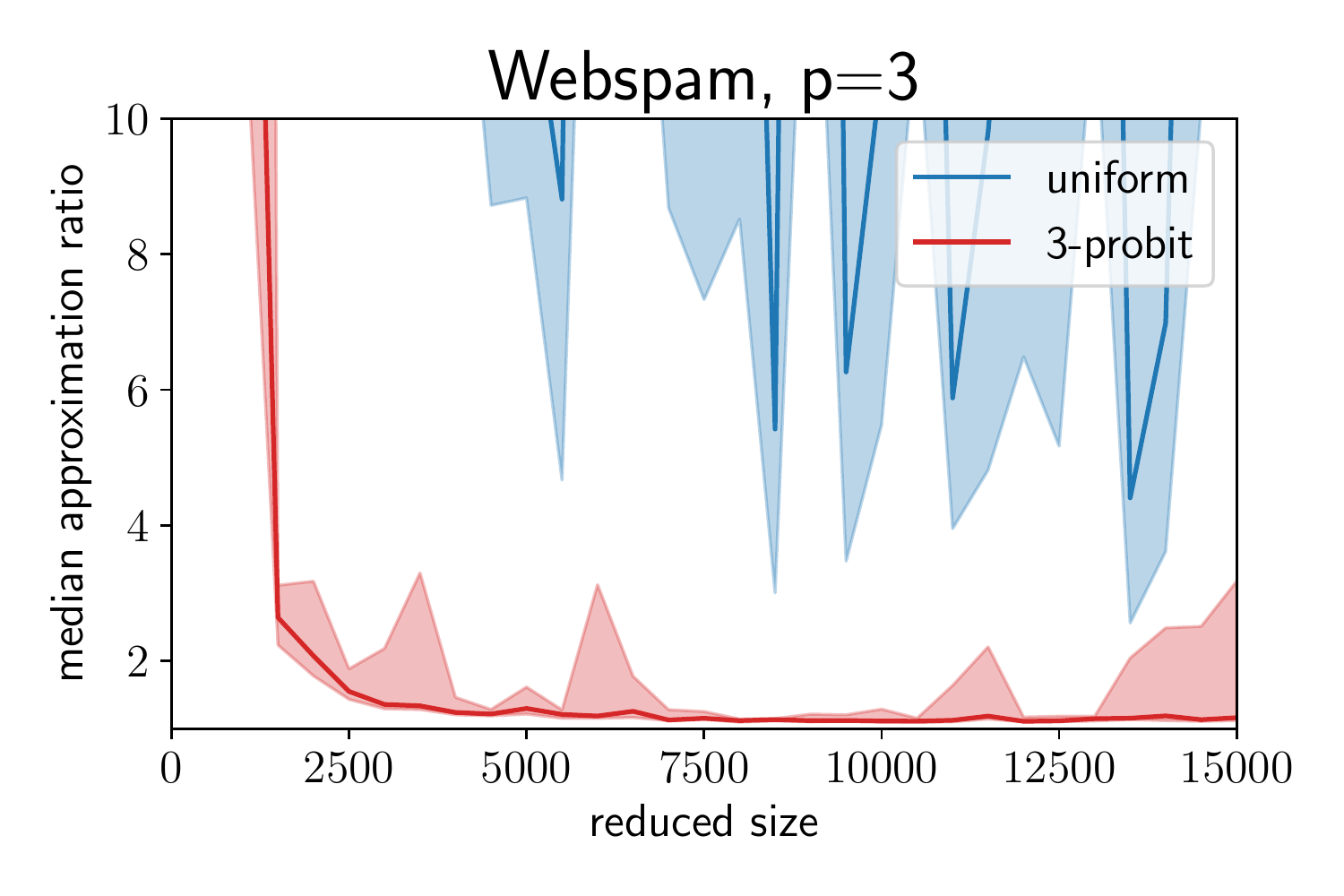}&
\includegraphics[width=0.309\linewidth]{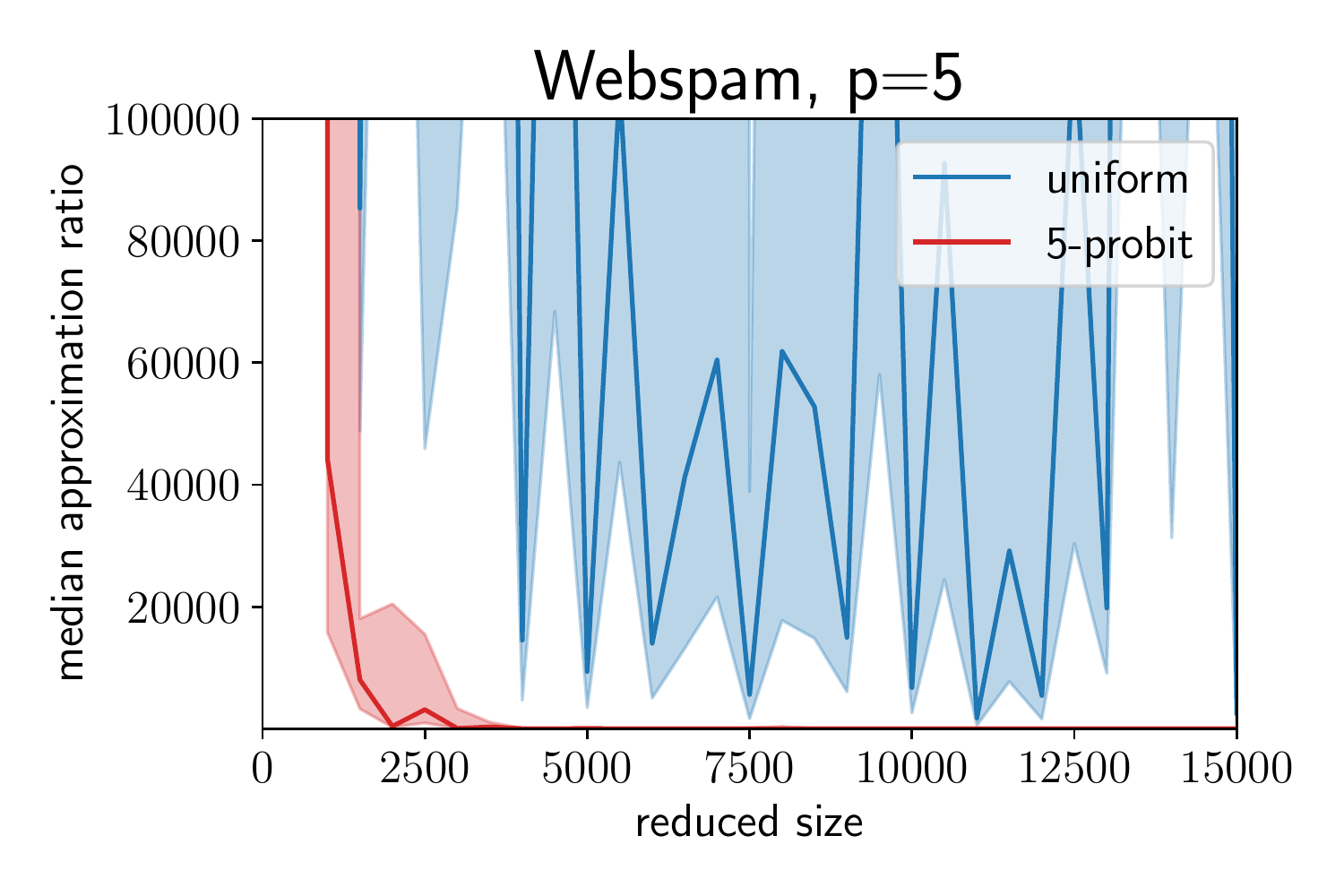}\\
\includegraphics[width=0.309\linewidth]{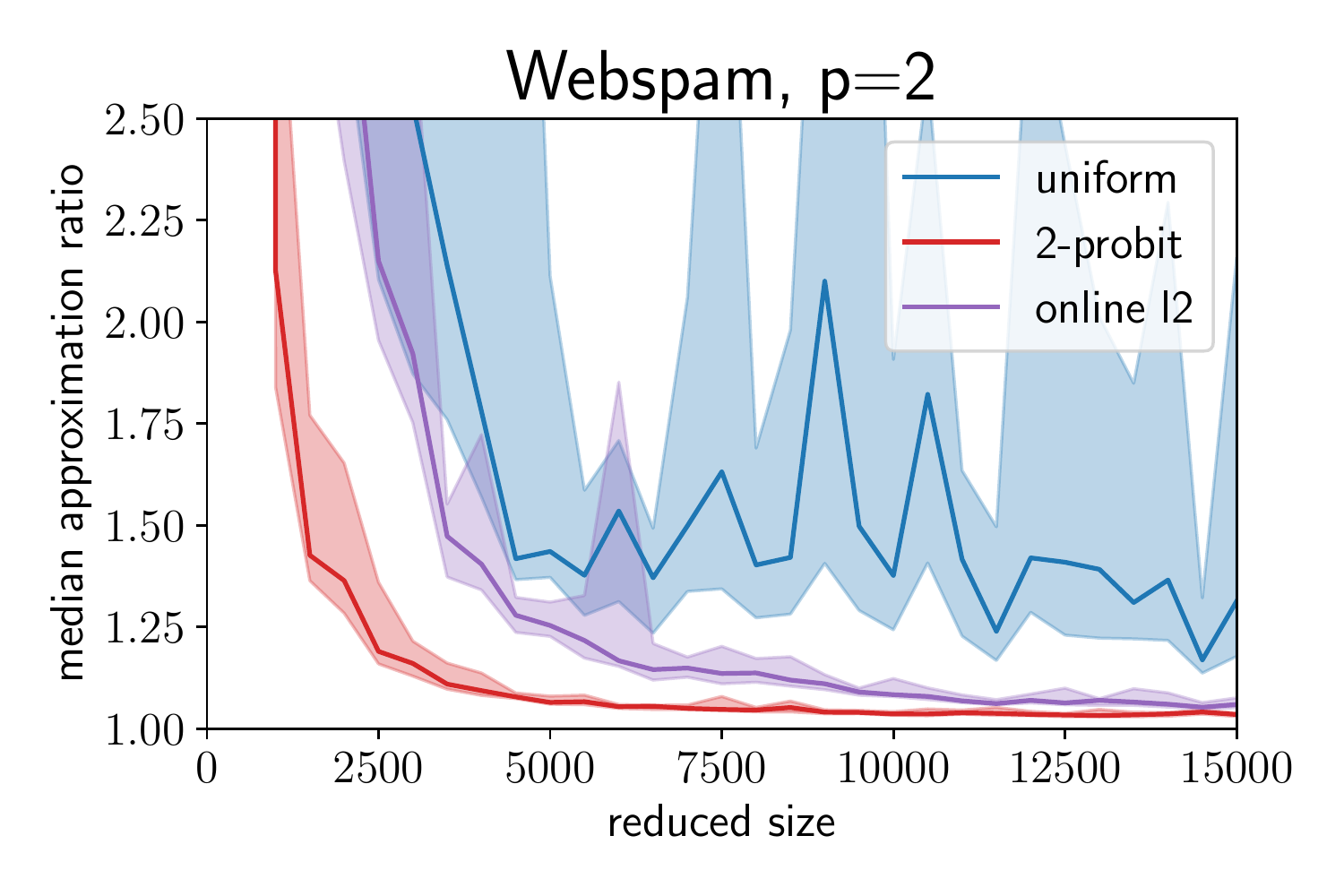}&
\includegraphics[width=0.309\linewidth]{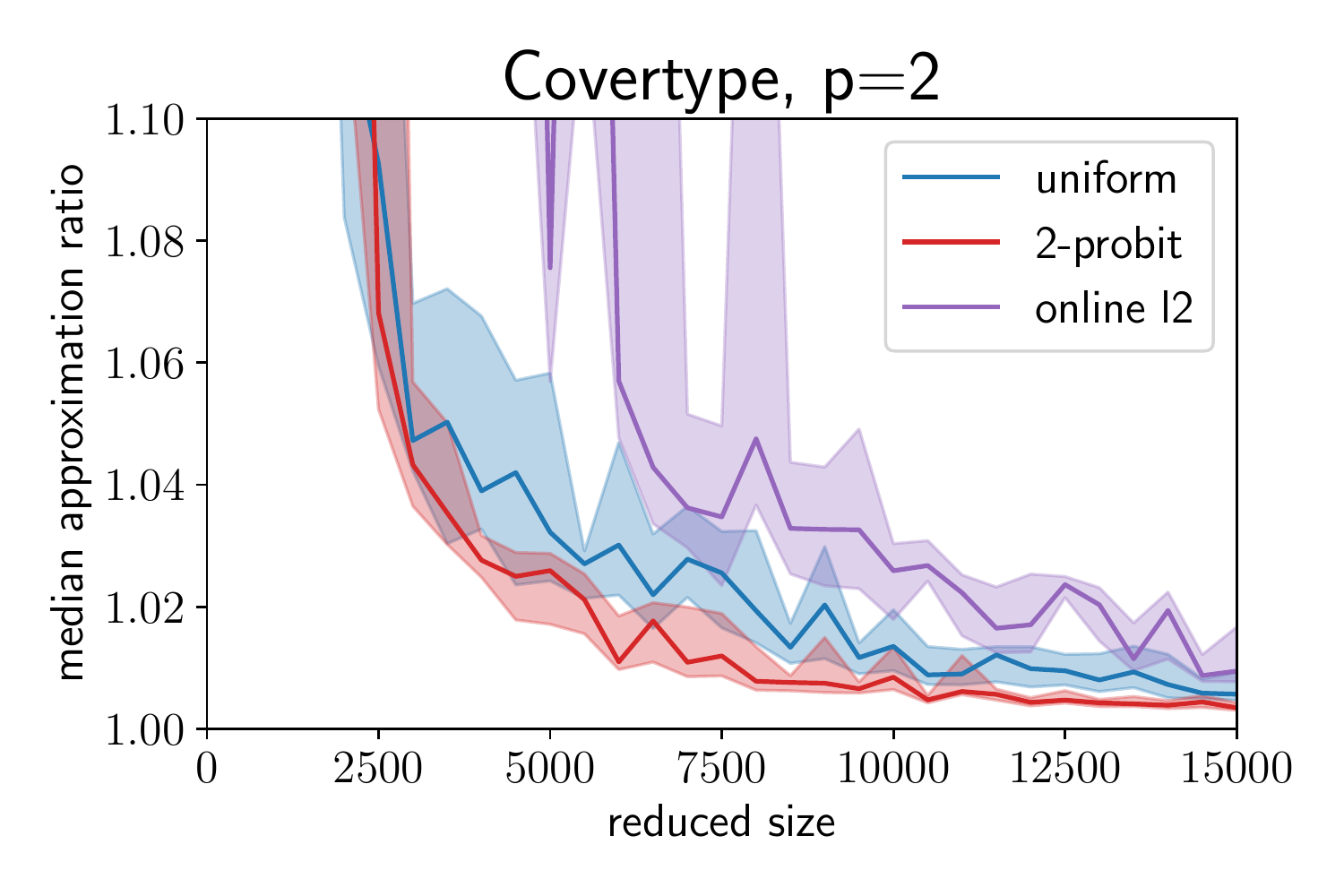}&
\includegraphics[width=0.309\linewidth]{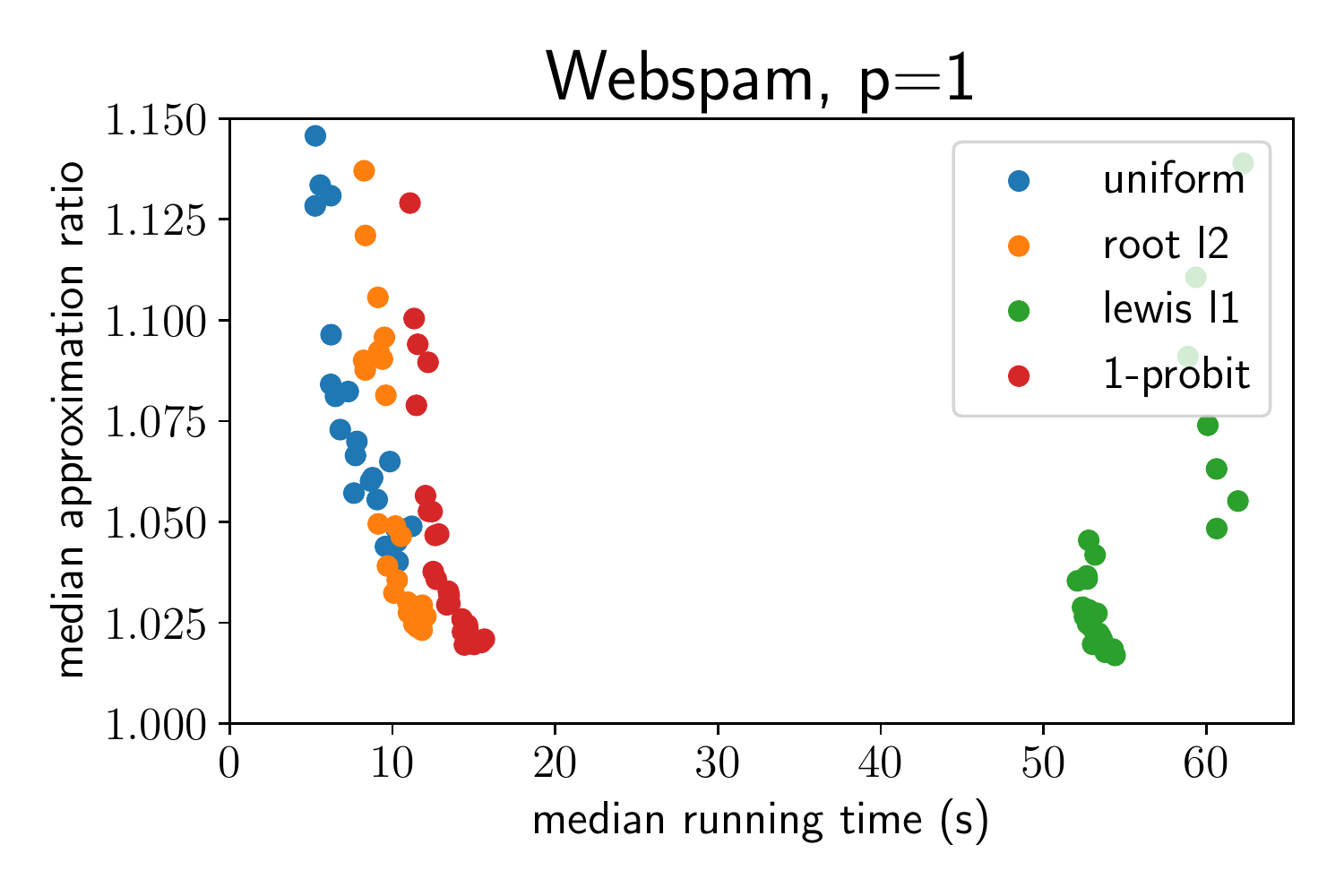}%\\
\end{tabular}
\caption{\textbf{(Q3)} \& \textbf{(Q4)}: (top) Accuracy of coreset constructions for $p=1$ on three data sets.
(middle) Accuracy on the Webspam data for increasing values of $p$.
(bottom left) Accuracy of coreset constructions for $p=2$ on two data sets. (bottom right) Time vs. accuracy plot for different coreset constructions for $p=1$. The median approximation ratio denotes for each sample size the median of the approximation ratios over all repetitions.
}
\label{fig:strct_exp}
\end{center}
\end{figure*}

\textbf{II. Accuracy and Algorithmic Efficiency Aspects:}
We compare our coresets to uniform sampling and for $p=1$ additionally to the square root of $\ell_2$ leverage scores and to $\ell_1$-Lewis weights, that were developed for logistic regression \citep{MunteanuSSW18,MaiRM21}. 
For $p=2$ we additionally compare to the online algorithm for online $\ell_2$ leverage scores. See Figure \ref{fig:strct_exp}.

\textbf{(Q3)} For $p=1$ the Lewis weights have the best accuracy, closely followed by the $1$-probit coreset. Square root $\ell_2$ sampling is third on Webspam and Covertype but seems superior and more stable on Kddcup. Uniform sampling is good only on Covertype (which seems to be a very uniform data set) but otherwise uniform sampling is considerably worse than all competitors.
For $p=2$ we see that the online leverage score approximation performs worse than uniform sampling on some data sets but is close to the probit model on others. This might depend on the condition of the data as indicated by the theoretical analyses of online leverage scores \citep{ChhayaC0S20,CohenMP20}.
Uniform sampling generally detoriorates with increasing $p$ since it misses the small number of outliers which become increasingly important for obtaining a good approximation. Only for Covertype it seems to be a good sampling probability independent of $p$ due to the uniformity of the data set. For more complicated Webspam and especially for Kddcup, uniform sampling fails. Our coresets perform well for all values of $p$, although for Covertype we note that for $p>2$ the convergence seems to start very late. We believe this is due to the considerably higher complexity of sketching and consequently weaker estimates of the norms for $p>2$.

\textbf{(Q4)} We fist note that the online algorithm is extremely slow, running in $\Theta(nd^2+ \poly(d))$; %TODO
its advantage is that it needs only one single row-order pass as opposed to the other two-pass algorithms. Although Lewis weights won the competition with respect to accuracy, we note that they require repeated calculations of leverage scores and are thus much slower, which is also reflected when considering the time vs. accuracy trade-off. Uniform sampling seems to be the fastest choice when we only need a weak approximation, but our coresets are the best choice to obtain both, fast and accurate estimates at the same time.

In summary, the answers to questions \textbf{(Q1)}-\textbf{(Q4)} show the benefits and limitations of our methods.

\section{CONCLUSION}
We introduce the $p$-generalized probit model as a flexible framework for modeling binary data, generalizing the standard probit model to control the tail behavior (kurtosis) of the link, and to model different levels of sensitivity to outliers. To facilitate an efficient and scalable maximum likelihood estimation for the parameters of the $p$-generalized probit model, we develop coreset constructions and combine with sketching techniques for the associated $p$-probit loss function. Hereby, we advance the analytical analysis on the tails of the $p$-generalized normal distribution, which may be of independent interest. Our algorithms run in nearly input sparsity time in two passes over the data which can be improved for $p=2$, in which case we also have a single pass online algorithm. Our experiments demonstrate the usefulness of our $p$-generalization to model the sensitivity to outliers, as well as the efficiency and accuracy of our coreset and sketching algorithms. It would be interesting to continue this work by extending modeling flexibility to allow for skewed distributions.

\subsubsection*{Acknowledgements}
We thank the anonymous reviewers of AISTATS 2022 for their helpful comments. We thank Prof. Dr.~Katja~Ickstadt for pointing us to the probit model and for valuable discussions on that topic.
We thank the authors of \citep{MaiRM21} for providing their implementation of Lewis weights integrated into our codebase.
The authors were supported by the German Science Foundation (DFG), Collaborative Research Center SFB 876, project C4 and by the Dortmund Data Science Center (DoDSc).

\renewcommand\bibname{References}
\bibliographystyle{plainnat}
\bibliography{references}
\clearpage
\onecolumn
\allowdisplaybreaks
\appendix

\section{GRADIENT AND HESSIAN MATRIX}\label{app:grad_hess}

The optimization of $f$ can be done by gradient descent or by applying the Newton-Raphson method \citep{Bubeck15}, an iterative procedure that starts at an initial guess $\beta^{(0)}$ and successively applies the following update rule:
\begin{align*}
    \beta^{(t)} = \beta^{(t-1)} - \left(\frac{\partial^2f(\beta^{(t-1)})}{\partial\beta\partial\beta^T}\right)^{-1}
    \cdot \frac{\partial f(\beta^{(t-1)})}{\partial\beta},
\end{align*}
where $\left(\frac{\partial^2f(\beta^{(t-1)})}{\partial\beta\partial\beta^T}\right)^{-1}$ refers to the inverse of the Hessian matrix of $f$, evaluated at $\beta^{(t-1)}$, and $\frac{\partial f(\beta^{(t-1)})}{\partial\beta}$ refers to the gradient of $f$, evaluated at $\beta^{(t-1)}$. The idea behind this procedure is, broadly speaking, to approximate $f$ locally around $\beta^{(t)}$ by its second degree Taylor-polynomial and then analytically find the minimum of this polynomial. The minimum of this local polynomial approximation of $f$ is then used iteratively as a basis for the next step of the Newton-Raphson algorithm.

It remains to derive the gradient and the Hessian matrix of $f$. Since $f$ is a sum of the function $g$ evaluated at different points, it makes sense to first determine the derivative of $g$. 
To this end $\varphi_p(r)$ is the density function of the (standardized) $p$-generalized normal distribution function, cf. \citep{Dytso18,KalkeR13}:
\begin{align*}
    \varphi_p(r) = \frac{p^{1-{1}/{p}}}{2\Gamma({1}/{p}) } \exp(-|r|^p/p).
\end{align*}
We proceed by using the chain rule as follows:
\begin{align*}
    %\begin{split}
        \frac{d}{dr}g(r) = \frac{d}{dr} - \ln \left( \Phi_p(-r)\right)
        &= \frac{d}{dr} \ln \left(\frac{1}{1 - \Phi_p(r)}\right)                 \\
        & = (1 - \Phi_p(r)) \cdot \frac{d}{dr} \left(\frac{1}{1 - \Phi_p(r)}\right) \\
        & = (1 - \Phi_p(r)) \cdot \frac{(-1)}{(1 - \Phi_p(r))^2} \cdot \frac{d}{dr} (1 - \Phi_p(r)) \\
        & = \frac{(-1)}{1 - \Phi_p(r)} \cdot (-1) \cdot \varphi_p(r) \\
        & = \frac{\varphi_p(r)}{1 - \Phi_p(r)},
    %\end{split}
\end{align*}
We can use this result to calculate the gradient of $f$:
\begin{align*}
        \frac{\partial}{\partial \beta} f_w(X\beta)
        & = \frac{\partial}{\partial \beta} \sum_{i=1}^n w_i g(x_i \beta) \\
        & = \sum_{i=1}^n w_i x_i g'(x_i \beta) \\
        & = \sum_{i=1}^n w_i x_i \frac{\varphi_p(x_i \beta)}{1 - \Phi_p(x_i \beta)}
\end{align*}

Next, we need to determine the Hessian matrix of $f$. To this end, we again start by finding the second derivative of $g$, this time using the quotient rule.
\begin{align*}
        \frac{d^2}{dr^2}g(r)
        & = \frac{d}{dr} \frac{\varphi_p(r)}{1 - \Phi_p(r)} \\
        & = \frac{\varphi_p'(r)(1 - \Phi_p(r)) - \varphi_p(r) \cdot (-1) \cdot \varphi_p(r)}
        {(1 - \Phi_p(r))^2} \\
        & = \frac{(-1) \cdot \sgn(r) \cdot |r|^{p-1} \cdot \varphi_p(r)(1 - \Phi_p(r)) - \varphi_p(r) \cdot (-1) \cdot \varphi_p(r)}
        {(1 - \Phi_p(r))^2} \\
        & = \frac{[\varphi_p(r)]^2 -\sgn(r) |r|^{p-1} \cdot \varphi_p(r) \cdot (1 - \Phi_p(r))}{(1 - \Phi_p(r))^2} \\
        & = \left(\frac{\varphi_p(r)}{1 - \Phi_p(r)}\right)^2 - \sgn(r) |r|^{p-1} \cdot \frac{\varphi_p(r)}{1 - \Phi_p(r)} \\
        & = \frac{\varphi_p(r)}{1 - \Phi_p(r)} \left( \frac{\varphi_p(r)}{1 - \Phi_p(r)} - \sgn(r) |r|^{p-1} \right)  \\
        & = g'(r) \cdot (g'(r) - \sgn(r) |r|^{p-1})
\end{align*}
We can now use this result to find the Hessian matrix of $f$.
Recall that $x_i$ are \emph{row vectors} in our paper and thus each $x_i^T x_i $ is a $d \times d$-matrix.
\begin{align*}
        \frac{\partial^2}{\partial \beta \partial \beta^T} f_w(X\beta)
        & = \sum_{i=1}^n
        \frac{\partial^2}{\partial \beta \partial \beta^T} w_i g(x_i \beta)\\
        & = \sum_{i=1}^n w_i x_i^T x_i g'(x_i \beta)(g'(x_i \beta) - \sgn(x_i \beta)|x_i \beta|^{p-1})\\
        & = \sum_{i=1}^n w_i x_i^T x_i
        \frac{\varphi_p(x_i \beta)}{1 - \Phi_p(x_i \beta)} \left( \frac{\varphi_p(x_i \beta)}{1 - \Phi_p(x_i \beta)} - \sgn(x_i \beta)|x_i \beta|^{p-1} \right).
\end{align*}

It can be shown, that $f_w(X\beta)$ is a convex function of $\beta$, and that the Newton-Raphson algorithm converges to a global optimum when applied to a convex function \citep{Bubeck15}. The optimization procedure thus converges to the maximum likelihood estimate $\hat{\beta} \in \operatorname{argmin}_{\beta\in\mathbb{R}^d} f_w(X\beta)$ provided it exists.%under the condition that the data is not linearly separable.

\section{SENSITIVITY FRAMEWORK}\label{app:sensitivity}
Our approach is based on the so called sensitivity framework \citep{LangbergS10}. The sensitivity of a point is its worst case contribution to the entire loss function $\zeta_i=\sup_{\beta \in \mathbb{R}^d} \frac{g(x_i\beta)}{f(X\beta)}$, cf. Definition \ref{def:sensitivity}.
Computing the sensitivity of a point can be difficult. Fortunately it suffices to get a reasonably tight upper bound on the sensitivity of a point. In Section \ref{sec:sen} we will show that such a bound can be derived in our setting.

Since sensitivities are not sufficient to get a good bound for all solutions $\beta$, we also need the terminology of the VC-dimension:

\begin{mydef}
	The range space for a set $\mathcal{F}$ is a pair $\mathfrak{R}=(\mathcal{F},\ranges)$ where $\ranges$ is a family of subsets of $\mathcal{F}$. The VC-dimension $\Delta(\mathfrak{R})$ of $\mathfrak{R}$ is the size $|G|$ of the largest subset $G\subseteq \mathcal{F}$ such that $G$ is shattered by $\ranges$, i.e., $\left| \{G\cap R\mid R\in \ranges \} \right| = 2^{|G|}.$
\end{mydef}
\begin{mydef}
	Let $\mathcal{F}$ be a finite set of functions mapping from $\mathbb{R}^d$ to $\mathbb{R}_{\geq 0}$. For every $x\in\mathbb{R}^d$ and $r\in \mathbb{R}_{\geq 0}$, let $\rng{\mathcal{F}}(x,r) = \{ f\in \mathcal{F}\mid f(x)\geq r\}$, and $\ranges(\mathcal{F})=\{\rng{\mathcal{F}}(x,r)\mid x\in\mathbb{R}^d, r\in \mathbb{R}_{\geq 0} \}$, and $\mathfrak{R}_{\mathcal{F}}=(\mathcal{F},\ranges(\mathcal{F}))$ be the range space induced by $\mathcal{F}$.
\end{mydef}

The VC-dimension can be thought of something similar to the dimension of our problem.
For example the VC-dimension of the set of hyperplane classifiers in $\mathbb{R}^d$ is $d+1$ \citep{KearnsV94}. We analyze the VC-dimension of our problem in Section \ref{sec:VC}.
The sensitivity scores were combined with a theory on the VC-dimension of range spaces in \citep{FeldmanL11,BravermanFL16}. We use a more recent version of \citet{FeldmanSS20}.

\begin{pro}{\citep{FeldmanSS20}}
	\label{thm:sensitivity}
	Consider a family of functions $\mathcal{F}=\{f_1,\ldots,f_n\}$ mapping from $\mathbb{R}^d$ to $[0,\infty)$ and a vector of weights $u\in\mathbb{R}_{> 0}^n$. Let $\varepsilon,\delta\in(0,1/2)$.
	Let $s_i\geq \zeta_i$.
	Let $S=\sum\nolimits_{i=1}^{n} s_i \geq \sum\nolimits_{i=1}^{n} \zeta_i = Z$. Given $s_i$ one can compute in time $O(|\mathcal{F}|)$ a set $R\subset \mathcal{F}$ of $$O\left( \frac{S}{\varepsilon^2}\left( \Delta \ln S + \ln \left(\frac{1}{\delta}\right) \right) \right)$$ weighted functions such that with probability $1-\delta$, we have for all $x\in \mathbb{R}^d$ simultaneously $$\left| \sum_{f_i\in \mathcal{F}} u_i f_i(x) - \sum_{f_i\in R} w_i f_i(x) \right| \leq \varepsilon \sum_{f_i\in \mathcal{F}} u_i f_i(x),$$
	where each element of $R$ is sampled i.i.d. with probability $p_j=\frac{s_j}{S}$ from $\mathcal{F}$, $w_i = \frac{Su_j}{s_j|R|}$ denotes the weight of a function $f_i\in R$ that corresponds to $f_j\in\mathcal{F}$, and where $\Delta$ is an upper bound on the VC-dimension of the range space $\mathfrak{R}_{\mathcal{F}^*}$ induced by $\mathcal{F}^*$ obtained by defining $\mathcal{F}^*$ to be the set of functions $f_j\in\mathcal{F}$, where each function is scaled by $\frac{Su_j}{s_j|R|}$.
\end{pro}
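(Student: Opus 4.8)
The statement is the classical sensitivity-sampling meta-theorem, so the plan is to prove it through the standard importance-sampling-plus-uniform-convergence argument, treating the guarantee as a uniform deviation bound over the infinite query set $\{x\in\mathbb{R}^d\}$. First I would pass to the \emph{reweighted} functions $h_j(x)=\frac{S u_j}{s_j}f_j(x)$, so that drawing an index $J$ with probability $p_j=s_j/S$ turns $h_J$ into an unbiased single-sample estimator of the target $F(x):=\sum_{i} u_i f_i(x)$; indeed $\E[h_J(x)]=\sum_j \frac{s_j}{S}\cdot\frac{S u_j}{s_j}f_j(x)=F(x)$. The estimator produced by the algorithm is exactly the empirical average $\frac{1}{|R|}\sum_{j\in R} h_j(x)$ with the stated weights $w_i=\frac{S u_j}{s_j|R|}$, so the claim reduces to showing that this average lies within $\varepsilon F(x)$ of $F(x)$ simultaneously for every $x$.

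The second step extracts the key \emph{boundedness} property that the sensitivity upper bound buys. By the (weighted) sensitivity definition $\zeta_j=\sup_x \frac{u_j f_j(x)}{F(x)}$ together with $s_j\geq\zeta_j$ we get $\frac{u_j f_j(x)}{F(x)}\leq s_j$ for all $x$, hence
\[ 0\;\leq\;\frac{h_j(x)}{F(x)}\;=\;\frac{S}{s_j}\cdot\frac{u_j f_j(x)}{F(x)}\;\leq\;S. \]
Thus after normalizing by $F(x)$ every summand lies in the fixed interval $[0,S]$, uniformly in $x$. For any \emph{fixed} $x$ this already yields pointwise concentration: the normalized terms $h_j(x)/F(x)$ are i.i.d., bounded in $[0,S]$ with mean $1$, so a Hoeffding/Bernstein bound shows that an average of $|R|=\Omega(\frac{S}{\varepsilon^2}\ln\frac{1}{\delta})$ of them deviates from $1$ by more than $\varepsilon$ with probability at most $\delta$.

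The third and hardest step upgrades this pointwise guarantee to a uniform one over all $x\in\mathbb{R}^d$, and this is exactly where the VC dimension $\Delta$ enters. I would invoke the $\varepsilon$-approximation theory for range spaces: the level sets $\rng{\mathcal{F}^*}(x,r)=\{h_j\mid h_j(x)\geq r\}$ generate the range space $\mathfrak{R}_{\mathcal{F}^*}$ whose VC dimension is bounded by $\Delta$. Writing both $F(x)$ and its empirical analogue as layered integrals over thresholds $r$ of the (normalized) counting measure of the ranges $\rng{\mathcal{F}^*}(x,r)$, a relative $\varepsilon$-approximation of $\mathfrak{R}_{\mathcal{F}^*}$ controls the deviation uniformly in $x$. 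The classical sample-complexity bound for such approximations, applied to functions whose normalized range is $[0,S]$, requires $|R|=O\!\big(\frac{S}{\varepsilon^2}(\Delta\ln S+\ln\frac{1}{\delta})\big)$ draws, where the extra $\ln S$ factor comes from the logarithmically many dyadic scales needed to cover $[0,S]$ in the layered integration. Combining the unbiasedness and boundedness of the first two steps with this uniform deviation bound yields the claimed inequality for all $x$ simultaneously with probability $1-\delta$.

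I expect the transition from a fixed $x$ to all $x$ to be the main obstacle: pointwise concentration is routine, but a union bound over the continuum of queries fails, so the argument must be routed through the combinatorial complexity of the induced range space. The delicate bookkeeping is to verify that the level-set ranges $\rng{\mathcal{F}^*}(x,r)$ faithfully encode the weighted functions and that their VC dimension is the quantity $\Delta$ appearing in the sample size, so that the $\varepsilon$-approximation theorem applies with the correct normalization by $S$.
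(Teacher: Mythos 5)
This proposition is not proved in the paper at all: it is quoted, with attribution, as an external result of \citet{FeldmanSS20} (building on \citealp{LangbergS10,FeldmanL11,BravermanFL16}), and the paper only instantiates it. So there is no internal proof to compare against; what you have written is a reconstruction of the argument from the cited literature, and your outline is indeed the standard one: (i) pass to the rescaled functions $h_j = \frac{Su_j}{s_j}f_j$ so that sensitivity sampling gives an unbiased single-draw estimator of $F(x)=\sum_i u_i f_i(x)$; (ii) use $s_j\geq\zeta_j$ to show the normalized summands $h_j(x)/F(x)$ lie in $[0,S]$ uniformly in $x$; (iii) convert pointwise concentration into a uniform statement over all $x$ via the VC dimension of the induced range space and $\varepsilon$-approximation theory. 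Two caveats on your execution. First, in the pointwise step plain Hoeffding over the range $[0,S]$ gives only $|R|=\Omega(S^2/\varepsilon^2)$; to get the claimed linear dependence on $S$ you must use Bernstein (or directly the relative-approximation machinery), exploiting that the summands have mean $1$ and hence variance at most $S$. Second, step (iii) --- the actual heart of the theorem --- is invoked rather than proved: the precise statement needed is the Li--Long--Srinivasan-type sample-complexity bound for relative $(\nu,\varepsilon)$-approximations at scale $\nu=1/S$, which is exactly where the $\Delta\ln S$ term originates (your ``dyadic scales'' remark is the right intuition for its chaining proof, but it is not itself a proof), and one must also note that the common factor $1/|R|$ in the scaling of $\mathcal{F}^*$ leaves the induced range space, and hence $\Delta$, unchanged. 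Since the paper itself treats the entire proposition as a black box, leaning on that classical theorem is a legitimate way to close the argument, but as written your proposal is a correct proof sketch rather than a self-contained proof.
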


In the following section we will show how to compute upper bounds for the sensitivities and VC-dimension for our loss function.
In order to derive those bounds we first need to analyze the loss function $g$ of an individual point.
This will enable us to compute the desired upper bounds for both, the VC-dimension in Section \ref{sec:VC} and the sensitivities in Section \ref{sec:sen}.
Finally we will be able to apply Proposition \ref{thm:sensitivity} to obtain a coreset, and to prove Theorem \ref{thm:mainoverview}.

\section{PROPERTIES OF \texorpdfstring{$g$}{g}}\label{app:prop_g}
In this section we will determine useful properties of $g$.
Recall that for any function $f: \mathbb{R} \rightarrow \mathbb{R}$ the derivative $\frac{d}{dr} \int_{r}^\infty f(t) \,dt$ equals $\lim_{t \rightarrow \infty}f(t)-f(r)$ if the integral is finite.
First we note that for $r \in \mathbb{R}$ with $r\geq 0$
\begin{align*}
g'(r)&= \frac{\varphi_p(r)}{1-\Phi(r)} = \frac{\exp(- |r|^p/p)}{ \int_{r}^\infty  \exp(-|t|^p/p)\,dt}\\
&=\frac{1}{ \exp(|r|^p/p) \int_{r}^\infty  \exp(-|t|^p/p)\,dt}>0 .
\end{align*}
We set $h(r):=\frac{1}{g'(r)}=\exp(|r|^p/p)\int_{r}^\infty  \exp(-|t|^p/p)\,dt$.
Our aim is to characterize the tail behavior of the $p$-generalized normal distribution.
To this end we will first analyze $h$ using similar methods as \citet{Gordon41} who considered the case $p=2$, i.e., the standard normal distribution.

\begin{lem}\label{lem:h-prop}
The following holds for any $r>0$:
\begin{align}
h'(r)&=r^{p-1}h(r)-1; \label{eq:h'}\\
h''(r)&=(p-1)r^{p-2}h(r)+r^{p-1}h'(r);\label{eq:h''}\\
h''(r)&=\frac{r^{p}+p-1}{r}h'(r)+\frac{p-1}{r};\label{eq:h''2}\\
h'''(r)&=\left( 1+ \frac{p}{r^p+p-1}+\frac{p-2}{r^p} \right)r^{p-1}h''(r)
 -\frac{(p-1)pr^{p-2}}{r^p+p-1};\label{eq:h'''}\\
h(r)&>0; \label{eq:h}\\
h'(r)&<0;\label{eq:h'2}\\
h(r)&< \frac{1}{r^{p-1}}\label{eq:hub}\\
\intertext{Further if $r\geq 1$ then it holds that}
h''(r)&\geq 0;\label{eq:h''3}\\
h(r)&\geq \frac{r}{r^p+p-1}.\label{eq:hlb}
\end{align}
\end{lem}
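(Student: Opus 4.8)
The plan is to treat the four differential identities \eqref{eq:h'}--\eqref{eq:h'''} as bookkeeping and the five inequalities as the real content. For \eqref{eq:h'} I would write $h(r)=e^{r^p/p}\,I(r)$ with $I(r)=\int_r^\infty e^{-t^p/p}\,dt$; since $\tfrac{d}{dr}e^{r^p/p}=r^{p-1}e^{r^p/p}$ and, by the fundamental theorem of calculus, $I'(r)=-e^{-r^p/p}$, the product rule gives $h'(r)=r^{p-1}h(r)-1$ immediately. Differentiating this once more yields \eqref{eq:h''}; substituting \eqref{eq:h'} in the form $h(r)=(h'(r)+1)/r^{p-1}$ into \eqref{eq:h''} and collecting the coefficient $r^{p-1}+(p-1)/r=(r^p+p-1)/r$ gives \eqref{eq:h''2}; and differentiating \eqref{eq:h''2}, then re-using \eqref{eq:h''2} to eliminate $h'$ in favour of $h''$, produces \eqref{eq:h'''}. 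These are routine, and I would only double-check the rational coefficient in \eqref{eq:h'''} by clearing denominators.

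\textbf{The easy inequalities.} Inequality \eqref{eq:h} is immediate because $e^{r^p/p}>0$ and the integrand of $I(r)$ is strictly positive. For the upper bound \eqref{eq:hub} I would use the exact identity $\int_r^\infty t^{p-1}e^{-t^p/p}\,dt=e^{-r^p/p}$ (the integrand being $-\tfrac{d}{dt}e^{-t^p/p}$) together with the monotonicity $t^{p-1}\ge r^{p-1}$ for $t\ge r$, strict for $t>r$ when $p>1$; this gives $r^{p-1}I(r)<e^{-r^p/p}$, i.e. $r^{p-1}h(r)<1$, which is exactly \eqref{eq:hub}. Then \eqref{eq:h'2} is nothing but \eqref{eq:h'} combined with \eqref{eq:hub}, since $h'(r)=r^{p-1}h(r)-1<0$. (At the boundary $p=1$ one has $h\equiv 1$, so these hold with equality; the strict statements are for $p>1$.)

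\textbf{The hard part.} The genuine obstacle is the lower bound \eqref{eq:hlb}, equivalently, via \eqref{eq:h''2}, the convexity claim \eqref{eq:h''3} for $r\ge 1$: the naive estimate $t^{-p}\le r^{-p}$ in the integration-by-parts remainder only reproduces the upper bound and cannot close the gap. My preferred route is an ODE comparison. I would set $\ell(r)=\tfrac{r}{r^p+p-1}$ and compute $\ell'(r)-\bigl(r^{p-1}\ell(r)-1\bigr)=\tfrac{p(p-1)}{(r^p+p-1)^2}\ge 0$, so that $\ell$ is a subsolution of the linear relation \eqref{eq:h'}. Hence $m=h-\ell$ satisfies $m'\le r^{p-1}m$, and multiplying by the integrating factor $e^{-r^p/p}$ shows $m(r)e^{-r^p/p}$ is non-increasing; since $h(r),\ell(r)\to 0$ polynomially, $m(r)e^{-r^p/p}\to 0$, and a non-increasing function tending to $0$ at infinity is non-negative, giving $m\ge 0$, i.e. \eqref{eq:hlb} (here in fact for all $r>0$). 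The same device with the supersolution $u(r)=r^{1-p}$ re-proves \eqref{eq:hub} uniformly. Alternatively, and this is presumably why \eqref{eq:h'''} is recorded at all, one proves \eqref{eq:h''3} by a sign argument: by \eqref{eq:h'''}, at any $r_*\ge 1$ with $h''(r_*)=0$ one has $h'''(r_*)=-\tfrac{(p-1)p\,r_*^{p-2}}{r_*^p+p-1}<0$, so $h''$ can only cross zero downward; combined with the asymptotics $h''(r)\sim p(p-1)r^{-(p+1)}>0$ as $r\to\infty$ (coming from the leading term $h(r)\sim r^{1-p}$), this forces $h''\ge 0$ throughout $[1,\infty)$, whence \eqref{eq:hlb} follows by rearranging \eqref{eq:h''}. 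The delicate points to get right are the direction of the integrating-factor monotonicity and the boundary behaviour at $r\to\infty$; everything else is mechanical.
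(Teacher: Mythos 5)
Your proposal is correct, and on the routine parts it coincides with the paper's proof: the identities \eqref{eq:h'}--\eqref{eq:h'''} by direct differentiation and back-substitution, positivity \eqref{eq:h}, the upper bound \eqref{eq:hub} from $\int_r^\infty t^{p-1}e^{-t^p/p}\,dt=e^{-r^p/p}$ together with $t^{p-1}\geq r^{p-1}$, and \eqref{eq:h'2} as the combination of \eqref{eq:h'} and \eqref{eq:hub}. Where you genuinely diverge is the crux \eqref{eq:h''3}/\eqref{eq:hlb}, and your primary route is both valid and different. The paper proves convexity \emph{first}: if $h''(r_0)<0$ for some $r_0\geq 1$, then since the $h''$-coefficient in \eqref{eq:h'''} is positive for $r\geq 1$, one gets $h'''(r_0)<0$, hence $h''$ stays below $h''(r_0)$ on $[r_0,\infty)$, so integrating twice drives $h$ negative, contradicting \eqref{eq:h}; the lower bound \eqref{eq:hlb} is then read off by substituting \eqref{eq:h'} into \eqref{eq:h''}. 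You instead prove \eqref{eq:hlb} \emph{first} by a subsolution/integrating-factor comparison: your computation $\ell'(r)-\bigl(r^{p-1}\ell(r)-1\bigr)=p(p-1)/(r^p+p-1)^2\geq 0$ is correct, so $m=h-\ell$ satisfies $m'\leq r^{p-1}m$, the function $m(r)e^{-r^p/p}$ is non-increasing and tends to $0$, hence $m\geq 0$; convexity \eqref{eq:h''3} then follows from the pointwise equivalence $h''(r)\geq 0 \iff h(r)\geq r/(r^p+p-1)$, which is exactly the substitution of \eqref{eq:h'} into \eqref{eq:h''}. Your route never needs \eqref{eq:h'''} for the inequalities, and it buys a strictly stronger conclusion: both \eqref{eq:hlb} and \eqref{eq:h''3} hold on all of $(0,\infty)$ rather than only $[1,\infty)$, matching Gordon's classical $p=2$ bound. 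Your side remark that at $p=1$ one has $h\equiv 1$, so \eqref{eq:h'2} and \eqref{eq:hub} degenerate to equalities, is a correct caveat that the paper's proof glosses over.

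One caution concerning your alternative ``Route B'' (the sign-crossing argument, which is the one close in spirit to the paper's): the asymptotic $h''(r)\sim p(p-1)r^{-(p+1)}$ does \emph{not} follow from the leading term $h(r)\sim r^{1-p}$ alone, because in $h''=h\bigl((p-1)r^{p-2}+r^{2p-2}\bigr)-r^{p-1}$ the first-order contributions cancel; you would need the two-term expansion $h(r)=r^{1-p}-(p-1)r^{1-2p}+O(r^{1-3p})$ from a second integration by parts. The paper's blow-up-by-contradiction argument avoids any asymptotics at infinity, replacing it with the qualitative fact \eqref{eq:h}. Since your primary Route A is complete, this does not affect the correctness of the proposal.
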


\begin{proof}[Proof of Lemma \ref{lem:h-prop}]
Equations (\ref{eq:h'}) and (\ref{eq:h''}) can be derived by a direct calculation of the derivatives.
Note that (\ref{eq:h'}) is equivalent to
\begin{align}
h(r)=\frac{h'(r)+1}{r^{p-1}}\label{eq:h'r}
\end{align}
Equation (\ref{eq:h''2}) follows by substitution of (\ref{eq:h'r}) in (\ref{eq:h''}).
Equation (\ref{eq:h''}) is equivalent to
\begin{align}
h(r)=\frac{h''(r)}{(p-1)r^{p-2}}-\frac{r}{p-1}h'(r) \label{eq:h''r}
\end{align}
To get (\ref{eq:h'''}) we first note that by (\ref{eq:h'}) and then (\ref{eq:h''r}) it holds
\begin{align}
\frac{p-1}{r^2}h'(r)&=\frac{p-1}{r^2}(r^{p-1}h(r)-1)\notag\\
&=\frac{p-1}{r^2}r^{p-1}h(r)-\frac{p-1}{r^2}\notag\\
&=\frac{h''(r)}{r}-r^{p-2}h'(r)-\frac{p-1}{r^2}. \label{eq:h'help}
\end{align}
Further note that (\ref{eq:h''2}) is equivalent to
\begin{align}
h'(r)=\frac{rh''(r)}{r^p+p-1}-\frac{p-1}{r^p+p-1} \label{eq:h''2r}
\end{align}
Taking the derivative of (\ref{eq:h''2}) and using the equations (\ref{eq:h'help}) and (\ref{eq:h''2r}) we get
\begin{align*}
h'''(r)&=(p-1)r^{p-2}h'(r)-\frac{p-1}{r^2}h'(r)+r^{p-1}h''(r)\\
&\quad +\frac{p-1}{r}h''(r)-\frac{p-1}{r^2}\\
&\stackrel{(\ref{eq:h'help})}{=}p r^{p-2}h'(r)+r^{p-1}h''(r)+\frac{p-2}{r}h''(r)\\
&\stackrel{(\ref{eq:h''2r})}{=}pr^{p-2}\cdot\left( \frac{r}{r^p+p-1}h''(r)-\frac{p-1}{r^p+p-1}\right)\\
&\quad +r^{p-1}h''(r)+\frac{p-2}{r}h''(r)\\
&=\left( 1+ \frac{p}{r^p+p-1}+\frac{p-2}{r^p} \right)r^{p-1}h''(r)\\
&\quad -\frac{(p-1)pr^{p-2}}{r^p+p-1}.
\end{align*}
Equation (\ref{eq:h}) follows since all terms appearing in $h(r)$ are positive.

For (\ref{eq:h'2}) we note that
\begin{align}
r^{p-1} h(r)&= \exp(r^p/p)\int_{r}^\infty  r^{p-1}\exp(-|t|^p/p)\,dt\notag\\
&< \exp(r^p/p)\int_{r}^\infty \frac{p}{p} t^{p-1}\exp(-|t|^p/p)\,dt\notag\\
&=\exp(r^p/p) \cdot \exp(-r^p/p) = 1 \label{eq:h'help2}
\end{align}
and thus (\ref{eq:hub}) follows from dividing by $r^{p-1}$ and (\ref{eq:h'2}) also follows from (\ref{eq:h'help2}) using Equation (\ref{eq:h'}).% and (\ref{eq:h'help2}).

Next we prove (\ref{eq:h''3}):
For $r\geq 1$ it holds that $\left( 1+ \frac{p}{r^p+p-1}+\frac{p-2}{r^p} \right)r^{p-1}>0$.
Now let $r_0\geq 1$. 
Assume for the sake of contradiction that $h''(r_0)< 0$.
Then using (\ref{eq:h'''}) we also get $h'''(r_0)\leq \left( 1+ \frac{p}{r^p+p-1}+\frac{p-2}{r^p} \right)r^{p-1}h''(r_0)< 0$.
Thus we have $h''(r)<h''(r_0)$ for all $r>r_0$.
Consequently $h'$ is also strictly decreasing by a rate of at least $h''(r_0)$ starting at $r_0$.
This implies that there exists $r'>r_0$ with $h(r')<0$, which contradicts (\ref{eq:h}) and thus (\ref{eq:h''3}) follows.
Lastly (\ref{eq:hlb}) follows by substitution of (\ref{eq:h'}) in (\ref{eq:h''}) and using (\ref{eq:h''3}).
\end{proof}

\begin{lem}\label{lem:g-propcopy}[Copy of Lemma \ref{lem:g-prop}]
The function $g$ is convex and strictly increasing.
Further for any $r\geq 0$ we have
\begin{align*}
g'(r) &\geq r^{p-1},
\intertext{for any $r\geq 1$ we have}
g'(r) &\leq r^{p-1}+\frac{p-1}{r}.
\intertext{and there exists a constant $c_1>0$ such that}
g(r) &\geq c_1 e^{-|r|^p/p}
\end{align*}
for any $r<0$.
\end{lem}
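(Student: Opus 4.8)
The plan is to reduce every claim about $g$ to the reciprocal $h=1/g'$ on the positive halfline, whose behaviour is pinned down by Lemma~\ref{lem:h-prop}, together with the explicit first- and second-derivative formulas derived in Appendix~\ref{app:grad_hess}. Recall from there that $g'(r)=\varphi_p(r)/(1-\Phi_p(r))$, whose numerator and denominator are both strictly positive for every $r\in\mathbb{R}$. Hence $g'>0$ everywhere and strict monotonicity is immediate, disposing of the ``strictly increasing'' part of the statement.

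For the two derivative bounds I would simply invert the corresponding bounds on $h$. For $r>0$ the upper bound (\ref{eq:hub}), $h(r)<1/r^{p-1}$, gives $g'(r)=1/h(r)>r^{p-1}$, and the boundary case $r=0$ is checked by hand from $g'(0)=2\varphi_p(0)$, which equals $1=0^{0}$ when $p=1$ and exceeds $0=0^{p-1}$ when $p>1$; together this yields $g'(r)\ge r^{p-1}$ for all $r\ge 0$. Symmetrically, for $r\ge 1$ the lower bound (\ref{eq:hlb}), $h(r)\ge r/(r^p+p-1)$, gives
\[
g'(r)=\frac{1}{h(r)}\le \frac{r^p+p-1}{r}=r^{p-1}+\frac{p-1}{r},
\]
which is exactly the second displayed inequality.

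Convexity I would obtain from the second-derivative formula $g''(r)=g'(r)\bigl(g'(r)-\sgn(r)\,|r|^{p-1}\bigr)$ established in Appendix~\ref{app:grad_hess}. On $r\ge 0$ this reads $g'(r)\bigl(g'(r)-r^{p-1}\bigr)\ge 0$, since $g'>0$ and $g'(r)\ge r^{p-1}$ by the bound just proved; on $r<0$ it reads $g'(r)\bigl(g'(r)+|r|^{p-1}\bigr)>0$, as both factors are positive. Because $g'$ is continuous at $0$ and nondecreasing on each of the two halflines, it is nondecreasing on all of $\mathbb{R}$, so $g$ is convex.

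The genuinely delicate claim, which I expect to be the main obstacle, is the tail estimate for $r<0$; this is the non-asymptotic analogue of Gordon's normal-tail bounds. Here I would first use the symmetry $\Phi_p(-r)=1-\Phi_p(r)$ together with an elementary comparison for $-\ln(1-x)$ to pass from $g(r)=-\ln\Phi_p(-r)$ to the tail probability $\Phi_p(r)$ itself, and then insert the exact integral representation $\Phi_p(r)=\frac{p^{1-1/p}}{2\Gamma(1/p)}\int_{|r|}^{\infty}e^{-t^p/p}\,dt=\frac{p^{1-1/p}}{2\Gamma(1/p)}\,h(|r|)\,e^{-|r|^p/p}$, where $h$ is precisely the function analysed in Lemma~\ref{lem:h-prop}. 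The whole estimate then rests on controlling the slowly varying prefactor $h(|r|)=e^{|r|^p/p}\int_{|r|}^{\infty}e^{-t^p/p}\,dt$, for which (\ref{eq:h}), (\ref{eq:hub}) and (\ref{eq:hlb}) supply two-sided bounds; feeding these in isolates the exponential factor $e^{-|r|^p/p}$ and leaves a multiplicative constant. The hard part is exactly this final step: unlike the derivative inequalities, which are pointwise reciprocals of Lemma~\ref{lem:h-prop}, the tail bound needs the integral representation and the \emph{uniform} non-asymptotic control of $h$ both near the origin and in the far tail, which is where the real work lies.
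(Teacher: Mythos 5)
Your treatment of the first three claims is correct and is essentially the paper's own argument: strict monotonicity from $g'(r)=\varphi_p(r)/(1-\Phi_p(r))>0$; the two derivative bounds by inverting (\ref{eq:hub}) and (\ref{eq:hlb}) (your separate check at $r=0$ is a small point the paper glosses over); and convexity from $g''(r)=g'(r)\bigl(g'(r)-\sgn(r)|r|^{p-1}\bigr)$. The paper instead argues $g''(r)=-h'(r)/h(r)^2>0$ on $r\ge 0$ via (\ref{eq:h'2}), but your variant via the already-proved bound $g'(r)\ge r^{p-1}$ is equally valid and not circular.

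The gap is exactly where you located it, and it cannot be closed along the route you sketch. Plugging (\ref{eq:hlb}) into $g(r)\ge C_p^{-1}\int_{|r|}^\infty e^{-t^p/p}\,dt=C_p^{-1}h(|r|)\,e^{-|r|^p/p}$ does \emph{not} leave a multiplicative constant: it leaves the prefactor $\frac{|r|}{|r|^p+p-1}$, which decays like $|r|^{1-p}$ as $r\to-\infty$ and is bounded below by a positive constant only when $p=1$. This is not an artifact of a loose bound. By (\ref{eq:hub}) and $-\ln(1-x)\le 2x$ for $x\in[0,1/2]$, one gets $g(r)\le 2C_p^{-1}|r|^{1-p}e^{-|r|^p/p}$ for $|r|$ large, so $g(r)\,e^{|r|^p/p}\to 0$ whenever $p>1$; hence no constant $c_1>0$ satisfies the displayed inequality for all $r<0$ unless $p=1$. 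The true Gordon-type conclusion from Lemma \ref{lem:h-prop} is a bound with a polynomial correction, e.g. $g(r)\ge c_1 e^{-|r|^p/p}/(1+|r|^{p-1})$. You should also know that the paper's own proof stumbles at precisely this point: its final step asserts $\frac{-r}{(-r)^p+p-1}\ge \frac{1}{p}$ for $-r\ge 1$, but the left-hand side is decreasing in $-r$ and tends to $0$ for $p>1$, so that inequality holds only at $-r=1$ (or for $p=1$). So your instinct that the ``real work'' lies in this last step was right, but the work required is to weaken the statement (and to propagate the polynomial factor into its uses, notably Lemma \ref{lem:f-bound}), not a cleverer combination of the two-sided bounds on $h$.
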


\begin{proof}[Proof of Lemma \ref{lem:g-prop}/\ref{lem:g-propcopy}]
First note that $g=-\ln(\Phi_p(-r))$ is strictly increasing since $ \Phi(-r) \in (0,1)$ is strictly decreasing for increasing $r$ and $-\ln(t) $ is strictly increasing for decreasing $t$.
Next consider $r\geq 0$.
Then $g''(r)=\left( \frac{1}{h(r)} \right)'=-\frac{h'(r)}{h(r)^2}>0$ by (\ref{eq:h'2}) thus $g$ is convex on $ [0, \infty)$.
For $r<0$ we have derived in Section \ref{app:grad_hess} that
\begin{align*}
g''(r)=g'(r)(g'(r)-\sgn(r)|r|^{p-1}).
\end{align*}
For $r<0$ all terms are positive. Thus $g(r)$ is convex for all $r\in\mathbb{R}$.
The bounds for $g'$ follow immediately by the bounds for $h$ from Lemma \ref{lem:h-prop} (\ref{eq:hub}) and (\ref{eq:hlb}).

Now, for $ r<-1$ using the Taylor series of $-\ln(t)$ at $t=1$, the normalizing constant $C_p = \int_{-\infty}^\infty  \exp(-|t|^p/p)\,dt = \frac{2\Gamma(1/p)}{p^{1-1/p}}$ and Equation (\ref{eq:hlb}) we have
\begin{align*}
g(r)&=-\ln\left(1-C_p^{-1}\int_{-r}^\infty \exp(-|t|^p/p) \,dt\right)\\
&\geq C_p^{-1}\int_{-r}^\infty \exp(-|t|^p/p) \,dt\\
&\geq  C_p^{-1}\exp(-(-r)^p/p)\cdot \frac{r}{r^p+p-1}\geq \frac{\exp(-(-r)^p/p)}{p C_p}.
\end{align*}
For any $r \in [-1, 0]$ we have $ g(r) \geq g(-1)\geq g(-1)\exp(-(-r)^p/p)$.
Thus for $c_1=\min \{ g(-1), 1/(pC_p) \}$ we have  $ g(r) \geq c_1\exp(-(-r)^p/p)$.
\end{proof}

These properties can be used to prove the following lemma:

\begin{lem}\label{lem:gbound}
Set $G^+(r)=\frac{r^p}{p}$ if $r \geq 0$ and $G^+(r)=0$ if $r < 0$.
There exists $c_2>0$ depending only on $p$ such that for any $\varepsilon\in (0, e^{-1})$ and any $r \in \mathbb{R}$ it holds that
\begin{align}
G^+(r) \leq g(r) \leq (1+\varepsilon)G^+(r)+c_2 \ln\left(\frac{p}{\varepsilon}\right). \label{eq:gbounds}
\end{align}
\end{lem}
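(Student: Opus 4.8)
The plan is to prove the two inequalities of \eqref{eq:gbounds} separately. The lower bound $G^+(r)\le g(r)$ is the easy direction. First note that $g(0)=-\ln\Phi_p(0)=\ln 2>0$ and that $g\ge 0$ everywhere because $\Phi_p(-r)\in(0,1]$. For $r<0$ this already gives $g(r)\ge 0=G^+(r)$. For $r\ge 0$ I would integrate the lower bound $g'(s)\ge s^{p-1}$ from Lemma~\ref{lem:g-prop}, obtaining $g(r)=g(0)+\int_0^r g'(s)\,ds\ge \ln 2+\frac{r^p}{p}\ge G^+(r)$.

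For the upper bound I would split $\mathbb{R}$ into the three regions $r<0$, $r\in[0,1)$, and $r\ge 1$. In the first two regions $G^+(r)\ge 0$ and, since $g$ is increasing by Lemma~\ref{lem:g-prop}, $g(r)\le g(1)$, a constant depending only on $p$. Because $\varepsilon<e^{-1}$ and $p\ge 1$ force $\ln(p/\varepsilon)\ge\ln(1/\varepsilon)>1$, such a constant is absorbed into $c_2\ln(p/\varepsilon)$ as soon as $c_2\ge g(1)$. The substantive region is $r\ge 1$, where I would integrate the upper bound $g'(s)\le s^{p-1}+\frac{p-1}{s}$ from $1$ to $r$, giving
\[
g(r)\le g(1)+\int_1^r\Bigl(s^{p-1}+\tfrac{p-1}{s}\Bigr)ds
= g(1)-\tfrac1p+\tfrac{r^p}{p}+(p-1)\ln r.
\]

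The main obstacle, and the key step, is controlling the logarithmic correction $(p-1)\ln r$ against $\frac{r^p}{p}$. I would establish the clean inequality $(p-1)\ln r-\frac{\varepsilon}{p}r^p\le\ln(p/\varepsilon)$ for all $r>0$ by maximizing its left-hand side $\psi(r)$. Setting $\psi'(r)=\frac{p-1}{r}-\varepsilon r^{p-1}=0$ yields $r^*=((p-1)/\varepsilon)^{1/p}$, and substituting back gives the maximum value $\frac{p-1}{p}\bigl(\ln((p-1)/\varepsilon)-1\bigr)$. This is bounded by $\ln((p-1)/\varepsilon)\le\ln(p/\varepsilon)$ whenever $(p-1)/\varepsilon\ge 1$, and is negative (hence trivially below $\ln(p/\varepsilon)>0$) otherwise; the boundary case $p=1$ is immediate since the log term vanishes. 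Feeding this into the integrated estimate gives $g(r)\le(1+\varepsilon)\frac{r^p}{p}+\ln(p/\varepsilon)+\bigl(g(1)-\frac1p\bigr)$.

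It remains to absorb the $p$-dependent constant $g(1)-\frac1p$ into $c_2\ln(p/\varepsilon)$, once more using $\ln(p/\varepsilon)>1$, so that the bound holds with a coefficient $(1+\varepsilon)$ on $G^+(r)=\frac{r^p}{p}$ exactly as required. Taking $c_2$ to be the maximum of the constants arising in the three regions (essentially $\max\{g(1),\,1+(g(1)-\tfrac1p)\}$) then yields a single $c_2>0$ depending only on $p$ for which \eqref{eq:gbounds} holds for every $r\in\mathbb{R}$, completing the proof.
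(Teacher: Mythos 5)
Your proof is correct, and while it shares the paper's overall skeleton (lower bound by integrating $g'(r)\geq r^{p-1}$; upper bound by monotonicity for $r\leq 1$ and integrating $g'(r)\leq r^{p-1}+\frac{p-1}{r}$ for $r\geq 1$), it handles the crucial logarithmic correction by a genuinely different argument. The paper splits the region $r\geq 1$ at a threshold $r_0=p^3/\varepsilon^3$: for $r\in(1,r_0]$ it bounds $(p-1)\ln r\leq 3(p-1)\ln(p/\varepsilon)$ and charges it to the additive term, while for $r>r_0$ it shows $(p-1)\ln r\leq\varepsilon G^+(r)$ via an ad hoc exponential inequality, charging it to the multiplicative factor; this yields $c_2=g(1)+3(p-1)$. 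You instead prove the single global inequality $(p-1)\ln r-\frac{\varepsilon}{p}r^p\leq\ln(p/\varepsilon)$ for all $r>0$ by maximizing the concave function $\psi(r)=(p-1)\ln r-\frac{\varepsilon}{p}r^p$ at $r^*=((p-1)/\varepsilon)^{1/p}$, which accomplishes both absorptions at once; your case analysis at the maximum (including $p=1$ and $(p-1)/\varepsilon<1$) is sound, and the concavity of $\psi$ guarantees $r^*$ is the global maximizer. Your route is cleaner and buys a better constant: $c_2$ is essentially $\max\{g(1),\,1+g(1)-\frac1p\}$, avoiding the $3(p-1)$ blow-up, so the additive term does not degrade linearly in $p$ beyond the dependence through $g(1)$. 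The paper's threshold argument avoids calculus on $\psi$ but pays for it with the worse constant and a less transparent verification step. One cosmetic point: your observation $g(0)=\ln 2$ is a slight strengthening the paper does not need; $g(0)\geq 0$ suffices for the lower bound, exactly as you also note.
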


\begin{proof}[Proof of Lemma \ref{lem:gbound}]
For $r<0$ we have $g(r)>0=G^+(r)$.
For $r \geq 0$ by using Lemma \ref{lem:g-prop} we get
\begin{align*}
g(r)\geq  g(0)+\int_0^r g'(t)\,dt & \geq  g(0)+\int_0^r t^{p-1} \,dt\\
& = g(0)+G^+(r)\geq G^+(r).
\end{align*}
For the second inequality we split the domain of $g$ into three parts:
First since $g$ is monotonically increasing for any $r$, we have $g(r)\leq g(1)$ for $r \in ( -\infty, 1]$.
For $r\geq 1$, by using Lemma \ref{lem:g-prop}, it holds that
\begin{align}
g(r) & \leq g(1)+\int_1^{r}t^{p-1}+\frac{p-1}{t}\,dt\notag\\
 & = g(1)+G^+(r)-\frac{1}{p}+(p-1)\ln(r). \label{eq:gint}
\end{align}
Now consider $r \in (1, r_0]$ where $r_0=\frac{p^3}{\varepsilon^3}$.
Then we have
\begin{align*}
g(r) & \leq g(1)+G^+(r)+(p-1)\ln(r_0)=g(1)+G^+(r)+3(p-1)\ln\left(\frac{p}{\varepsilon}\right)
\end{align*}
Our last step is to show that for $r>r_0$ it holds that $(p-1)\ln(r) \leq \varepsilon G^+(r) $.
We assume without loss of generality that $ \varepsilon^{-1}\geq 2$.
Now the equation
\begin{align*}
\varepsilon G^+(r)=\varepsilon\frac{r^p }{p} \geq (p-1)\ln(r) 
\end{align*}
is equivalent to
\begin{align*}
\exp\left(\frac{\varepsilon r^p}{p^2-p} \right)\geq r.
\end{align*}
Note that $r^p \geq r$ holds since $r\geq r_0>1$ and thus we get for any $r=a r_0 $ with $a\geq 1$ that
\begin{align*}
\exp\left(\frac{\varepsilon r^p}{p^2-p} \right)&\geq \exp\left(\frac{\varepsilon r}{p^2} \right)
\geq \exp\left(\frac{\varepsilon a r_0}{p^2} \right)
\geq \exp\left(\frac{a p}{\varepsilon^2} \right)
\geq \exp\left(2a \cdot\frac{p}{\varepsilon} \right)\geq a r_0 = r.
\end{align*}
The last inequality follows from the fact that $e^{2 a z}\geq a z^3$ always holds in our case where $z\geq 2$ and $a\geq 1$.
Consequently it holds for any $r \in [r_0, \infty)$ that
\begin{align*}
g(r) & \leq g(1)+G^+(r)+(p-1)\ln(r)\leq g(1)+(1+\varepsilon)G^+(r).
\end{align*}
Combining all three inequalities we note that for any $r \in \mathbb{R}$ it holds that
\begin{align*}
g(r) & \leq g(1)+(1+\varepsilon)G^+(r)+(p-1)\ln\left(\frac{p^3}{\varepsilon^3}\right)\\
& = (1+\varepsilon)G^+(r)+\left(\frac{g(1)}{\ln(p/\varepsilon)}+3(p-1)\right)\ln\left(\frac{p}{\varepsilon}\right) \\
& \leq (1+\varepsilon)G^+(r)+c_2\ln\left(\frac{p}{\varepsilon}\right) 
\end{align*}
where $c_2:= (g(1)+3(p-1)) \geq (\frac{g(1)}{\ln(p/\varepsilon)}+3(p-1))$ holds, since $\varepsilon^{-1} \geq e$ and $p\geq 1$.
\end{proof}

\begin{lem}\label{lem:f-boundcopy}[Copy of Lemma \ref{lem:f-bound}]
Assume $X\in\mathbb{R}^{n\times d}$ is $\mu$-complex. Then we have for any $\beta \in \mathbb{R}^d$ that
\begin{align*}
f(X\beta)=\Omega\left(\frac{n}{\mu}\left(1+\ln(\mu)\right)\right).
\end{align*}
\end{lem}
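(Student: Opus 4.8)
The plan is to lower-bound $f(X\beta)=\sum_{i=1}^n g(x_i\beta)$ by splitting the points according to the sign of $x_i\beta$ and then playing the exponentially small contributions of the negative points against the polynomially growing contributions of the positive points, using $\mu$-complexity to tie the two sides together. Fix $\beta$ and set $P=\{i:x_i\beta>0\}$, $N=\{i:x_i\beta<0\}$, $Z=\{i:x_i\beta=0\}$, with $m=|N|$ and $M=\sum_{i\in N}|x_i\beta|^p$. Since $g$ is increasing (Lemma \ref{lem:g-prop}), every $i\in P\cup Z$ satisfies $g(x_i\beta)\ge g(0)=\ln 2$, which already gives the first estimate $f(X\beta)\ge (n-m)\ln 2$.

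The conceptual crux is that $\mu$-complexity is a \emph{two-sided} constraint. Applying Definition \ref{def:mu_complex} to the direction $-\beta$, which swaps the roles of $P$ and $N$ while preserving every magnitude $|x_i\beta|$, yields $\sum_{i\in N}|x_i\beta|^p\le \mu\sum_{i\in P}|x_i\beta|^p$, i.e.\ $\sum_{i\in P}|x_i\beta|^p\ge M/\mu$. (If $N=\emptyset$ the same reasoning forces $P=\emptyset$, so $X\beta=0$ and $f(X\beta)=n\ln 2$ already suffices; hence I may assume $m\ge 1$.) Combining this with the lower bound $g(r)\ge G^+(r)=r^p/p$ for $r\ge 0$ from Lemma \ref{lem:gbound} gives $\sum_{i\in P} g(x_i\beta)\ge \frac1p\sum_{i\in P}|x_i\beta|^p\ge \frac{M}{p\mu}$. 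For the negative points I would invoke the tail bound $g(r)\ge c_1 e^{-|r|^p/p}$ from Lemma \ref{lem:g-prop} together with convexity of $t\mapsto e^{-t/p}$: by Jensen's inequality $\sum_{i\in N}e^{-|x_i\beta|^p/p}\ge m\,e^{-M/(mp)}$, so that $\sum_{i\in N} g(x_i\beta)\ge c_1 m\, e^{-M/(mp)}$.

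Adding the two estimates yields
\[
f(X\beta)\ \ge\ c_1 m\, e^{-M/(mp)}+\frac{M}{p\mu},
\]
and the main technical step is to minimize the right-hand side over $M\ge 0$. A one-line calculus computation shows that the unconstrained minimizer is $M^\ast=mp\ln(c_1\mu)$ with value $\frac{m}{\mu}\bigl(1+\ln(c_1\mu)\bigr)$; when $M^\ast<0$ the constrained minimum is $\phi(0)=c_1 m$, and the elementary inequality $\ln u\le u-1$ (with $u=c_1\mu$) shows $c_1 m\ge \frac{m}{\mu}(1+\ln(c_1\mu))$ as well. Hence in all cases $f(X\beta)\ge \frac{m}{\mu}\bigl(1+\ln(c_1\mu)\bigr)$, and also $f(X\beta)\ge c_1 m$.

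Finally I would combine $f(X\beta)\ge (n-m)\ln 2$ with $f(X\beta)\ge \frac{m}{\mu}(1+\ln(c_1\mu))$. Since $\max\{m,\,n-m\}\ge n/2$, at least one branch is active at scale $n$: if $n-m\ge n/2$ the first branch gives $\Omega(n)$, and if $m\ge n/2$ the second gives $\frac{n}{2\mu}(1+\ln(c_1\mu))$. Here I use that $\mu\ge 1$ always (both the ratio in Definition \ref{def:mu_complex} and its reciprocal appear in the supremum, so the supremum is at least $1$), which lets me reconcile the $(n-m)\ln 2$ branch with the target rate $\frac{n}{\mu}(1+\ln\mu)$ and, via $f\ge c_1 m$ in the small-$\mu$ regime, absorb the constant $c_1$ into the $\Omega$ (with a constant depending only on $p$). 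The hard part is exactly the trade-off in the displayed minimization: balancing the exponentially decaying negative mass against the linearly growing positive mass — an interaction made possible only by the two-sided $\mu$ bound — is what produces the extra $\ln\mu$ factor, and keeping track of the $c_1$ and $\mu\ge 1$ corner cases so that the rate does not silently degrade is where the argument must be carried out with care.
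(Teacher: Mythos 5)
Your proof is correct and rests on the same pillars as the paper's own argument: the two-sided use of $\mu$-complexity (applying Definition \ref{def:mu_complex} in direction $-\beta$), the lower bounds $g(r)\ge c_1e^{-|r|^p/p}$ for $r<0$ and $g(r)\ge r^p/p$ for $r\ge 0$, Jensen's inequality on the exponential side, and a one-dimensional minimization of the exponential-versus-linear trade-off whose critical point at $\ln(c_1\mu)$ produces the $\ln\mu$ factor. The one structural difference is where Jensen is applied. The paper zeroes out the nonnegative coordinates (setting $z_i^-=0$ there, and using $g(0)\ge c_1$ for those points) and applies Jensen over all $n$ points at once, so its trade-off function $nc_1e^{-z_-/p}+\frac{nz_-}{\mu p}$ already carries the factor $n$ and the bound $\frac{n}{\mu}(1+\ln(c_1\mu))$ falls out with no case analysis. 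You apply Jensen only over the $m$ negative points, which yields a bound at scale $m$ and forces the extra combination step via $\max\{m,n-m\}\ge n/2$ against the $(n-m)\ln 2$ branch. That step goes through, but note one local overstatement: the claim that $f(X\beta)\ge c_1 m$ holds \emph{in all cases} is false when $c_1\mu>1$, since there the minimum of your $\phi$ is $\frac{m}{\mu}\bigl(1+\ln(c_1\mu)\bigr)<c_1 m$; fortunately you only invoke it in the small-$\mu$ regime $c_1\mu\le 1$, where it does hold, so the conclusion is unaffected. On the plus side, your handling of the corner cases --- the constrained minimizer at $M=0$ via $\ln u\le u-1$, the degenerate case $N=\emptyset$, the fact that $\mu\ge 1$, and the reconciliation of $1+\ln(c_1\mu)$ with $1+\ln\mu$ when $c_1<1$ --- is more explicit than the paper's proof, which silently glosses over exactly these points.
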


\begin{proof}[Proof of Lemma \ref{lem:f-bound}/\ref{lem:f-boundcopy}]
Let $z=X\beta$.
For $r\leq 0$ we have $g(r) \geq c_1 e^{-r^p/p}$ by Lemma \ref{lem:g-prop}.
For $r\geq 0$ we have $g(r) = g(0)+\int_0^{r}g'(t)\,dt$.
Recall that
\[ g'(t)=\frac{1}{h(t)}\geq t^{p-1} \]
and thus 
\begin{align}
g(r) \geq g(0)+\int_0^{r}t^{p-1}\,dt=g(0)+\frac{r^{p}}{p}. \label{eq:gpos}
\end{align}
Set $z_-=\frac{1}{n}\sum_{z_i\leq 0}|z_i|^p $ and $z_+=\frac{1}{n}\sum_{z_i\geq 0}|z_i|^p \geq \frac{z_-}{\mu}$.
We set $z^- \in \mathbb{R}^n$ to be the vector with $z_i^-=z_i$ if $z_i<0$ and $z_i^-=0$ else.
Using convexity of $e^{-r}$ we can apply Jensens inequality to conclude that
\begin{align*}
f(X\beta) &= \sum_{i=1}^n g(z_i)\\
&=\sum_{i=1}^n \min\{g(z_i), g(0) \} ~+~ \sum_{z_i\geq 0}\int_0^{z_i}t^{p-1}\,dt\\
&\geq \sum_{i=1}^n c e^{-|z_i^-|^p/p} + \frac{1}{p}\sum_{z_i\geq 0}z_i^p\\
&\geq n c_1 e^{-(z_-)/p} + \frac{n z_+}{p}\\
&\geq n c_1 e^{-(z_-)/p} + \frac{n z_-}{\mu p}.
\end{align*}
Taking the derivative of $\ell(r)=n c_1 e^{-(r)/p} + \frac{n r}{\mu p}$, i.e. $\ell'(r)= \frac{n}{p}(-c_1 e^{-(r)/p} + \frac{1}{\mu})$ which is $0$ if $\frac{r}{p}= \ln(c_1\mu)$.
Thus it holds that
\begin{align*}
f(X\beta) \geq \ell(z_-) \geq \frac{n}{\mu}  (1+ \ln(c_1 \mu))
\end{align*}
which is exactly what we needed to show.
\end{proof}

\subsection{Bounding the VC-Dimension}\label{sec:VC}

In order to bound the VC-dimension of the range space induced by the weighted set of functions we need to reduce the number of distinct weights considered.
We first round all sensitivities to their closest power of $2$. The new total sensitivity $S'$ is at most twice the old sensitivity $S$.
Next we increase all sensitivities smaller than $\frac{S}{n}$ to $\frac{S}{n}$.
The new sensitivity is at most $S'+n\cdot S/n = 3S$.
The next step is to split the data into \emph{high sensitivity} points and \emph{low sensitivity} points.

\begin{lem}\label{lem:senssplitcopy}[Copy of Lemma \ref{lem:senssplit}]
Let $I_1$ be the index set of all data points with $s_i>s_0:=\frac{\mu S c \ln(p\varepsilon^{-1})}{ \varepsilon  n }$ for some constant $c \in \mathbb{R}_{>0}$.
Then for all $\beta \in \mathbb{R}^d$ it holds that
\begin{align*}
\sum_{i \in I_1}G^+(x_i\beta) &\leq \sum_{i \in I_1}g(x_i \beta) \leq (1+\varepsilon)\sum_{i \in I_1}G^+(x_i\beta) ~+~ \varepsilon \cdot \frac{n}{\mu}.
\end{align*}
\end{lem}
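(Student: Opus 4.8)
The plan is to derive both inequalities directly from the pointwise sandwich bound of Lemma \ref{lem:gbound}, and then to control the resulting additive error by a simple counting argument on the size of the high-sensitivity index set $I_1$. First I would dispose of the left inequality: Lemma \ref{lem:gbound} already gives $G^+(r) \le g(r)$ for every $r \in \mathbb{R}$, so applying this with $r = x_i\beta$ and summing over $i \in I_1$ yields $\sum_{i\in I_1} G^+(x_i\beta) \le \sum_{i\in I_1} g(x_i\beta)$ with no further work.

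For the right inequality I would again invoke Lemma \ref{lem:gbound}, this time the upper bound $g(r) \le (1+\varepsilon)G^+(r) + c_2\ln(p/\varepsilon)$, at each $r = x_i\beta$ and sum over $i \in I_1$, obtaining
\[
\sum_{i\in I_1} g(x_i\beta) \le (1+\varepsilon)\sum_{i\in I_1} G^+(x_i\beta) + |I_1|\,c_2\ln\left(\frac{p}{\varepsilon}\right).
\]
Everything then reduces to showing that the additive term $|I_1|\,c_2\ln(p/\varepsilon)$ is at most $\varepsilon n/\mu$.

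To bound $|I_1|$ I would use a Markov-type count: every index in $I_1$ carries sensitivity $s_i > s_0$, while the whole collection satisfies $\sum_{i\in I_1} s_i \le \sum_{i=1}^n s_i = S$. Hence $|I_1|\,s_0 < S$, i.e. $|I_1| < S/s_0$. Substituting the definition $s_0 = \frac{\mu S c \ln(p\varepsilon^{-1})}{\varepsilon n}$ causes the factors $S$ and $\ln(p/\varepsilon)$ to cancel, leaving $|I_1| < \frac{\varepsilon n}{\mu c \ln(p/\varepsilon)}$. Multiplying through by $c_2\ln(p/\varepsilon)$ cancels the remaining logarithm and gives $|I_1|\,c_2\ln(p/\varepsilon) < \frac{c_2}{c}\cdot\frac{\varepsilon n}{\mu}$. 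Choosing the free constant $c$ in the threshold $s_0$ so that $c \ge c_2$ makes $c_2/c \le 1$ and produces exactly the required bound $|I_1|\,c_2\ln(p/\varepsilon) \le \varepsilon n/\mu$, completing the argument.

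I do not expect a genuine obstacle here, since the whole proof is a summation of Lemma \ref{lem:gbound} followed by the counting estimate. The one point that deserves care is that the threshold $s_0$ has been reverse-engineered precisely so that the two $\ln(p/\varepsilon)$ factors and the two $S$ factors cancel, and so that the leftover constant $c_2/c$ is absorbed by taking $c$ large enough; the only bookkeeping detail is to fix $c$ consistently with the constant $c_2$ furnished by Lemma \ref{lem:gbound}.
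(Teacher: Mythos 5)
Your proposal is correct and follows essentially the same route as the paper: sum the pointwise bounds of Lemma \ref{lem:gbound} over $I_1$, bound $|I_1|\leq S/s_0$ by the counting argument, and let the definition of $s_0$ cancel the $S$ and $\ln(p/\varepsilon)$ factors. The paper simply fixes $c=c_2$ where you allow any $c\geq c_2$, which is an immaterial difference.
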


\begin{proof}[Proof of Lemma \ref{lem:senssplit}/\ref{lem:senssplitcopy}]
We set $c=c_2$ as in Lemma \ref{lem:gbound}.
Note that there are at most $\frac{S}{s_0}=\frac{\varepsilon  n}{c \ln(p\varepsilon^{-1})\mu}$ points in $I_1$.
Thus the lemma follows by applying Lemma \ref{lem:gbound} to each point in $I_1$.
\end{proof}

As a consequence we get the following corollary:

\begin{cor}\label{cor:senssplit}
Let $I_2=[n]\setminus I_1$. Further let $(X', w)\in \mathbb{R}^{n'\times d}\times \mathbb{R}^{n'} $ with rows $x_i'=x_{\pi(i)}$ for some mapping $\pi: [n'] \rightarrow [n]$.
We set $I_1'=\{i \in [n'] ~|~ \pi(i)\in I_1 \}$ and similarly $I_2'=\{i \in [n'] ~|~ \pi(i)\in I_2 \}$.
Further define $\tilde{f}_w(X'\beta)=\sum_{i \in I_2'}w_i g(x_i'\beta) ~+~ \sum_{i \in I_1'}w_i G^+(x_i'\beta) $ and by $\tilde{f}(X\beta)=\sum_{i \in I_2}g(x_i\beta) ~+~ \sum_{i \in I_1} G^+(x_i\beta) $.
Assume that for all $\beta\in \mathbb{R}^d$ it holds
\begin{align}\label{ass}
    | \tilde{f}_w(X'\beta)- \tilde{f}(X\beta) |\leq \varepsilon \tilde{f}(X\beta)
\end{align}
and $\sum_{i \in I_1'}w_i\leq \frac{2S}{s_0}$.
Further assume that $\varepsilon\leq \frac{1}{4}$.
Then $(X', w)$ is a $7\varepsilon$-coreset for the original $f$.
\end{cor}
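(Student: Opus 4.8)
The plan is to bridge between four quantities — the true loss $f$, the true substitute $\tilde f$, the subsampled original $f_w$, and the subsampled substitute $\tilde f_w$ — exploiting that a coreset guarantee is available only for $\tilde f$ via Assumption (\ref{ass}). The two ends of the bridge are the replacement of $g$ by $G^+$ on the high-sensitivity indices, carried out once on the full data set and once on the sample; the middle is the assumed approximation of $\tilde f$ by $\tilde f_w$.

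First I would compare $f$ and $\tilde f$ on the full data. Since $g\ge G^+$ pointwise by Lemma \ref{lem:gbound}, we immediately get $\tilde f(X\beta)\le f(X\beta)$, and the upper half of Lemma \ref{lem:senssplit} gives $f(X\beta)-\tilde f(X\beta)=\sum_{i\in I_1}\big(g(x_i\beta)-G^+(x_i\beta)\big)\le \varepsilon\sum_{i\in I_1}G^+(x_i\beta)+\varepsilon\frac{n}{\mu}$. Using $\sum_{i\in I_1}G^+(x_i\beta)\le \tilde f(X\beta)$ (all summands are nonnegative) and charging the additive $\frac{n}{\mu}$ against $f(X\beta)$ through Lemma \ref{lem:f-bound} (which for $\mu\ge 1$ gives $\frac{n}{\mu}=O(f(X\beta))$, the implied constant being folded into the final constant as in the proof of Theorem \ref{thm:mainoverview}), this yields $(1-2\varepsilon)\,f(X\beta)\le \tilde f(X\beta)\le f(X\beta)$.

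Second I would compare $f_w$ and $\tilde f_w$ on the sample, where $f_w(X'\beta)-\tilde f_w(X'\beta)=\sum_{i\in I_1'}w_i\big(g(x_i'\beta)-G^+(x_i'\beta)\big)\ge 0$. Applying the pointwise bound of Lemma \ref{lem:gbound} to each sampled high-sensitivity point gives $f_w(X'\beta)-\tilde f_w(X'\beta)\le \varepsilon\,\tilde f_w(X'\beta)+c_2\ln(p/\varepsilon)\sum_{i\in I_1'}w_i$. The crucial cancellation is that the assumed weight bound $\sum_{i\in I_1'}w_i\le \frac{2S}{s_0}$ together with the definition $s_0=\frac{\mu S c_2\ln(p\varepsilon^{-1})}{\varepsilon n}$ collapses the additive term exactly to $\frac{2\varepsilon n}{\mu}$, which Lemma \ref{lem:f-bound} again charges against $f(X\beta)$; hence $\tilde f_w(X'\beta)\le f_w(X'\beta)\le (1+\varepsilon)\tilde f_w(X'\beta)+2\varepsilon\,f(X\beta)$.

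Finally I would chain the three estimates. Assumption (\ref{ass}) gives $(1-\varepsilon)\tilde f(X\beta)\le \tilde f_w(X'\beta)\le (1+\varepsilon)\tilde f(X\beta)$; composing with the full-data comparison bounds $\tilde f_w(X'\beta)$ between $(1-\varepsilon)(1-2\varepsilon)f(X\beta)$ and $(1+\varepsilon)f(X\beta)$; and the sample comparison then sandwiches $f_w(X'\beta)$ between $(1-\varepsilon)(1-2\varepsilon)f(X\beta)$ and $\big((1+\varepsilon)^2+2\varepsilon\big)f(X\beta)$. Using $\varepsilon\le\frac14$ to absorb the quadratic terms, both deviations fit inside $7\varepsilon\,f(X\beta)$, which is the claimed coreset property. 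The main obstacle is the second step: because the reweighting can inflate individual contributions, the additive error on the sample scales with the \emph{total} sampled weight $\sum_{i\in I_1'}w_i$ rather than with a point count, so the whole argument hinges on the interplay between the weight hypothesis and the precise choice of the threshold $s_0$, which is exactly what makes this additive term charge cleanly against the lower bound of Lemma \ref{lem:f-bound}.
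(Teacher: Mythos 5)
Your proposal is correct and follows essentially the same route as the paper's proof: both compare $f$ with $\tilde f$ on the full data via Lemma \ref{lem:senssplit} and Lemma \ref{lem:f-bound}, compare $f_w$ with $\tilde f_w$ on the sample via the pointwise bound of Lemma \ref{lem:gbound} combined with the weight hypothesis and the choice of $s_0$ (which collapses the additive term to $2\varepsilon n/\mu$), and bridge the middle with Assumption (\ref{ass}). The only cosmetic difference is that you chain multiplicative sandwich bounds while the paper organizes the same three estimates through a triangle inequality, arriving at the same $7\varepsilon$ guarantee.
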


\begin{proof}[Proof of Corollary \ref{cor:senssplit}]
Observe that by triangle inequality
\begin{align}\label{eqn:triangle}
    | f_w(X'\beta)- f(X\beta) |
    &\leq |f_w(X'\beta)-\tilde{f}_w(X'\beta)| + | \tilde{f}_w(X'\beta)- \tilde{f}(X\beta)| + |\tilde{f}(X\beta)-f(X\beta)|
\end{align}
By Lemma \ref{lem:senssplit} it holds that
\begin{align*}
    \tilde{f}(X\beta) \leq f(X\beta) &\leq \tilde{f}(X\beta)+\varepsilon \sum_{i \in I_1}G^+(x_i\beta)+ \varepsilon \cdot \frac{n}{\mu} \\
    &\leq \tilde{f}(X\beta)+2\varepsilon f(X\beta)
\end{align*}
We thus have that
\begin{align*}
    |\tilde{f}(X\beta)-f(X\beta)|\leq 2\varepsilon f(X\beta).
\end{align*}
Analogously to Lemma \ref{lem:senssplit}, using the bounded size of $\sum_{i \in I_1'}w_i$ and the assumption (\ref{ass}) one can show that
\begin{align*}
    \tilde{f}_w(X'\beta) \leq f_w(X'\beta) &\leq \tilde{f}_w(X'\beta)+\varepsilon \sum_{i \in I_1'}w_iG^+(x_i'\beta)+ \sum_{i \in I_1'}w_i \cdot c_2\ln\left(\frac{p}{\varepsilon}\right) \\
    &\leq \tilde{f}_w(X'\beta) + \varepsilon \tilde{f}_w(X'\beta)+\frac{2S}{s_0}\cdot c_2\ln\left(\frac{p}{\varepsilon}\right)\\
    & \stackrel{(\ref{ass})}{\leq} \tilde{f}_w(X'\beta) + \varepsilon (1+\varepsilon) \tilde{f}(X\beta)+2\varepsilon \frac{n}{\mu}\\
    &\leq \tilde{f}_w(X'\beta) + 2\varepsilon \tilde{f}(X\beta)+2\varepsilon f(X\beta)\\
    &\leq \tilde{f}_w(X'\beta) + 4\varepsilon f(X\beta)
\end{align*}    
and thus we have
\begin{align*}
    |f_w(X'\beta)- \tilde{f}_w(X'\beta)|\leq 4\varepsilon f(X\beta).
\end{align*}

Now combining everything into Equation (\ref{eqn:triangle}) yields
\begin{align*}
 | f_w(X'\beta)- f(X\beta) | &\leq |f_w(X'\beta)-\tilde{f}_w(X'\beta)| + | \tilde{f}_w(X'\beta)- \tilde{f}(X\beta)| + |\tilde{f}(X\beta)-f(X\beta)| \\
 &\leq 4\varepsilon f(X\beta) + \varepsilon \tilde{f}(X\beta) + 2\varepsilon f(X\beta)\\
 &\leq 7\varepsilon f(X\beta)
\end{align*}
and thus $(X', w)$ is a $7\varepsilon$-coreset.
\end{proof}

Before we continue showing that for the set of functions that we consider, the VC-dimension is not too large, we show that the assumption made in Corollary \ref{cor:senssplit} that $\sum_{i \in I_1'}w_i\leq \frac{2S}{s_0}$ is reasonable, i.e., that it holds with high probability in our context:

\begin{lem}\label{lem:I1ass}
Assume, as in the context of Proposition \ref{thm:sensitivity}, that for $R$ with $|R|=k$ where each element of $R$ is sampled i.i.d. with probability $p_j=\frac{s_j}{S}$ from $\mathcal{F}$ and $w_i = \frac{S}{s_j|R|}=\frac{1}{k p_j}$ denotes the weight of a function $f_i\in R$ that corresponds to $f_j\in\mathcal{F}$.
Then with probability at least $1-\frac{1}{k}$ it holds that $\sum_{i \in I_1'}w_i\leq  \frac{2S}{s_0}$.
\end{lem}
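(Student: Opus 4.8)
The plan is to bound the random quantity $\sum_{i\in I_1'} w_i$ directly, exploiting that the inverse-propensity weights attached to high-sensitivity points are uniformly small. Recall from Proposition~\ref{thm:sensitivity} that each of the $k=|R|$ draws is i.i.d.\ with $\Pr[\pi(i)=j]=p_j=s_j/S$, that the resulting weight of the $i$-th sample is $w_i=\frac{S}{s_{\pi(i)}\,k}$, and that $I_1'=\{i\in[k]:\pi(i)\in I_1\}$.

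First I would record the mean as a sanity check: conditioning on where each draw lands,
\[
\E\Big[\sum_{i\in I_1'} w_i\Big]=\sum_{i=1}^{k}\sum_{j\in I_1} p_j\cdot\frac{S}{s_j k}=\sum_{i=1}^{k}\sum_{j\in I_1}\frac{1}{k}=|I_1|,
\]
and, exactly as in the proof of Lemma~\ref{lem:senssplit}, $S=\sum_{j=1}^{n}s_j\geq\sum_{j\in I_1}s_j>|I_1|\,s_0$ yields $|I_1|<S/s_0$. So even the expectation already lies below the target $2S/s_0$.

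The decisive step, which removes any need for a concentration inequality, is that the \emph{same} bound holds deterministically. For every sampled index $i\in I_1'$ we have $\pi(i)\in I_1$, hence $s_{\pi(i)}>s_0$ by the definition of $I_1$, so $w_i=\frac{S}{s_{\pi(i)}k}<\frac{S}{s_0 k}$. Since $I_1'\subseteq[k]$ has at most $k$ elements, summing gives
\[
\sum_{i\in I_1'} w_i<k\cdot\frac{S}{s_0 k}=\frac{S}{s_0}\leq\frac{2S}{s_0}
\]
regardless of the random outcome; in particular the stated event holds with probability $1\geq 1-\tfrac1k$.

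The only thing to be careful about — and essentially the sole place the argument could break — is that the threshold $s_0$ defining membership in $I_1$ must refer to the very same sensitivity value $s_j$ that enters the weight $w_i=S/(s_{\pi(i)}k)$, so that the two cancel as $w_i<S/(s_0k)$; this is guaranteed by the convention of Proposition~\ref{thm:sensitivity}. Because the deterministic bound already matches (indeed beats) $2S/s_0$, no tail bound is required, and the probabilistic phrasing ``with probability $1-1/k$'' is automatically satisfied. A purely probabilistic route via Markov applied to the expectation $|I_1|<S/s_0$ would only deliver constant failure probability, so the deterministic observation is what makes the clean high-probability statement effortless.
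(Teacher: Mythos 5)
Your proof is correct, and it takes a genuinely different---and in fact stronger---route than the paper's. The paper sets $Z_i=w_{\pi(i)}$ if $\pi(i)\in I_1$ and $Z_i=0$ otherwise, computes $\E[Z_i]=|I_1|/k$, bounds the variance of each draw by $S^2/(s_0^2k^2)$ using the same inequality $s_j>s_0$ that you use, and then applies Chebyshev's inequality to $Z=\sum_{i=1}^k Z_i$ to conclude $\Pr[Z\geq 2S/s_0]\leq 1/k$. You instead apply that inequality pointwise to each summand: every sampled $i\in I_1'$ has $w_i=S/(s_{\pi(i)}k)<S/(s_0k)$, and $|I_1'|\leq k$ trivially, so $\sum_{i\in I_1'}w_i<S/s_0\leq 2S/s_0$ holds surely, with no concentration argument at all. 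Both proofs hinge on the same cancellation (large sensitivity implies proportionally small weight); you exploit it deterministically, while the paper uses it only inside a second-moment bound. Your route buys a stronger statement (failure probability $0$ and constant $1$ rather than $2$) with less machinery, and it would even let the invocation of Lemma~\ref{lem:I1ass} in the proof of Theorem~\ref{thm:mainoverview} drop one failure event from the union bound. The only thing the concentration route could offer in principle is a bound close to the expectation $|I_1|$, which can be much smaller than $S/s_0$ when the sensitivities in $I_1$ greatly exceed $s_0$; but the lemma's target is $2S/s_0$, so that extra precision is never needed here. Your closing caveat is also the right one: the argument stands or falls on the fact that the $s_j$ entering the weight and the threshold $s_0$ defining $I_1$ refer to the same quantities, which is exactly the convention shared by Proposition~\ref{thm:sensitivity} and Lemma~\ref{lem:senssplit}.
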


\begin{proof}[Proof of Lemma \ref{lem:I1ass}]
Let $x_{\pi(i)}$ be the $i$th element of $R$.
We set $Z_i=w_{\pi(i)}$ if $\pi(i)\in I_1$ and $Z_i=0$ otherwise. 
Then $\lambda=\mathbb{E}(Z_i)=\sum_{j \in I_1}p_j w_j=\sum_{j \in I_1}p_j\frac{1}{kp_j}=\frac{|I_1|}{k}$.
Recall from Lemma \ref{lem:senssplit} that $|I_1|\leq \frac{S}{s_0}$ and for $j\in I_1$ we have $s_j>s_0$.
For the variance it follows
\begin{align*}
\mathbb{E}[(Z_i-\lambda)^2] &= \mathbb{E}[Z_i^2] - \mathbb{E}[Z_i]^2 \leq \mathbb{E}[Z_i^2] = \sum_{j \in I_1}p_j w_j^2 =\sum_{j \in I_1}\frac{1}{k^2 p_j}\leq \sum_{j \in I_1}\frac{S}{k^2 s_0} = |I_1| \cdot \frac{S}{k^2 s_0} \leq \frac{S^2}{s_0^2k^2}\,.
\end{align*}
Thus by independence of the $Z_i$ the variance of $Z=\sum_{i=1}^k Z_i$ is bounded by $\frac{S^2}{s_0^2 k}$.
Now applying Chebyshev's inequality yields
\begin{align*}
P\left(Z\geq 2\cdot \frac{S}{s_0}\right) \leq P\left(Z-\mathbb{E}(Z)\geq \frac{S}{s_0}\right)\leq \frac{\mathrm{Var}(Z)}{S/s_0} \leq \frac{S^2/(s_0^2k)}{S^2/s_0^2}=\frac{1}{k}.
\end{align*}
\end{proof}

By the technical Corollary \ref{cor:senssplit} our goal of obtaining a coreset for $f$ reduces to obtaining a coreset for the substitute function $$\tilde{f}(X\beta)=\sum_{i\in [n] \setminus I_1}g(x_i\beta) + \sum_{i\in I_1}G^+(x_i\beta).$$
To this end we set $\mathcal{F}_1=\{ w_i G^+_i ~|~ i \in I_1 \}$ where $G^+_i(\beta)=G^+(x_i\beta)$ and $\mathcal{F}_2=\{ w_i g_i ~|~ i \in I_2=[n]\setminus I_1  \}$ where $g_i(\beta)=g(x_i\beta)$.
Further we set $\mathcal{F}=\mathcal{F}_1 \cup \mathcal{F}_2$ and show that the VC-dimension of $\mathcal{F}$ can be bounded as desired:

\begin{lem}\label{lem:VCdimcopy}[Copy of Lemma \ref{lem:VCdim}]
For the VC-dimension $\Delta$ of $ \mathfrak{R}_{\mathcal{F}}$ we have
\begin{align*}
    \Delta &\leq (d+1) \left(\log_2\left( \mu  c \varepsilon^{-2}\right)+2 \right)={O}(d\log({\mu}/{\varepsilon})).
\end{align*}
\end{lem}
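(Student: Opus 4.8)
The plan is to reduce the range space $\mathfrak{R}_{\mathcal{F}}$ to a small union of halfspace range spaces, exploiting that both $g$ and $G^+$ are strictly increasing in the linear argument $x_i\beta$, so that every range cuts each weight class along a halfspace. Since $\mathcal{F}_1$ (the $G^+$ pieces on $I_1$) and $\mathcal{F}_2$ (the $g$ pieces on $I_2$) are supported on disjoint index sets, I would analyze them separately and combine at the very end. Recall that a range of $\mathfrak{R}_{\mathcal{F}}$ for a query $(\beta,r)$ is the set of functions $f_i$ with $f_i(\beta)\geq r$, and that all weights were rounded to powers of two and floored at $S/n$.

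The key step, and what I expect to be the main obstacle, is handling $\mathcal{F}_1$ \emph{independently of the number of distinct weights} occurring among the high-sensitivity points. For a query with $r>0$, the condition $w_i G^+(x_i\beta)\geq r$ is equivalent, after dividing by $w_i/p$ and taking $p$-th roots, to $x_i\beta \geq (pr)^{1/p}\, w_i^{-1/p}$. Writing $\rho:=(pr)^{1/p}$ and lifting each point to $\hat{x}_i:=(x_i,-w_i^{-1/p})\in\mathbb{R}^{d+1}$ and the parameter to $\hat\beta:=(\beta,\rho)$, this reads $\langle \hat{x}_i,\hat\beta\rangle\geq 0$, a homogeneous halfspace in $\mathbb{R}^{d+1}$; queries with $r\leq 0$ only yield the trivial full range. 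Hence $\mathfrak{R}_{\mathcal{F}_1}$ embeds into homogeneous halfspaces in $\mathbb{R}^{d+1}$ and has VC dimension at most $d+1$. This power-function lift is exactly why we may use $G^+$ on the high-sensitivity block (where many weights occur) instead of $g$.

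For $\mathcal{F}_2$ the lift fails because $g$ is not a pure power, so I would instead group the functions by their weight value. Since $w_i\propto 1/s_i$, the points in $I_2$ have $s_i\in[S/n,s_0]$, a multiplicative window of width $s_0/(S/n)=\mu c\ln(p\varepsilon^{-1})/\varepsilon$; as the sensitivities are powers of two, there are at most $|W|\leq \log_2\!\big(\mu c\ln(p\varepsilon^{-1})/\varepsilon\big)+1\leq \log_2(\mu c\varepsilon^{-2})+1$ distinct weights, absorbing $\ln(p\varepsilon^{-1})=O(\varepsilon^{-1})$ into the constant. Within one weight class, $w\,g(x_i\beta)\geq r \iff x_i\beta\geq g^{-1}(r/w)$, as $g$ is a strictly increasing bijection onto $(0,\infty)$; this is an affine halfspace of VC dimension $d+1$ (and $r\leq 0$ gives the full range).

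Finally I would combine the blocks. For a fixed query $(\beta,r)$ the induced range decomposes over the disjoint groups, namely the single $\mathcal{F}_1$ block together with the $\leq|W|$ weight classes of $\mathcal{F}_2$, into one halfspace per group, all sharing the same $\beta$ and with offsets determined by the single scalar $r$. This is a special instance of the \emph{independent product} range space in which each group may receive an arbitrary (homogeneous, in the lifted coordinates) halfspace in $\mathbb{R}^{d+1}$. Over disjoint ground sets a set is shattered by such a product precisely when each block is shattered by its own halfspace class, so the product has VC dimension equal to the sum of the per-group bounds, $(d+1)(1+|W|)$. Because $\mathfrak{R}_{\mathcal{F}}$ is a sub-system of this product, its VC dimension is at most $(d+1)(1+|W|)\leq (d+1)\big(\log_2(\mu c\varepsilon^{-2})+2\big)$, as claimed. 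The sharpness of this last product argument is what avoids the spurious $\log|W|$ factor that a generic union-of-range-spaces bound would contribute, and is the reason the final bound is $O(d\log(\mu/\varepsilon))$ rather than $O(d\log n)$.
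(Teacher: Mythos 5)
Your proof is correct and, in its overall architecture, matches the paper's: split $\mathcal{F}$ into the block $\mathcal{F}_1$ of weighted $G^+$ functions and the weight classes $\mathcal{F}_v$ of $\mathcal{F}_2$, bound each block by $d+1$ via halfspace classifiers (using that $g$ is a strictly increasing bijection onto $(0,\infty)$ within a fixed weight class), count $|W|\leq \log_2(\mu c\varepsilon^{-2})+1$ weight classes from the power-of-two rounding and the sensitivity window $[S/n,s_0]$, and add the per-block bounds over the disjoint blocks --- your ``independent product'' argument is exactly the paper's pigeonhole step (which cites Lemma 11 of \citealp{MunteanuSSW18}), and both correctly avoid any spurious $\log|W|$ factor. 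The one point of genuine divergence is the treatment of $\mathcal{F}_1$, and there your argument is \emph{more} careful than the paper's. The paper rewrites the range as $\{i : x_i\beta \geq (pr/w_i)^{1/p}\}$ and then bounds the number of induced subsets by those of affine classifiers $x_i\beta - s\geq 0$ with a threshold $s$ that is \emph{uniform} over $i$; since the true threshold varies with $w_i$, that comparison is not literally valid (two equal rows $x_i=x_j$ with $w_i\neq w_j$ can be separated by the weighted ranges but never by a uniform threshold), although the paper's conclusion for $\mathcal{F}_1$ is still true. Your lift $\hat{x}_i=(x_i,-w_i^{-1/p})$, $\hat\beta=(\beta,(pr)^{1/p})$ repairs precisely this gap: every range of $\mathfrak{R}_{\mathcal{F}_1}$ (including the full range for $r\leq 0$, realized by $\hat\beta=0$) is induced by a homogeneous halfspace in $\mathbb{R}^{d+1}$, a class of VC dimension $d+1$, so the block bound holds with the heterogeneous weights fully accounted for. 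This is a local strengthening rather than a different route, and your final bound $(d+1)(|W|+1)\leq (d+1)\left(\log_2(\mu c\varepsilon^{-2})+2\right)$ coincides with the paper's.
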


\begin{proof}[Proof of Lemma \ref{lem:VCdim}/\ref{lem:VCdimcopy}]
First note that for any $G \subseteq \mathcal{F}_1 $, $\beta \in \mathbb{R}^d$ and $ r \in \mathbb{R}$ it holds that $\rng{\mathcal{F}_1}(\beta ,r) \cap G=\rng{G}(\beta ,r)$.
We show that the VC-dimension of $\mathfrak{R}_{\mathcal{F}_1}$ is at most $d+1$. Indeed, it holds that $\rng{\mathcal{F}_1}(\beta ,r) =\mathcal{F}_1$ if $r\leq 0$ since all weights are positive and $G^+$ is also positive. Otherwise we have that
\begin{align*}
\rng{\mathcal{F}_1}(\beta ,r)&=\{ w_iG^+_i\in \mathcal{F}_1 ~|~ w_i G^+_i(\beta)\geq r \}\\
&= \{ w_iG^+_i\in \mathcal{F}_1 ~|~ w_i( x_i\beta)^p/p\geq r \wedge x_i\beta>0 \}\\
%&= \{ w_iG^+_i\in \mathcal{F}_1 ~|~ \frac{w_i^{1/p}x_i}{p^{1/p}}\beta \geq r^{1/p} \}\\
&= \left\{ w_iG^+_i\in \mathcal{F}_1 ~|~ x_i\beta \geq \left(\frac{p r}{w_i}\right)^{1/p}  \right\}.
\end{align*}
We conclude that for $G \subseteq \mathcal{F}_1$ it holds that
\begin{align*}
    |\{G\cap R\mid R\in \ranges(\mathcal{F}_1) \}| &=|\{ \rng{G}(\beta ,r) ~|~ \beta \in \mathbb{R}^d, r \in \mathbb{R}_{> 0}\} \cup \{ \rng{G}(\beta ,r) ~|~ \beta \in \mathbb{R}^d, r \in \mathbb{R}_{\leq 0}\}|\\
    &=\left|\left\{\left\{ w_iG^+_i\in G \mid x_i \beta\geq\left({p r}/{w_i}\right)^{1/p} \right\} \mid \beta \in \mathbb{R}^d, r \in \mathbb{R}_{\geq 0}\right\}\cup \{ G \}\right|\\
    &\leq |\{\{ w_iG^+_i\in G \mid x_i \beta - s \geq 0 \} ~|~ \beta \in \mathbb{R}^d, s \in \mathbb{R}\}|
\end{align*}
which corresponds to a set of affine hyperplane classifiers $\beta \mapsto \mathbf{1}_{x_i \beta - s \geq 0}$, which have VC-dimension $d+1$ \citep{KearnsV94}. Thus, the induced range space $\mathfrak{R}_{\mathcal{F}_1}$ has VC-dimension at most $d+1$.\\
Next consider $\mathcal{F}_2$.
Note that $g$ is a strictly monotonic and thus also invertible function.
First fix a weight $v\in \mathbb{R}_{>0 }$ and let $ \mathcal{F}_v=\{ w_i g_i ~|~ w_i=v \}$.
We have
\begin{align*}
\rng{\mathcal{F}_v}(\beta ,r)&=\{ w_ig_i\in \mathcal{F}_v ~|~ w_ig_i(\beta)\geq r \}\\
&= \left\{ w_ig_i\in \mathcal{F}_v ~|~ x_i\beta \geq g^{-1}\left(\frac{r}{v}\right) \right\}
\end{align*}
which corresponds to a set of points shattered by the affine hyperplane classifier $\beta \mapsto \mathbf{1}_{x_i\beta-g^{-1}\left(\frac{r}{v}\right)\geq 0}$ and thus the VC-dimension of the induced range space $\mathfrak{R}_{\mathcal{F}_v}$ is at most $d+1$.
Let $W$ be the set of all weights for functions in $\mathcal{F}_2$.
Since all weights are powers of $2$, and we have $\frac{S}{n} \leq v \leq \frac{\mu S c \ln(p\varepsilon^{-1})}{ \varepsilon  n }$ it holds that$|W|\leq \log_2(\frac{\mu  c \ln(p \varepsilon^{-1})}{ \varepsilon  })\leq \left(\log_2\left( \mu c \varepsilon^{-2}\right)+2 \right)$.
Now we claim that the VC-dimension of $\mathfrak{R}_{\mathcal{F}}$ is at most $ (|W|+1)(d+1)$ as $\mathcal{F}=\mathcal{F}_1\cup \bigcup_{v \in W}\mathcal{F}_v $.
Assume for the sake of contradiction that there exists $G \subset \mathcal{F}$ such that $|G|>(|W|+1)(d+1)$ and $G$ is shattered by the ranges of $\mathcal{F}$.
Then by the pigeonhole principle $G'=G \cap \mathcal{F}' > d+1$ for some $\mathcal{F}' \in \{ {F}_1 \} \cup \bigcup_{v \in W}\{\mathcal{F}_v\}$. But due to the pairwise disjointness of all of $\mathcal{F}_1$ and $\mathcal{F}_v$, $G'$ must be shattered by the ranges of $\mathcal{F}'$, which contradicts that their VC-dimension is bounded by $d+1$, cf. Lemma 11 in \citep{MunteanuSSW18}.
\end{proof}

\subsection{Bounding the Sensitivities}\label{sec:sen}

We define the $\ell_p$-leverage scores of $X$ by $u_j=\sup_{\beta \in \mathbb{R}^d\setminus \{0\}}\frac{\vert x_j\beta \vert^p}{\sum_{i=1}^n \vert x_i \beta \vert^p}$, cf. \citep{DasguptaDHKM09}.
We note that the supremum is attained by some $\beta \in \mathbb{R}^d$ since
\begin{align*} 
\sup_{\beta \in \mathbb{R}^d\setminus \{0\}}\frac{\vert x_j\beta \vert^p}{\sum_{i=1}^n \vert x_i \beta \vert^p}
&=\sup_{\beta \in \mathbb{R}^d\setminus \{0\}}\frac{\|\beta\|_2^{p} \cdot \vert x_j\beta/\|\beta\|_2 \vert^p }{\|\beta\|_2^{p} \cdot \sum_{i=1}^n \vert x_i \beta/\|\beta\|_2 \vert^p }
=\sup_{\beta \in \mathbb{R}^d, \Vert \beta \Vert_2=1}\frac{\vert x_j\beta \vert^p}{\sum_{i=1}^n \vert x_i \beta \vert^p}
\end{align*}
and $\{ \beta \in \mathbb{R}^d \mid \Vert \beta \Vert_2=1\} $ is a compact set.
We also note that $u_p \leq 1$ always holds.

\begin{lem}\label{lem:sensboundcopy}[Copy of Lemma \ref{lem:sensbound}]
There is a constant $c_s$ such that the sensitivity $\zeta_i$ of $x_i, i\in[n]$ for $\tilde{f}$ is bounded by
\begin{align*}
\zeta_i \leq c_s \mu\left(\frac{1}{n}+u_i\right)
\end{align*}
\end{lem}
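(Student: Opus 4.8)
The plan is to bound $\zeta_i$ by splitting the supremum over $\beta$ according to the sign of $z_i := x_i\beta$ and controlling numerator and denominator separately in each regime. Write $z = X\beta$ and recall that the per-point loss entering $\tilde f$ is $g(z_i)$ when $i\in I_2$ and $G^+(z_i)$ when $i\in I_1$; since $G^+\le g$ by Lemma \ref{lem:gbound}, the numerator is in both cases at most $g(z_i)$, so it suffices to bound $\sup_\beta g(z_i)/\tilde f(z)$. For the denominator I would use two complementary lower bounds. First, $\tilde f(z)\ge \sum_{j} G^+(z_j) = \frac1p\sum_{z_j\ge 0}|z_j|^p =: \frac1p P$, which holds because $g\ge G^+$ on $I_2$. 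Second, the count-based bound $\tilde f(z)=\Omega(n/\mu)$, which follows from Lemma \ref{lem:f-bound} together with the observation—also used in the proof of Corollary \ref{cor:senssplit}—that Lemma \ref{lem:senssplit} forces $\tilde f(z)\ge(1-2\varepsilon)f(z)\ge\frac12 f(z)$ for $\varepsilon\le\frac14$.

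In the regime $z_i<0$ the loss is essentially flat: strict monotonicity from Lemma \ref{lem:g-prop} gives $g(z_i)\le g(0)=\ln 2 = O(1)$ (and $G^+(z_i)=0$), so dividing by $\tilde f(z)=\Omega(n/\mu)$ already yields a contribution of order $\mu/n$, matching the first term of the claimed bound. In the regime $z_i\ge 0$ I would first use Lemma \ref{lem:g-prop} to control the growth of $g$: integrating $g'(t)\le t^{p-1}+(p-1)/t$ from $1$ to $r$ and using $\ln r\le r^p/p$ for $r\ge 1$ gives $g(r)\le g(1)+p\,G^+(r)=O(1+|r|^p)$ for all $r\ge 0$. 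The constant part again contributes $O(\mu/n)$ after dividing by $\tilde f(z)=\Omega(n/\mu)$, while the term $|z_i|^p/\tilde f(z)$ is where the leverage score must enter.

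The crux is converting $|z_i|^p/P$ into the leverage score $u_i=\sup_\beta |z_i|^p/\|z\|_p^p\ge |z_i|^p/\|z\|_p^p$. Here I would exploit the symmetry of the $\mu$-complexity condition: applying Definition \ref{def:mu_complex} to both $\beta$ and $-\beta$ (which swaps the positive and negative masses while leaving $|x_i\beta|^p$ unchanged) shows that not only $P\le\mu N$ but also $N\le\mu P$, where $N=\sum_{z_j<0}|z_j|^p$. Hence $\|z\|_p^p=P+N\le(1+\mu)P$, so $|z_i|^p/P\le(1+\mu)|z_i|^p/\|z\|_p^p\le(1+\mu)u_i=O(\mu u_i)$, and thus $|z_i|^p/\tilde f(z)\le p\,|z_i|^p/P=O(\mu u_i)$. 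Summing the at most three contributions and folding the $p$-dependent constants into a single $c_s$ yields $\zeta_i\le c_s\mu(1/n+u_i)$.

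I expect the main obstacle to be exactly this last reduction: one must simultaneously lower-bound the denominator by the positive mass $\frac1p P$ \emph{and} rule out that the negative mass $N$ dominates, which is precisely what the two-sided $\mu$-complexity bound $N\le\mu P$ guarantees. The count-based bound $\tilde f(z)=\Omega(n/\mu)$ is needed to absorb the constant (non-$\ell_p$) parts of the numerator coming from both the negative regime and the $g(1)$ offset, and inheriting this bound for $\tilde f$ from $f$—rather than reproving it directly for $\tilde f$, which would clash with the global nature of the $\mu$-complexity condition once one restricts to the index set $I_2$—is what keeps the argument clean and avoids any circularity with the sensitivity analysis itself.
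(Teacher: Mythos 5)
Your proposal is correct and takes essentially the same approach as the paper's proof: both arguments lower-bound $\tilde f$ by $\Omega(n/\mu)$ (via Lemmas \ref{lem:f-bound} and \ref{lem:senssplit}) to absorb the constant part of $g$, upper-bound $g(r)$ by $O(1+r^p)$ via the derivative estimate of Lemma \ref{lem:g-prop}, and convert $|x_i\beta|^p$ divided by the positive mass into $O(\mu u_i)$ using the two-sided consequence of $\mu$-complexity (i.e., applying Definition \ref{def:mu_complex} also to $-\beta$). Your cosmetic deviations---splitting at $x_i\beta=0$ instead of $x_i\beta=1$, and using the additive bound $g(r)\le g(1)+p\,G^+(r)$ instead of the paper's multiplicative bound $g(r)\le (g(1)+1)(x_i\beta)^p$ for $x_i\beta>1$---do not change the substance.
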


\begin{proof}[Proof of Lemma \ref{lem:sensbound}/\ref{lem:sensboundcopy}]
First note that by Lemma \ref{lem:f-bound} it holds that $ f(X\beta) \geq \frac{n }{\mu}$ and thus by Lemma \ref{lem:senssplit} $ \tilde{f}(X\beta) \geq  f(X\beta)-\varepsilon \sum_{i \in I_1}G^+(x_i\beta) - \varepsilon \cdot \frac{n}{\mu} \geq \frac{ f(X\beta)}{2} \geq \frac{n }{2\mu}$ holds for small enough $\varepsilon\leq 1/4$.
Thus for any $\beta$ with $x_i\beta\leq 1$ we have $\frac{G^+(x_i\beta)}{\tilde{f}(X\beta)} \leq \frac{g(x_i\beta)}{\tilde{f}(X\beta)}\leq \frac{g(1)}{n  /2\mu}= 2g(1)\frac{\mu}{n}$.

Further for $\beta$ with $x_i \beta > 1$ it holds that $g(x_i \beta )\leq c_3 (x_i\beta)^p$ for some constant $c_3 \leq 2g(1)+1$ since by Lemma \ref{lem:g-prop} and using $ \frac{p-1}{t}\leq p-1$ for $t\geq 1$ it holds that
\begin{align*}
   g( x_i \beta) &= g(1)+\int_{1}^{x_i \beta }g'(t) \,dt\\
   &\leq  g(1)+\int_{1}^{ x_i \beta } t^{p-1}+\frac{p-1}{t}\,dt\\
   &\leq  g(1)+\int_{1}^{ x_i \beta } t^{p-1}+ p-1\,dt\\
   &\leq  g(1)+\int_{1}^{ x_i \beta } t^{p-1}+(p-1)t^{p-1}\,dt\\
   &\leq  g(1)+\int_{1}^{ x_i \beta } p t^{p-1}\,dt\\
   &=g(1)+(x_i\beta)^p-1\leq (g(1)+1)(x_i\beta)^p. 
\end{align*}
Also note that by definition of $\mu$ it holds that
\begin{align*}
    &\frac{1}{\sum_{x_j\beta>0} |x_j\beta|^p(1+\mu)} \leq \frac{1}{\sum_{x_j\beta>0} |x_j\beta|^p+\sum_{x_j\beta<0} |x_j\beta|^p} = \frac{1}{\sum_{j=1}^n |x_j\beta|^p}
\end{align*}
and thus $$\frac{1}{\sum_{x_j\beta>0}|x_j\beta|^p} \leq  \frac{1+\mu}{\sum_{j=1}^n |x_j\beta|^p}.$$
Now setting $c_3=2g(1)+1$ and using $\tilde{f}(X\beta)\geq \sum_{x_j\beta>0} \frac{|x_j\beta|^p}{p}=\sum_{j=1}^n G^+(x_j\beta)$ we get
\begin{align*}
\frac{G^+(x_i\beta)}{\tilde{f}(X\beta)} \leq \frac{g(x_i\beta)}{\tilde{f}(X\beta)}\leq \frac{c_3}{1/p}\cdot \frac{|x_i\beta|^p}{\sum_{x_j\beta>0} |x_j\beta|^p} \leq p c_3(1+\mu) u_i \leq 2pc_3\mu u_i := c_s \mu u_i.
\end{align*}
Combining both bounds gives us the bound for $\zeta_i$.
\end{proof}

\subsection{Well Conditioned Bases and Approximate Leverage Scores}

In order to approximate the leverage scores we will need well conditioned bases:

An $(\alpha,\beta,p)$-well-conditioned basis $V$ is a basis that preserves the norm of each vector well, in the sense that its entry-wise $p$ norm $\|V\|_p\leq\alpha$ and for all $z\in \mathbb{R}^d \colon \|z\|_q\leq \beta \|Vz\|_p$, where $q$ denotes the dual norm to $p$, i.e., $\frac{1}{p}+\frac{1}{q}=1$, see Definition \ref{def:good_basis}.
We will first state the properties of the \emph{$(\alpha,\beta,p)$-well-conditioned basis} and then we describe how to compute the basis.

\begin{lem}\label{lem:levscoreboundcopy}[Copy of Lemma \ref{lem:levscorebound}]
Let $V$ be an \emph{$(\alpha,\beta,p)$-well-conditioned basis} for the column space of $X$.
Then it holds for all $i\in[n]$ that $u_i \leq \beta^p \Vert v_i\Vert_p^p$. % and thus $$\sum_{i=1}^{n} u_i\leq \beta^p \|q_i\|_p^p=\beta^p \|Q\|_p^p = (\alpha\beta)^p.$$
As a direct consequence we have $\sum_{i=1}^{n} u_i\leq \beta^p \|V\|_p^p \leq (\alpha\beta)^p = d^{O(p)}$.
\end{lem}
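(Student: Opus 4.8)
The plan is to reduce the leverage score, which is defined as a supremum over $\beta\in\mathbb{R}^d$ through the matrix $X$, to an equivalent supremum expressed purely through the basis $V$, and then to apply Hölder's inequality together with the two defining properties of a well-conditioned basis. The whole argument is short and the only conceptual point is the change of basis; the rest is a one-line estimate.

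First I would use that $V$ is a basis for the column space of $X$ (both of rank $d$). Hence there is a matrix $M$ with $X=VM$, and $M$ has full row rank $d$, so writing this row-wise gives $x_i=v_iM$. Substituting $z=M\beta$, which sweeps out all of $\mathbb{R}^d\setminus\{0\}$ as $\beta$ ranges over the set where $X\beta\neq 0$ (because $M$ is surjective), the leverage score becomes
\[
u_i=\sup_{\beta\neq 0}\frac{|x_i\beta|^p}{\|X\beta\|_p^p}=\sup_{z\neq 0}\frac{|v_iz|^p}{\|Vz\|_p^p},
\]
so the quantity depends only on the rows of $V$ and its column space.

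Next, for a fixed $z$ I would bound the numerator via Hölder's inequality with the dual exponents $p$ and $q$, obtaining $|v_iz|\le\|v_i\|_p\,\|z\|_q$. Property (2) of the $(\alpha,\beta,p)$-well-conditioned basis then converts the factor $\|z\|_q$ into the denominator through $\|z\|_q\le\beta\|Vz\|_p$, yielding $|v_iz|\le\beta\,\|v_i\|_p\,\|Vz\|_p$. Raising to the $p$-th power and dividing by $\|Vz\|_p^p$ cancels all dependence on $z$, so that $\frac{|v_iz|^p}{\|Vz\|_p^p}\le\beta^p\|v_i\|_p^p$; taking the supremum over $z$ gives the first claim $u_i\le\beta^p\|v_i\|_p^p$.

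Finally I would sum over $i$ and invoke property (1): $\sum_{i=1}^n u_i\le\beta^p\sum_{i=1}^n\|v_i\|_p^p=\beta^p\|V\|_p^p\le\beta^p\alpha^p=(\alpha\beta)^p$, using $\|V\|_p^p=\sum_{i,j}|V_{ij}|^p=\sum_i\|v_i\|_p^p$. Since $\alpha,\beta=d^{O(1)}$ for a $p$-well-conditioned basis, this is $d^{O(p)}$. There is no real obstacle in this argument; the only things to be careful about are choosing the correct dual exponent $q$ in Hölder's inequality so that it pairs exactly with property (2), and observing that the suprema over $\beta$ and over $z$ coincide because leverage scores are a property of the column space alone.
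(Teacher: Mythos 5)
Your proposal is correct and follows essentially the same route as the paper's own proof: a change of basis to rewrite $u_i$ as a supremum over $\frac{|(Vz)_i|^p}{\|Vz\|_p^p}$, then H\"older's inequality $|v_iz|\le\|v_i\|_p\|z\|_q$ combined with property (2) to cancel the denominator, and finally property (1) after summing over $i$. The only cosmetic difference is that the paper assumes the supremum is attained (justified earlier by compactness), whereas you bound the ratio for every fixed $z$ and then take the supremum, which is if anything slightly cleaner.
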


\begin{proof}[Proof of Lemma \ref{lem:levscorebound}/\ref{lem:levscoreboundcopy}]
We have by a change of basis
\begin{align*}
u_i&=\sup_{z \in \mathbb{R}^d\setminus\{0\}} \frac{|(Xz)_i|^p}{\Vert Xz\Vert_p^p} =\sup_{z \in \mathbb{R}^d\setminus\{0\}} \frac{|(Vz)_i|^p}{\Vert Vz\Vert_p^p} .
\end{align*}
Now assume that $z$ attains the value $\sup_{z \in \mathbb{R}^d\setminus\{0\}}\frac{|(Vz)_i|^p}{\Vert Vz\Vert_p^p}$.
Then we get by using Hölder's inequality and the properties of $V$ that
\begin{align*}
u_i&=\frac{|(Vz)_i|^p}{\Vert Vz\Vert_p^p}
\leq \frac{\beta^p|(Vz)_i|^p}{\Vert z\Vert_q^p}
\leq \frac{\beta^p\Vert v_i\Vert_p^p \Vert z\Vert_q^p}{\Vert z\Vert_q^p}=\beta^p\Vert v_i\Vert_p^p.
\end{align*}
\end{proof}

An \emph{$(\alpha,\beta,p)$-well-conditioned basis} can be computed using sketching techniques.

\begin{lem}{\citep{WoodruffZ13,ClarksonW17}}[Copy of Lemma \ref{lem:wcb}]\label{lem:wcb_copy}
There exists a random embedding matrix $\Pi\in\mathbb{R}^{n'\times n}$ and $\gamma = {O}(d \log(d))$ such that
\begin{align*}%\label{eq:ranmatrix}
\forall \beta \in \mathbb{R}^d: ~ \frac{1}{\gamma^{1/p}} \Vert X \beta \Vert_p
\leq  \Vert \Pi X \beta \Vert_q \leq \gamma^{1/p}\Vert X \beta \Vert_p
\end{align*}
holds with constant probability, where 
    \[
	(q,n') = \begin{cases} (2,O(d^2)) &\mbox{if } p\in[1,2] \\ 
	(\infty,O(n^{1-\frac{2}{p}}\log n(d \log d)^{1+\frac{2}{p}} + d^{5+4p})) & \mbox{if } p\in(2,\infty). \end{cases}
	\]
%$q=2, n'=O(d^2)$ if $p\in[1,2]$ and $q=\infty, n'=O(n^{1-\frac{2}{p}}\log n(d \log d)^{1+\frac{2}{p}} + d^{5+4p})$ if $p\in(2,\infty)$.
For $p=2$ we have $\gamma = 2$. Further $\Pi X$ can be computed in ${O}(\nnz(X))$ time.
\end{lem}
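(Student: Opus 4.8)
The plan is to exhibit $\Pi$ as a product $\Pi=SD$ of a diagonal \emph{exponential scaling} matrix $D$ and a sparse \emph{dimension-reduction} sketch $S$, and to verify the two-sided bound separately for $p\in[1,2]$ and for $p\in(2,\infty)$; the whole lemma then amounts to assembling the exponential embedding of \citet{WoodruffZ13} with the sparse subspace embedding of \citet{ClarksonW17}. The common building block is the max-stability of the exponential law: taking $D$ to be the $n\times n$ diagonal matrix with $D_{ii}=E_i^{-1/p}$ for i.i.d.\ standard exponentials $E_i$, one computes for any fixed $y\in\mathbb{R}^n$ that $\Pr[\max_i E_i^{-1/p}|y_i|\le t]=\prod_i \exp(-(|y_i|/t)^p)=\exp(-\|y\|_p^p/t^p)$, so $\|Dy\|_\infty$ is distributed as $\|y\|_p\cdot E^{-1/p}$ for a single exponential $E$ and hence estimates $\|y\|_p$ up to constant factors with constant probability.

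For $p\in(2,\infty)$ I would keep $q=\infty$ and set $\Pi X=SDX$, where $S$ hashes the scaled rows into $n'=O(n^{1-2/p}\log n\,(d\log d)^{1+2/p}+d^{5+4p})$ buckets while retaining the maximum structure. The first step is to upgrade the per-vector estimate above to a uniform statement over the whole column space of $X$: placing a net of size $e^{O(d)}$ on the unit $\ell_p$-sphere of $\{X\beta\}$, taking a union bound over the net, and separately controlling off-net perturbations yields $\frac{1}{\gamma^{1/p}}\|X\beta\|_p\le\|DX\beta\|_\infty\le\gamma^{1/p}\|X\beta\|_p$ simultaneously for all $\beta$ with $\gamma=O(d\log d)$. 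The second step is to show that bucketing preserves the max: the $n^{1-2/p}$ factor is forced by the heavy tail of $E^{-1/p}$ and matches the $\Omega(n^{1-2/p}\log n)$ frequency-moment lower bound quoted in the text, while the additive $d^{5+4p}$ term pays for resolving the net at full precision.

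For $p\in[1,2]$ I would instead target $q=2$, composing the exponential step with a sparse $\ell_2$ subspace embedding $S$ of \citet{ClarksonW17} (or \citet{Cohen16}); the combined map sends $\|X\beta\|_p$ to $\|\Pi X\beta\|_2$ up to distortion $O((d\log d)^{1/p})$, with $n'=O(d^2)$ rows that do not depend on $n$, as established in \citet{WoodruffZ13}. For $p=2$ the exponential step is unnecessary and a single sparse $\ell_2$ subspace embedding already gives distortion $\sqrt{2}$, i.e.\ $\gamma=2$. Since $D$ is diagonal and $S$ has $O(1)$ nonzeros per column, forming $DX$ and then $S(DX)$ both cost $O(\nnz(X))$ time, giving the claimed running time.

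The step I expect to be the main obstacle is the uniformization for $p>2$. The per-vector concentration only delivers a constant-factor estimate with \emph{constant} probability, and the scaling $E^{-1/p}$ has no finite moment of order $\ge p$, so a naive union bound over a net diverges. The crux is to split each scaled vector $DX\beta$ into a bounded body and a sparse tail of large coordinates, argue that the body concentrates so that the $e^{O(d)}$ net can be controlled within the $O(d\log d)$ distortion budget, and argue that the tail occupies few coordinates so that $n'=O(n^{1-2/p}\log n\cdots)$ buckets retain it without destructive collisions. This heavy-tail calculation is exactly the technical heart of \citet{WoodruffZ13}, on which I would rely for the detailed bounds.
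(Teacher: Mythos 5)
Your proposal is correct and takes essentially the same route as the paper: the paper also gives only the construction $\Pi=\Psi D E$ (CountSketch-style hashing with random signs, composed with the exponential scaling $E_{ii}=1/\lambda_i^{1/p}$, which is dropped when $p=2$) and defers the entire distortion analysis to \citet{WoodruffZ13} and \citet{ClarksonW17}. Your per-vector max-stability computation and your identification of the heavy-tail uniformization over the column space as the technical crux for $p>2$ are exactly the content supplied by those references.
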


The sketching matrix $\Pi$ can be constructed as follows:
First let $D\in \mathbb{R}^{n \times n}$ be the diagonal matrix with $D_{ii}=1$ or $D_{ii}=-1$ each with probability $1/2$.
Further let $h:[n] \rightarrow [n']$ be a random map where $h$ hashes each entry of $[n]$ to one of $n'$ buckets uniformly at random.
Set $\Psi\in \mathbb{R}^{n' \times n}$ to be the matrix where $\Psi_{h(i)i}=1$ and $\Psi_{ji}=0$ if $j \neq h(i)$.
For $p=2$ it suffices to take $\Pi=\Psi D$ \citep{ClarksonW17}.
%TODO: phi änder, D->E
Otherwise if $p\neq 2$ let $E$ be a diagonal matrix with $E_{ii}={1}/{\lambda_i^{1/p}}$ where $ \lambda_i \sim \exp(1)$ is drawn from a standard exponential distribution and set $\Pi=\Psi D E$ \citep{WoodruffZ13}.

\begin{lem}\label{lem:wellconditioncopy}[Copy of Lemma \ref{lem:wellcondition}]
If $\Pi$ satisfies Lemma \ref{lem:wcb} and $\Pi X=QR$ is the QR-decomposition of $\Pi X$ then $V=XR^{-1}$ is an \emph{$(\alpha,\beta,p)$-well-conditioned basis} for the columnspace of $X$, where for $\gamma = {O}(d \log(d))$ we have
\begin{align*}
(\alpha, \beta)= \begin{cases}
(\sqrt{2 d},\sqrt{2}), & \text{for }p=2\\
(d\gamma^{1/p},\gamma^{1/p}), & \text{for }p \in [1, 2)\\
(d\gamma^{1/p},d \gamma^{1/p}), & \text{for }p \in (2, \infty).
\end{cases}
\end{align*}

\end{lem}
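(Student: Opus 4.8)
The plan is to verify the two defining conditions of an $(\alpha,\beta,p)$-well-conditioned basis (Definition \ref{def:good_basis}) for $V=XR^{-1}$ directly, using the subspace embedding of Lemma \ref{lem:wcb} as the single bridge between the $\ell_p$-geometry of $V$ and the $\ell_q$-geometry of the factor $Q$. The crucial identity is $\Pi V=\Pi X R^{-1}=QRR^{-1}=Q$, so that $Qz=\Pi(Vz)$ for every $z\in\mathbb{R}^d$. Applying Lemma \ref{lem:wcb} with the substitution $X\beta=Vz$ (i.e.\ $\beta=R^{-1}z$) yields the two-sided comparison
\begin{align*}
\frac{1}{\gamma^{1/p}}\,\|Vz\|_p \;\le\; \|Qz\|_q \;\le\; \gamma^{1/p}\,\|Vz\|_p ,
\end{align*}
which is the only analytic input I will use. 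Throughout I keep the embedding target exponent $q$ (equal to $2$ for $p\le 2$ and $\infty$ for $p>2$) notationally distinct from the dual exponent $q^{*}$ with $1/p+1/q^{*}=1$ appearing in condition (2); conflating the two is the main bookkeeping trap.

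For condition (1) I would bound the entrywise $p$-norm column by column. Writing $e_1,\dots,e_d$ for the standard basis of $\mathbb{R}^d$ we have $\|V\|_p^p=\sum_{j=1}^d\|Ve_j\|_p^p$, and the upper comparison gives $\|Ve_j\|_p\le\gamma^{1/p}\|Qe_j\|_q$. Since $Q$ has orthonormal columns, $\|Qe_j\|_2=1$ and $\|Qe_j\|_\infty\le\|Qe_j\|_2=1$, so each column contributes at most $\gamma$ and $\|V\|_p\le(d\gamma)^{1/p}\le d\,\gamma^{1/p}=\alpha$. For $p=2$ the sharper constant $\gamma=2$ turns this into $\|V\|_F\le\sqrt{2d}$, matching the stated value.

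For condition (2) I would invoke the lower comparison $\|Vz\|_p\ge\gamma^{-1/p}\|Qz\|_q$ and then bound $\|Qz\|_q$ from below by $\|z\|_{q^{*}}$. For $p\in[1,2]$ this is lossless: $q=2$ and orthonormality give $\|Qz\|_2=\|z\|_2$, and since $q^{*}\ge 2$ the monotonicity of $\ell_r$-norms in $r$ yields $\|z\|_{q^{*}}\le\|z\|_2=\|Qz\|_2\le\gamma^{1/p}\|Vz\|_p$, so $\beta=\gamma^{1/p}$ (and $\beta=\sqrt2$ for $p=2$). Together with condition (1) this settles $p\in[1,2]$, and Lemma \ref{lem:levscorebound} then turns $(\alpha\beta)^p=d^{O(p)}$ into the leverage-score sum needed downstream.

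The hard part, and where I expect the genuine obstacle, is condition (2) for $p>2$. Here $q=\infty$ while $q^{*}\in(1,2)$, so I must produce an $n$-independent lower bound of the form $\|z\|_{q^{*}}\le d\,\|Qz\|_\infty$ uniformly in $z$. Orthonormality of $Q$ alone does \emph{not} suffice: the $d$-dimensional column space of $Q$ could a priori contain near-flat vectors $v$ with $\|v\|_\infty/\|v\|_2\approx 1/\sqrt{n'}$, and since $\|Qz\|_2=\|z\|_2\le\|z\|_{q^{*}}$ this would force a fatal $\sqrt{n'}$, hence $n$-dependent, factor into $\beta$. What must rescue the bound is that the column space of the $\ell_\infty$-sketch $\Pi X$ is itself well-conditioned for $\ell_\infty$ — equivalently, it admits no near-flat directions — so that $\|Qz\|_\infty\ge \|z\|_2/d^{O(1)}$; combined with the purely $d$-dimensional conversion $\|z\|_{q^{*}}\le d^{1/2-1/p}\|z\|_2$ this yields the stated $\beta=d\gamma^{1/p}$ and, re-running condition (1) with the same $Q$, the matching $\alpha=d\gamma^{1/p}$. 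Establishing this no-flat-directions property is precisely the place where the exponential-variable structure behind Lemma \ref{lem:wcb} must be exploited, and I expect it to be the most delicate step of the argument.
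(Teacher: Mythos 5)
Your handling of condition (1) for every $p$, and of condition (2) for $p\in[1,2]$, is correct and is essentially the paper's own argument: the identity $Q=\Pi V$, the column-wise estimate $\|Ve_j\|_p\le\gamma^{1/p}\|Qe_j\|_q\le\gamma^{1/p}$ (the paper sums the $d$ columns by the triangle inequality, arriving at the slightly weaker $d\gamma^{1/p}$), and for $p\le2$ the lossless chain $\|z\|_{q^*}\le\|z\|_2=\|Qz\|_2\le\gamma^{1/p}\|Vz\|_p$, where $q^*$ denotes the dual exponent with $1/p+1/q^*=1$. The gap is exactly where you place it: for $p\in(2,\infty)$ you never prove the lower bound $\|Qz\|_\infty\ge\|z\|_2/d^{O(1)}$; you only record that the proof needs it. As a proof, your proposal is therefore incomplete for $p>2$.

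You should know how the paper closes this step, because the comparison is instructive. The paper writes $\|z\|_{q^*}\le\sqrt{d}\,\|z\|_2=\sqrt{d}\,\|Qz\|_2\le d\,\|Qz\|_\infty\le d\gamma^{1/p}\|Vz\|_p$, i.e.\ it asserts precisely the no-flat-directions property $\|Qz\|_2\le\sqrt{d}\,\|Qz\|_\infty$ with no justification ($\ell_2\le\sqrt{d}\,\ell_\infty$ is valid for vectors in $\mathbb{R}^d$, but $Qz$ lives in $\mathbb{R}^{n'}$ with $n'\gg d$). Your suspicion that this is the genuine obstacle is well-founded; in fact the property fails for this construction. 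Suppose the all-ones vector $\mathbf{1}_n$ is the first column of $X$ and take $z=Re_1$, so that $Vz=\mathbf{1}_n$ and $Qz=\Pi\mathbf{1}_n=\Psi DE\mathbf{1}_n$. The embedding gives $\|Qz\|_\infty\le\gamma^{1/p}n^{1/p}$, yet $\mathbb{E}\bigl[\|\Psi DE\mathbf{1}_n\|_2^2\bigr]=\sum_{i=1}^n\mathbb{E}[\lambda_i^{-2/p}]=\Gamma(1-2/p)\,n$, and Chebyshev-type arguments over the signs and exponentials give $\|Qz\|_2=\Theta(\sqrt{n})$ with probability $1-o(1)$; hence $\|Qz\|_2/\|Qz\|_\infty=\Omega\bigl(n^{1/2-1/p}/\gamma^{1/p}\bigr)\gg\sqrt{d}$, and since the embedding itself holds with constant probability, both events occur simultaneously for some realization of $\Pi$. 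For that realization, $\|z\|_{q^*}\ge\|z\|_2=\Theta(\sqrt{n})$ while $d\gamma^{1/p}\|Vz\|_p=d\gamma^{1/p}n^{1/p}$, so condition (2) with an $n$-independent $\beta$ is violated for large $n$. In other words, for $p>2$ no argument about the QR factor $Q$ can fill your gap, because the statement is false for plain QR; the known route to the claimed parameters (and what the construction of \citet{WoodruffZ13} actually requires) is to replace QR by L\"owner--John ellipsoid rounding of the polytope $\{z:\|\Pi Xz\|_\infty\le1\}$, which yields $\|z\|_2/\sqrt{d}\le\|\Pi XR^{-1}z\|_\infty\le\|z\|_2$ and then both conditions with $(\alpha,\beta)=(d\gamma^{1/p},d\gamma^{1/p})$. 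So your $p\le2$ cases are fine, your $p>2$ case is incomplete, and the missing step cannot be supplied for the construction as stated---a difficulty your write-up correctly senses but the paper's own proof glosses over with an invalid inequality.
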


\begin{proof}[Proof of Lemma \ref{lem:wellcondition}/\ref{lem:wellconditioncopy}]
	We are going to use the fact that $Q=\Pi X\R$ is an orthonormal basis. Let $e_i$ for $i\in[d]$ denote the $i$th standard basis vector. We define $(\R)^{(i)}$ to be the $i$th column of $\R$. We have
	\begin{align}
	\label{eqn:begin}
	\normp{V} &= \normp{X\R} = \normp{X\sum_{i=1}^d(\R)^{(i)}e_i^T} = \normp{\sum_{i=1}^d X(\R)^{(i)}e_i^T}\notag\\
	&\leq \sum_{i=1}^d \normp{X(\R)^{(i)}e_i^T} = \sum_{i=1}^d \normp{X(\R)^{(i)}}
	\end{align}
	Now suppose $p \in (2,\infty)$.
	\begin{align*}
	(\ref{eqn:begin}) &\leq \gamma^{1/p} \sum_{i=1}^d \norminf{\Pi X(\R)^{(i)}} \leq \gamma^{1/p}\sqrt{d} \left( \sum_{i=1}^d \norminf{\Pi X(\R)^{(i)}}^2 \right)^\frac{1}{2} \\
	&\leq \gamma^{1/p} \sqrt{d} \left( \sum_{i=1}^d \norm{\Pi X(\R)^{(i)}}^2 \right)^\frac{1}{2} \leq \gamma^{1/p} \sqrt{d} \left( \sum_{i=1}^d \underbrace{\norm{Q^{(i)}}^2}_{=1} \right)^\frac{1}{2} = \gamma^{1/p} d %\qedhere
	\end{align*}
	For arbitrary $z\in \RL^d$ it holds that
	\begin{align*}
	\normq{z} \leq \sqrt{d}\norm{z} = \sqrt{d}\norm{Q z} = \sqrt{d}\norm{\Pi X \R z} \leq d\norminf{\Pi X \R z} \leq d \gamma^{1/p} \normp{V z}.
	\end{align*}
	Consequently $V$ is $(\gamma^{1/p} d,\gamma^{1/p} d, p)$-well-conditioned.
	
	Next suppose $p \in [1,2)$.	Again we bound
	\begin{align*}
	(\ref{eqn:begin}) &\leq \gamma^{1/p} \sum_{i=1}^d \norm{\Pi X(\R)^{(i)}} \leq \gamma^{1/p} \sqrt{d} \left( \sum_{i=1}^d \norm{\Pi X(\R)^{(i)}}^2 \right)^\frac{1}{2} \\ 
	&\leq \gamma^{1/p} \sqrt{d} \left( \sum_{i=1}^d \underbrace{\norm{Q^{(i)}}^2}_{=1} \right)^\frac{1}{2} = \gamma^{1/p} d %\qedhere
	\end{align*}
	Also, since $p\leq 2$, the dual norm satisfies $q\geq 2$. Fix an arbitrary $z\in \RL^d$. It follows that
	\begin{align*}
	\normq{z} \leq \norm{z} = \norm{Q z} = \norm{\Pi X \R z} \leq \gamma^{1/p} \normp{V z}.
	\end{align*}
	It follows that $V$ is even $(\gamma^{1/p} d,\gamma^{1/p}, p)$-well-conditioned in this case.
	
	Finally suppose $p =2$, where the entry-wise matrix norm is the Frobenius norm $\|\cdot\|_F$. We have
	\begin{align*}
	\Vert V \Vert_F^2=\sum_{i=1}^d \norm{ V^{(i)}}^2=\sum_{i=1}^d \norm{ (X R^{-1})^{(i)}}^2&\leq \sum_{i=1}^d 2\norm{ (\Pi X R^{-1})^{(i)}}^2\\
	&= 2\sum_{i=1}^d \norm{ Q^{(i)}}^2=2d.
	\end{align*}
	Thus we have $\Vert V \Vert_F=\sqrt{2d} $.
	Since $p=q=2$ we have for any $\beta\in \RL^d$ that
	\begin{align*}
	\normq{z} = \norm{z} = \norm{Q z} = \norm{\Pi X \R z} \leq \sqrt{2} \normp{V z}.
	\end{align*}
	Consequently $V$ is a $(\sqrt{2d},\sqrt{2}, p)$-well-conditioned in this case.
\end{proof}

\subsection{Proof of Main Results}

The main results, Theorem \ref{thm:mainoverview} and the improvement for $p=2$, Corollary \ref{cor:p2}, are completely contained in the main body of the paper. It remains to prove Corollary \ref{cor:minimization} that handles the approximation factor of the solution obtained from the coreset with respect to the original loss function on the full data.

\begin{cor}\label{cor:minimizationcopy}[Copy of Corollary \ref{cor:minimization}]
Let $(X',w)$ be a weighted $\varepsilon$-coreset for $f$. Let $\tilde{\beta}\in \operatorname{argmin}_{\beta\in\mathbb{R}^d} f_w(X'\beta)$. Then it holds that $f(X\tilde\beta )\leq (1+3\varepsilon)\min_{\beta\in\mathbb{R}^d} f(X\beta)$.
\end{cor}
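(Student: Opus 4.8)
The plan is to exploit the coreset guarantee from Definition~\ref{def:coreset} as a two-sided bound and to play the optimality of $\tilde\beta$ on the coreset against the optimality of a true minimizer on the full data. Let $\beta^\ast \in \operatorname{argmin}_{\beta\in\mathbb{R}^d} f(X\beta)$ denote an optimal parameter for the original problem, so that $f(X\beta^\ast) = \min_{\beta} f(X\beta)$. Since $(X',w)$ is an $\varepsilon$-coreset, the inequality $|f_w(X'\beta)-f(X\beta)|\leq \varepsilon f(X\beta)$ holds \emph{simultaneously for all} $\beta\in\mathbb{R}^d$; I would apply it once at $\beta=\tilde\beta$ and once at $\beta=\beta^\ast$, which is the only structural fact needed.

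The chain of inequalities I would carry out is the following. First, by the lower side of the coreset bound evaluated at $\tilde\beta$,
\[
f(X\tilde\beta) \leq \frac{1}{1-\varepsilon}\, f_w(X'\tilde\beta).
\]
Next, since $\tilde\beta$ minimizes $f_w(X'\cdot)$, we have $f_w(X'\tilde\beta)\leq f_w(X'\beta^\ast)$. Finally, applying the upper side of the coreset bound at $\beta^\ast$ gives $f_w(X'\beta^\ast)\leq (1+\varepsilon) f(X\beta^\ast)$. Concatenating these three steps yields
\[
f(X\tilde\beta) \leq \frac{1+\varepsilon}{1-\varepsilon}\, f(X\beta^\ast) = \frac{1+\varepsilon}{1-\varepsilon}\min_{\beta\in\mathbb{R}^d} f(X\beta).
\]

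It remains only to simplify the multiplicative factor $\tfrac{1+\varepsilon}{1-\varepsilon}$ into the stated $1+3\varepsilon$. For $\varepsilon\in(0,1/2)$ one has $\tfrac{1+\varepsilon}{1-\varepsilon} = 1 + \tfrac{2\varepsilon}{1-\varepsilon} \leq 1 + 4\varepsilon$ on the full range, and $\leq 1+3\varepsilon$ once $\varepsilon$ is small enough (e.g. $\varepsilon \leq 1/3$); since $\varepsilon$ is assumed small throughout and can be rescaled by an absolute constant without affecting any asymptotic statement, this bookkeeping is harmless. I would therefore state the bound as $(1+3\varepsilon)$ under the standing smallness assumption on $\varepsilon$, noting that the loss of the constant is absorbed in the usual way.

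There is essentially no hard part here: the argument is the standard ``sandwich'' used throughout the coreset literature, and its correctness hinges entirely on the fact that the coreset inequality is uniform over $\beta$, so that it may legitimately be instantiated at the two different points $\tilde\beta$ and $\beta^\ast$. The only minor subtlety worth flagging is the existence of the minimizers $\tilde\beta$ and $\beta^\ast$: the statement presupposes that the argmin on the coreset is attained, and one should tacitly assume that the ML estimator exists (as already noted in Appendix~\ref{app:grad_hess}), or else replace the minima by infima and pass to a limiting sequence, which changes nothing in the inequalities above.
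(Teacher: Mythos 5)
Your proposal is correct and follows exactly the same argument as the paper's proof: the three-step chain $f(X\tilde\beta) \leq f_w(X'\tilde\beta)/(1-\varepsilon) \leq f_w(X'\beta^*)/(1-\varepsilon) \leq (1+\varepsilon)f(X\beta^*)/(1-\varepsilon) \leq (1+3\varepsilon)f(X\beta^*)$, using the coreset guarantee at $\tilde\beta$ and $\beta^*$ together with the optimality of $\tilde\beta$ on the coreset. Your additional remark that the final simplification $\tfrac{1+\varepsilon}{1-\varepsilon}\leq 1+3\varepsilon$ requires $\varepsilon\leq 1/3$ is a correct observation that the paper leaves implicit.
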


\begin{proof}[Proof of Corollary \ref{cor:minimization}/\ref{cor:minimizationcopy}]
Let ${\beta^*}\in \operatorname{argmin}_{\beta\in\mathbb{R}^d} f(X\beta)$. Since $(X',w)$ is a coreset we have by Definition \ref{def:coreset} and using the optimality of $\tilde\beta$ for the coreset that
\begin{align*}
    f(X\tilde\beta) &\leq f_w(X'\tilde\beta)/(1-\varepsilon) \leq f_w(X'\beta^*)/(1-\varepsilon)\\
     & \leq f(X\beta^*)(1+\varepsilon)/(1-\varepsilon) \leq f(X\beta^*)(1+3\varepsilon). 
\end{align*}
\end{proof}

\clearpage
\section{EXPERIMENTS, PLOTS AND PSEUDO CODE}\label{app:experiments}
Here we provide further material deferred from the experimental section in the main body.

We briefly introduce the data sets that we used. Table \ref{tab:datasetvalues} provides a summary.
The {Webspam}\footnote{https://www.csie.ntu.edu.tw/$\sim$cjlin/libsvmtools/datasets/binary.html\#webspam} ~data consists of $350\,000$ unigrams with $128$ features from web pages which have to be classified as spam or normal pages.
The {Covertype}\footnote{https://archive.ics.uci.edu/ml/datasets/Covertype} ~data consists of $581\,012$ cartographic observations of different forests with $54$ features. The task is to predict the type of trees at each location.
The {Kddcup}\footnote{https://kdd.ics.uci.edu/databases/kddcup99/kddcup99.html} ~data consists of $494\,021$ network connections with $33$ features and the task is to detect network intrusions.
The {Example-2D} data consists of $175$ synthetic data points ($80$ per class plus $15$ outliers in one class) in $2$ dimensions to visualize the results of $p$-probit regressions for different values of $p$.

\begin{table*}[ht!]
    \label{tab:datasetvalues}
    \caption{Summary of the used data sets and their dimensions: The given values of $d$ do not include the intercept. The data sets are downloaded or generated automatically by our open Python implementation that is available at \url{https://github.com/cxan96/efficient-probit-regression}. 
	}
	\centering
	\begin{tabular}{ | l| r| r|}
		\hline
		{\bf data set} & {$\mathbf{n}$} & {$\mathbf{d}$} \\ \hline
		Webspam & $350\,000$ & $128$ \\ \hline 
		Covertype & $581\,012$ & $54$ \\ \hline 
		Kddcup & $494\,021$ & $33$ \\ \hline 
		Example-2D & $175$ & $2$ \\\hline 
	\end{tabular}
\end{table*}

\begin{algorithm}[ht!]
  \caption{Coreset algorithm for $p$-generalized probit regression.}\label{alg:main}%\label{alg:code}
  \begin{algorithmic}[1]
    \Statex \textbf{Input:} data $X \in \mathbb{R}^{n \times d}$, number of rows $k$.;
    \Statex \textbf{Output:} coreset $C=(X', w) \in \mathbb{R}^{k \times d}$ with $k$ rows.; 
    \State  Initialize sketch $X''=\mathbf{0} \in \mathbb{R}^{n' \times d}$, (where $n'={O}(d^2)$ for $p\leq 2$ or $n'={O}(n^{1-\frac{2}{p}}\log n \cdot \poly(d))$ for $p>2$);
 	\For{$i=1\ldots n$}
 		\State Draw a random number $B_i \in [n']$; \Comment{hash to bucket $B_i$}
 		\State Draw a random number $\sigma_i \in \{-1, 1\}$; \Comment{random sign}
 		\If{$p\neq 2$} 
 		    \State Draw a random number $\lambda_i \sim \exp(1)$; \Comment{$\ell_p$ embedding}
 			\State $\sigma_i={\sigma_i}/{\lambda_i^{1/p}}$.
 		\EndIf
 		\State $ X_{B_i}''=X_{B_i}''+\sigma_i \cdot x_i$. \Comment{sketch}
 	\EndFor
	\State Compute the QR-decomposition of $X''=QR$.; \Comment{well-conditioned basis}
	\State  Initialize coreset $X'=\mathbf{0} \in \mathbb{R}^{k \times d}$  \Comment{coreset points} 
	\State  Initialize weights $w=0 \in \mathbb{R}^{k}$;  \Comment{coreset weights}
	\State  Initialize $k$ independent weighted reservoir samplers $S_j$, sampling row $X'_j$, for each $j\in [k]$;
	\State  Initialize $G=I\in \mathbb{R}^{d\times d}$;  \Comment{Identity matrix}
 	\If{$p=2$ and $\ln n < d$}
 	    \State Draw $G\in\mathbb{R}^{d\times \ln n}$ with $G_{ij}\sim N(0,\frac{1}{\ln n})$; \Comment{JL-embedding}
    \EndIf
	\For{$i=1\ldots n$}
	    \State Compute $q_i=\|x_i (R^{-1} G)\|_p^p$; \Comment{$\ell_p$-leverage score approximation}
		\For{$j=1\ldots k$}
		    \State Feed $s_i=q_i+1/n$ to $S_j$; \Comment{unnormalized sampling probabilities}
		    \If{$S_j$ samples $x_i$}
		        \State $w_j={1}/{(k\cdot s_i)}$; \Comment{unnormalized weights}
		        \State $X'_j=x_i$; \Comment{save row identity in the coreset}
		    \EndIf
	    \EndFor	    
	\EndFor
	\State $w = w\cdot \sum_{i=1}^n s_i$; \Comment{normalize weights}
	\State \textbf{return} $C = (X', w)$;
  \end{algorithmic}
\end{algorithm}

\begin{algorithm}[ht!]
  \caption{Online coreset algorithm for the standard probit model $(p=2)$.}\label{alg:online}%\label{alg:code}
  \begin{algorithmic}[1]
    \Statex \textbf{Input:} data $X \in \mathbb{R}^{n \times d}$, number of rows $k$.;
    \Statex \textbf{Output:} coreset $C=(X', w) \in \mathbb{R}^{k \times d}$ with $k$ rows.; 
    \State  Initialize $M = M_{inv} = Q = 0 \in \mathbb{R}^{d \times d}$;
    \State  Initialize $k$ independent weighted reservoir samplers $S_j$, sampling row $X'_j$, for each $j\in [k]$;
 	\For{$i=1\ldots n$}
 	    \State $x_i \in \mathbb{R}^d := $ $i$th column of $X^T$;
 	    \State $M = M + x_i x_i^T$;
 		\If{$\lVert Q x_i \rVert_2 = \lVert x_i \rVert_2$} 
 		    \Comment{is $x_i$ in column-space of Q?}
 		    \State $M_{inv} = M_{inv} - \frac{M_{inv}x_ix_i^TM_{inv}}{1 + x_i^T M_{inv} x_i}$;
 		    \Comment{adapted Sherman-Morrison formula}
        \Else
            \State $M_{inv} = M^\dagger$; \Comment{Moore-Penrose pseudoinverse}
            \State $QR = M$; \Comment{$QR$ decomposition of $M$}
 		\EndIf
 		\State $\ell_i = \min\{x_i^T M_{inv} x_i, 1\}$; \Comment{approximate $\ell_2$-leverage scores}
 		\For{$j=1\ldots k$}
 		    \State Feed $s_i = \ell_i + \frac{1}{n}$ to $S_j$; \Comment{unnormalized sampling probabilities}
 		    \If{$S_j$ samples $x_i$}
 		        \State $w_j = {1}/{(k \cdot s_i)}$; \Comment{unnormalized weights}
 		        \State $X_j' =  x_i$; \Comment{save row in coreset}
 		    \EndIf
 		\EndFor
 	\EndFor
 	\State $w = w\cdot \sum_{i=1}^n s_i$; \Comment{normalize weights}
	\State \textbf{return} $C = (X', w)$;
  \end{algorithmic}
\end{algorithm}

\begin{figure}[ht!]
    \centering
    \includegraphics[width=.49\linewidth]{plots/2d-example.pdf}
    \includegraphics[width=.49\linewidth]{plots/2d-example-residual-plot.pdf}\\
    \includegraphics[width=0.49\linewidth]{plots/kddcup-compare-coefficients.pdf}
    \includegraphics[width=0.49\linewidth]{plots/webspam-compare-coefficients.pdf}\\
    \includegraphics[width=0.47\linewidth]{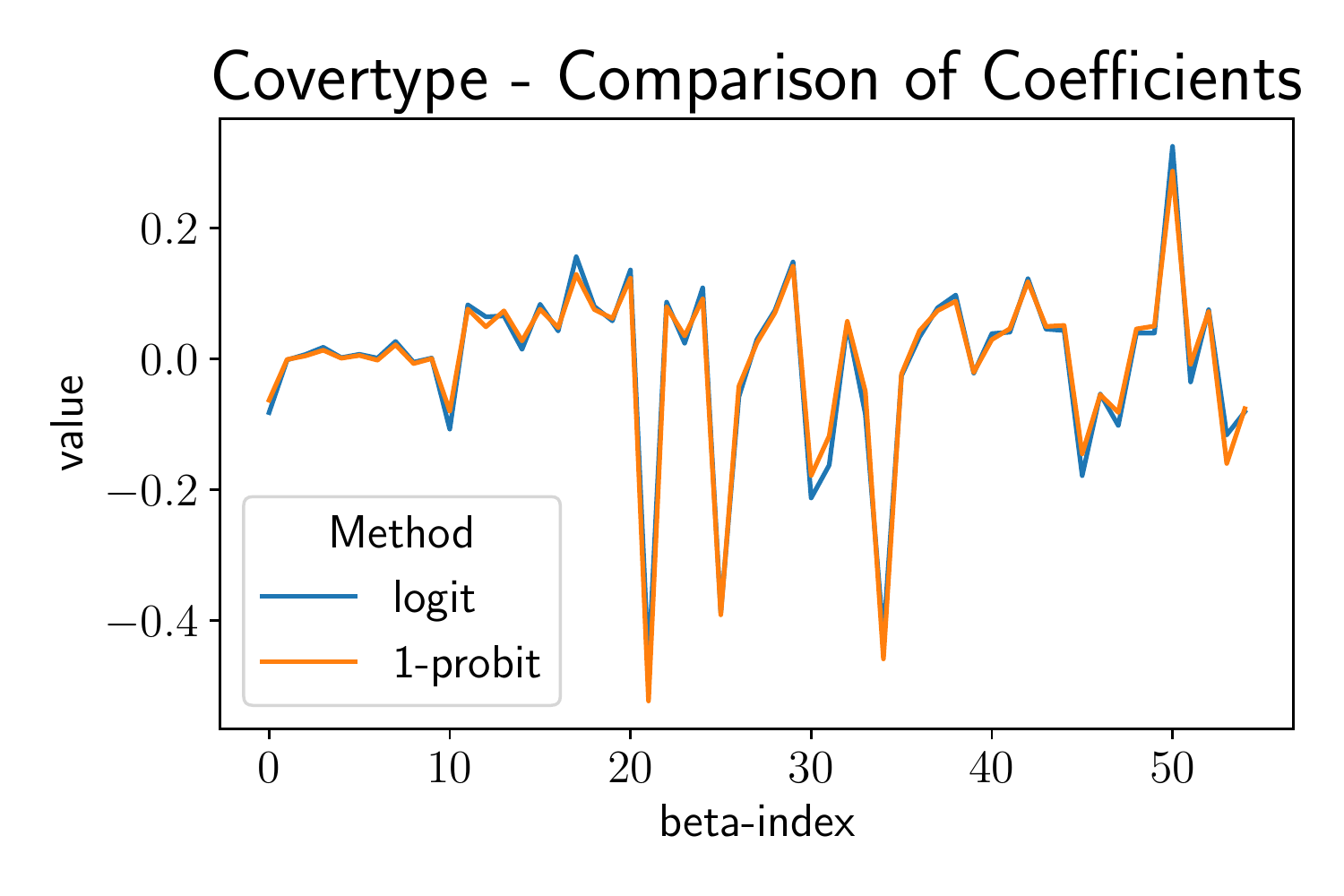}
    \includegraphics[width=0.47\linewidth]{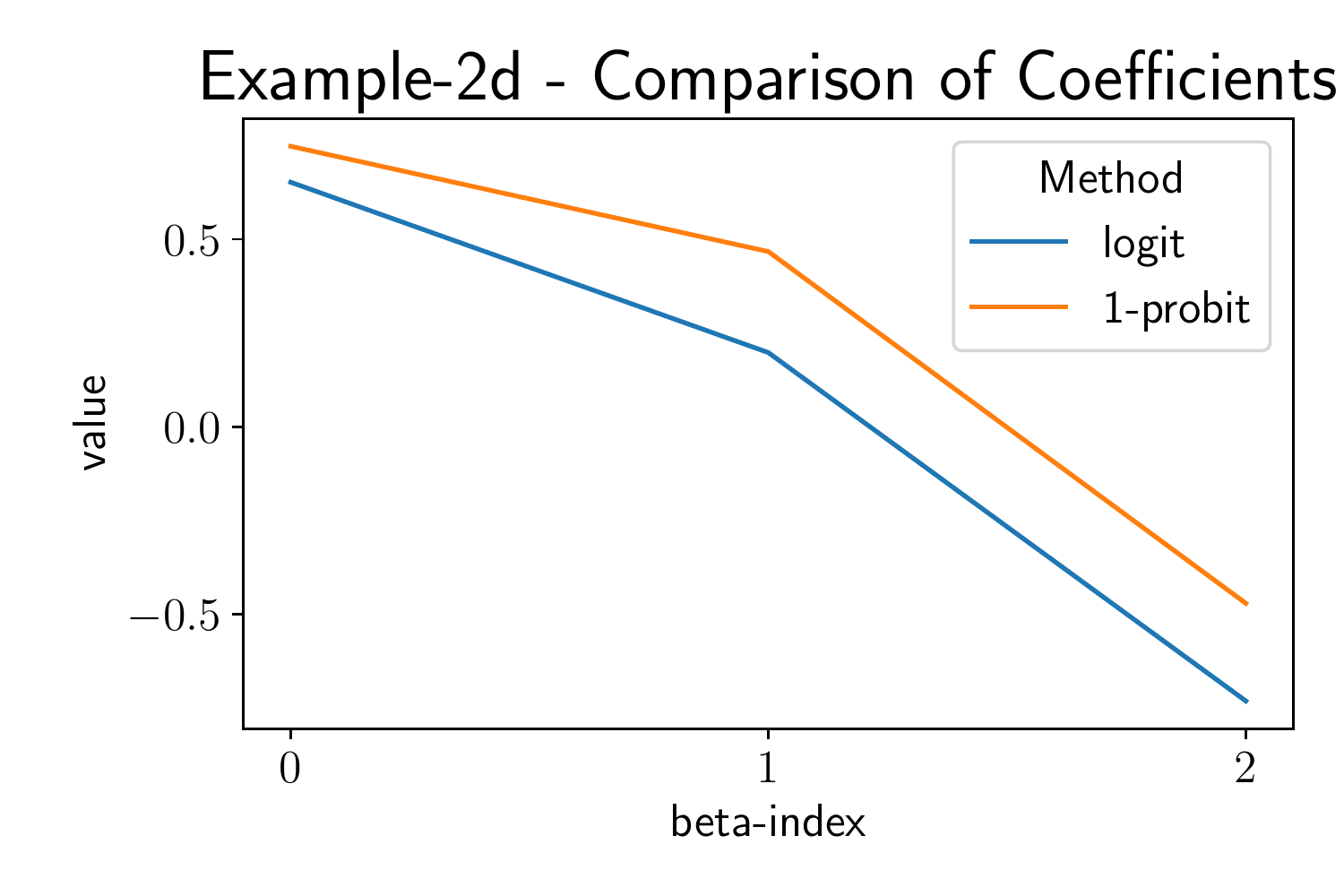}
    \caption{\textbf{(Q1)} \& \textbf{(Q2)}: (top) A 2D example data set that demonstrates how different values of $p$ affect the linear separator, as well as the distribution of the residuals $X{\beta}/{\|\beta\|_2}$ of misclassifications. (middle \& bottom) Comparison of normed coefficients for logit vs. 1-probit model for different data sets.}
    \label{fig:2d-example:appendix}
\end{figure}

\begin{figure*}[ht!]
\begin{center}
\begin{tabular}{ccc}
\includegraphics[width=0.309\linewidth]{plots/covertype_ratio_plot_p_1.pdf}&
\includegraphics[width=0.309\linewidth]{plots/webspam_ratio_plot_p_1.pdf}&
\includegraphics[width=0.309\linewidth]{plots/kddcup_ratio_plot_p_1.pdf}\\

\includegraphics[width=0.309\linewidth]{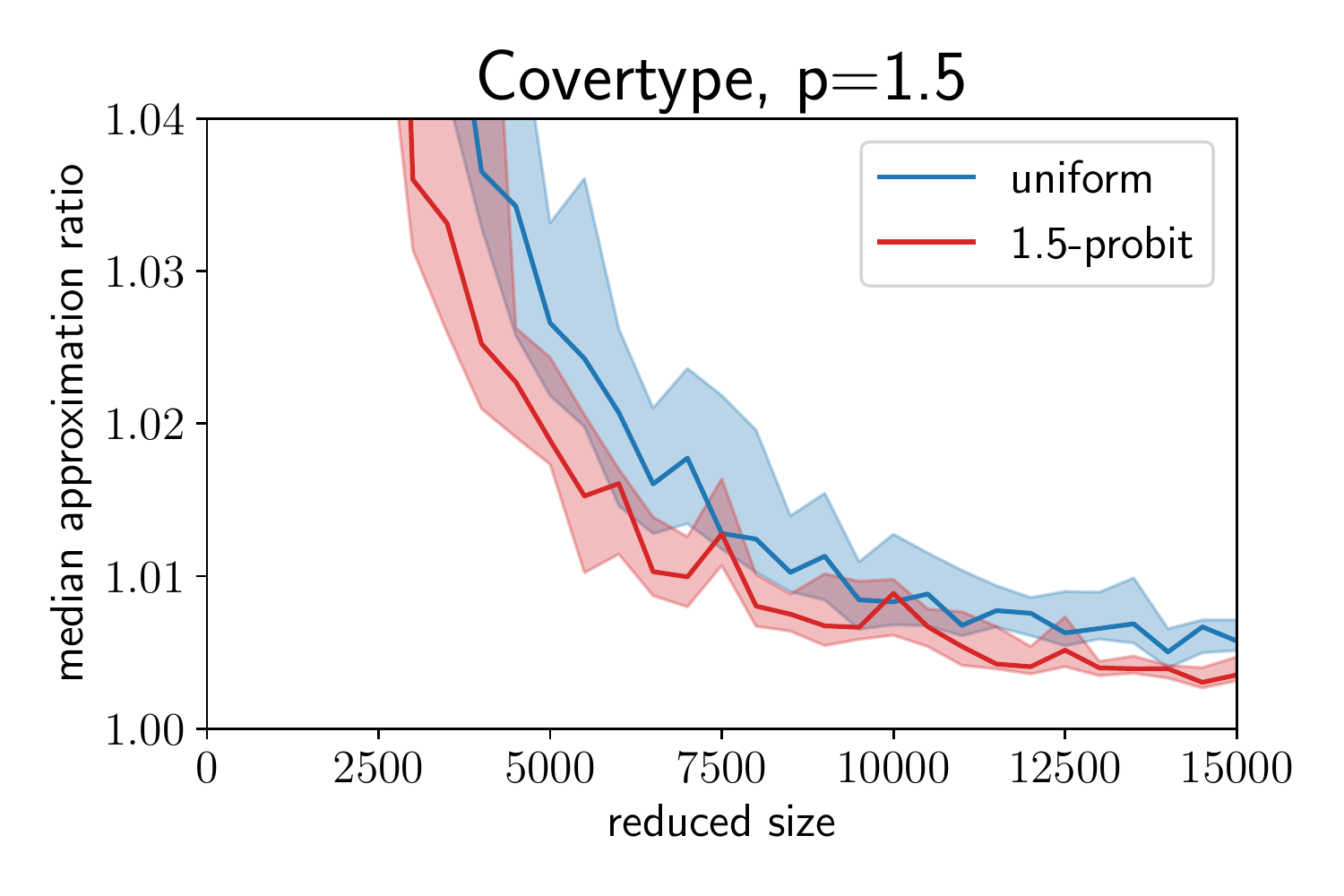}&
\includegraphics[width=0.309\linewidth]{plots/webspam_ratio_plot_p_1.5.pdf}&
\includegraphics[width=0.309\linewidth]{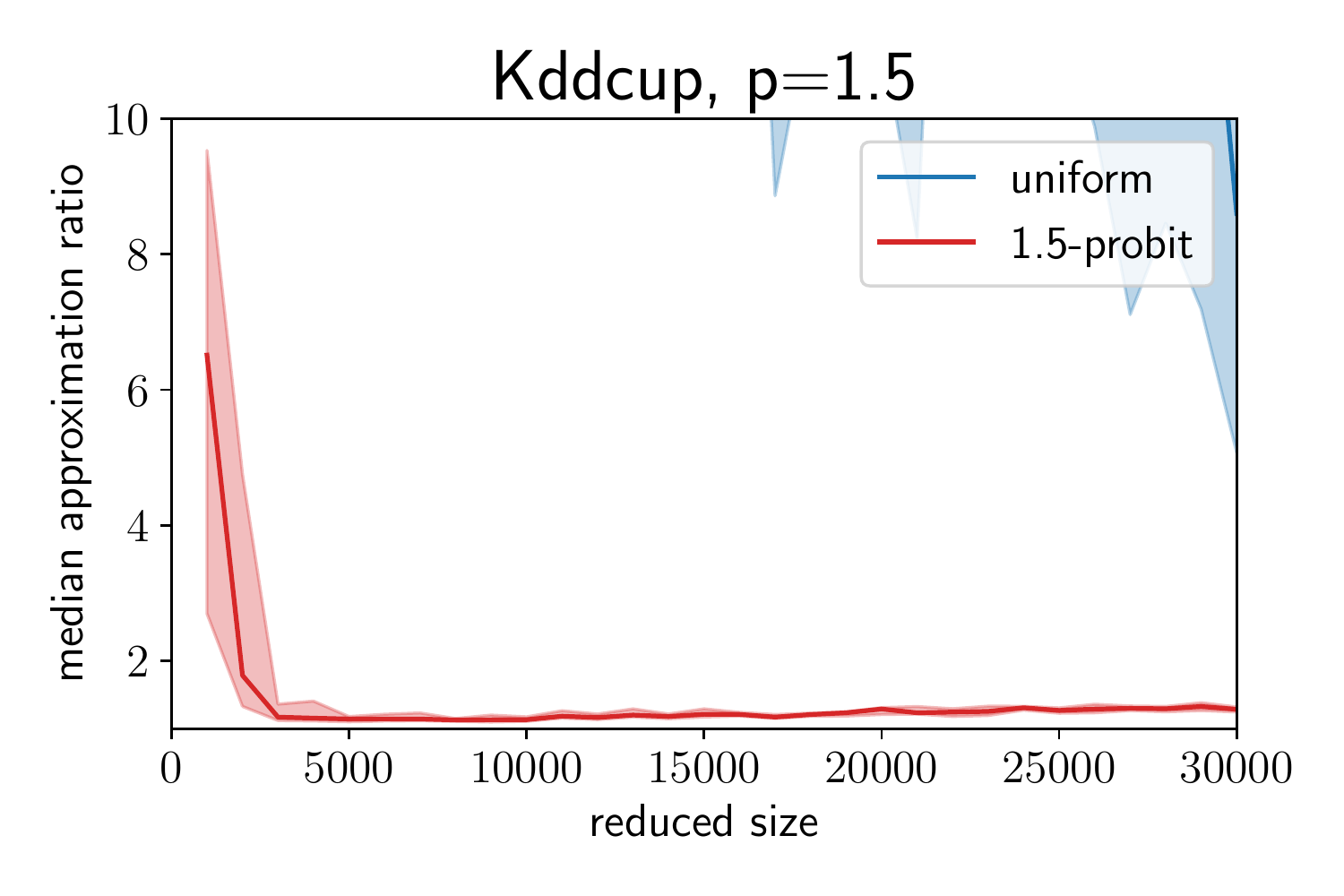}\\

\includegraphics[width=0.309\linewidth]{plots/covertype_ratio_plot_p_2.pdf}&
\includegraphics[width=0.309\linewidth]{plots/webspam_ratio_plot_p_2.pdf}&
\includegraphics[width=0.309\linewidth]{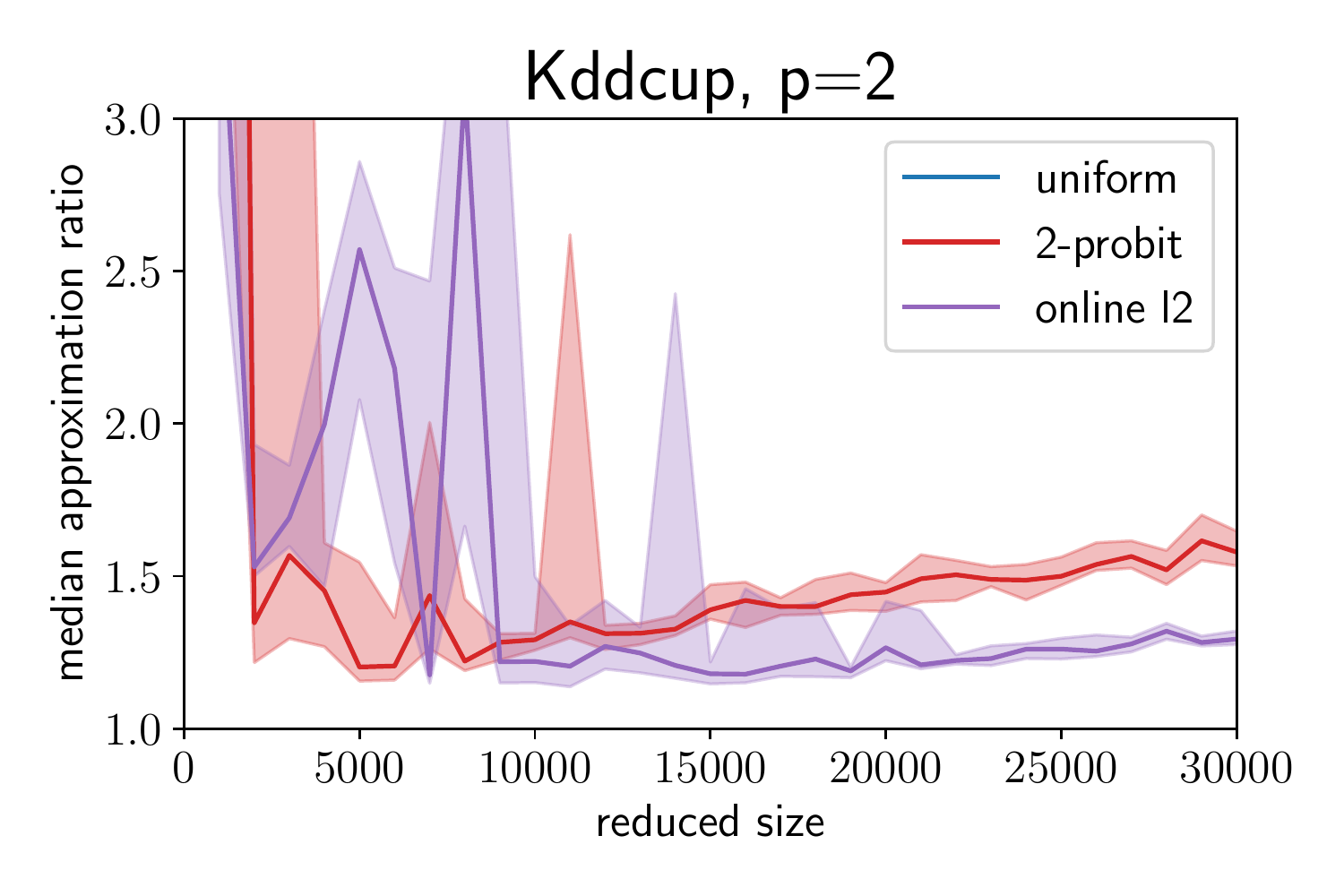}\\

\includegraphics[width=0.309\linewidth]{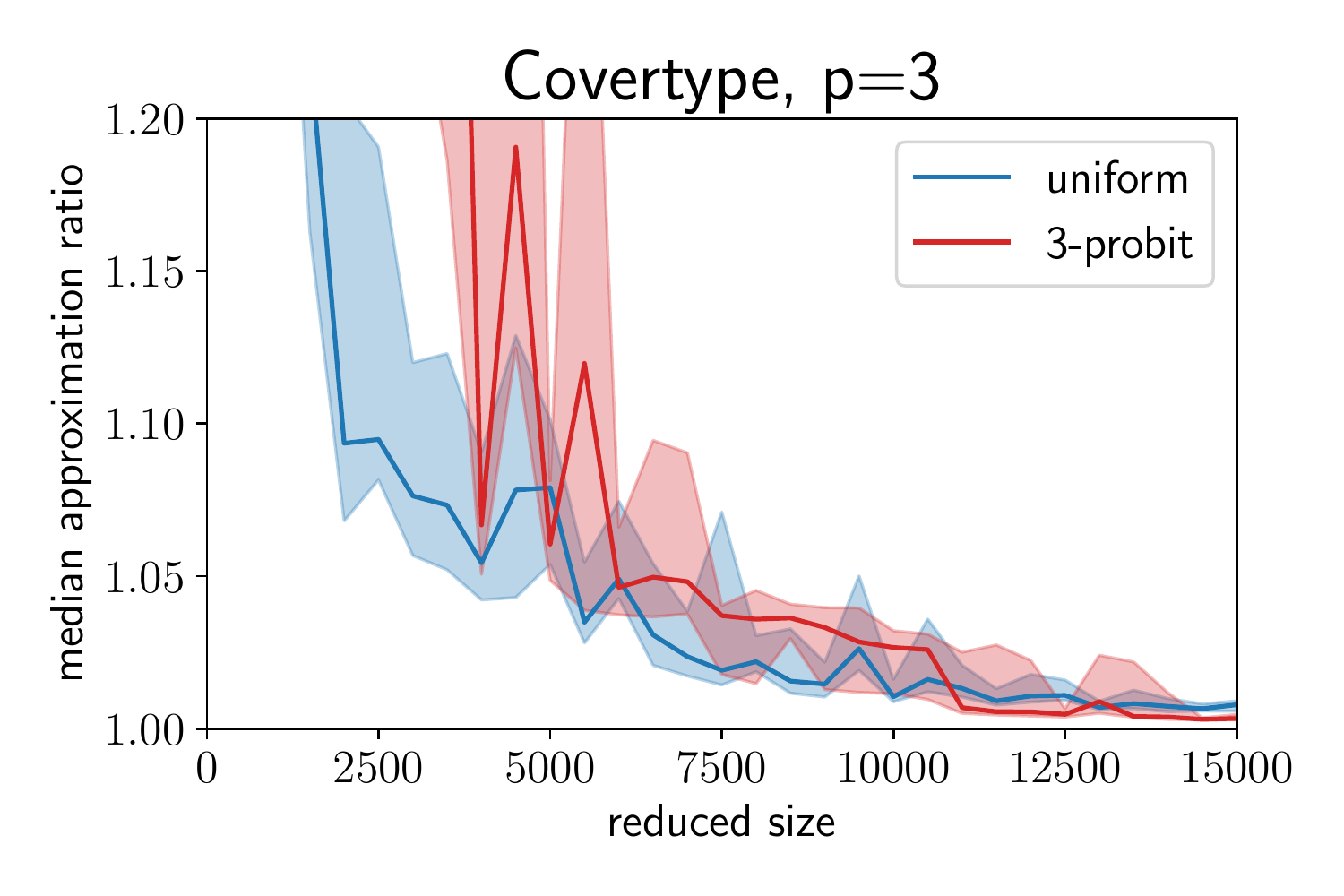}&
\includegraphics[width=0.309\linewidth]{plots/webspam_ratio_plot_p_3.pdf}&
\includegraphics[width=0.309\linewidth]{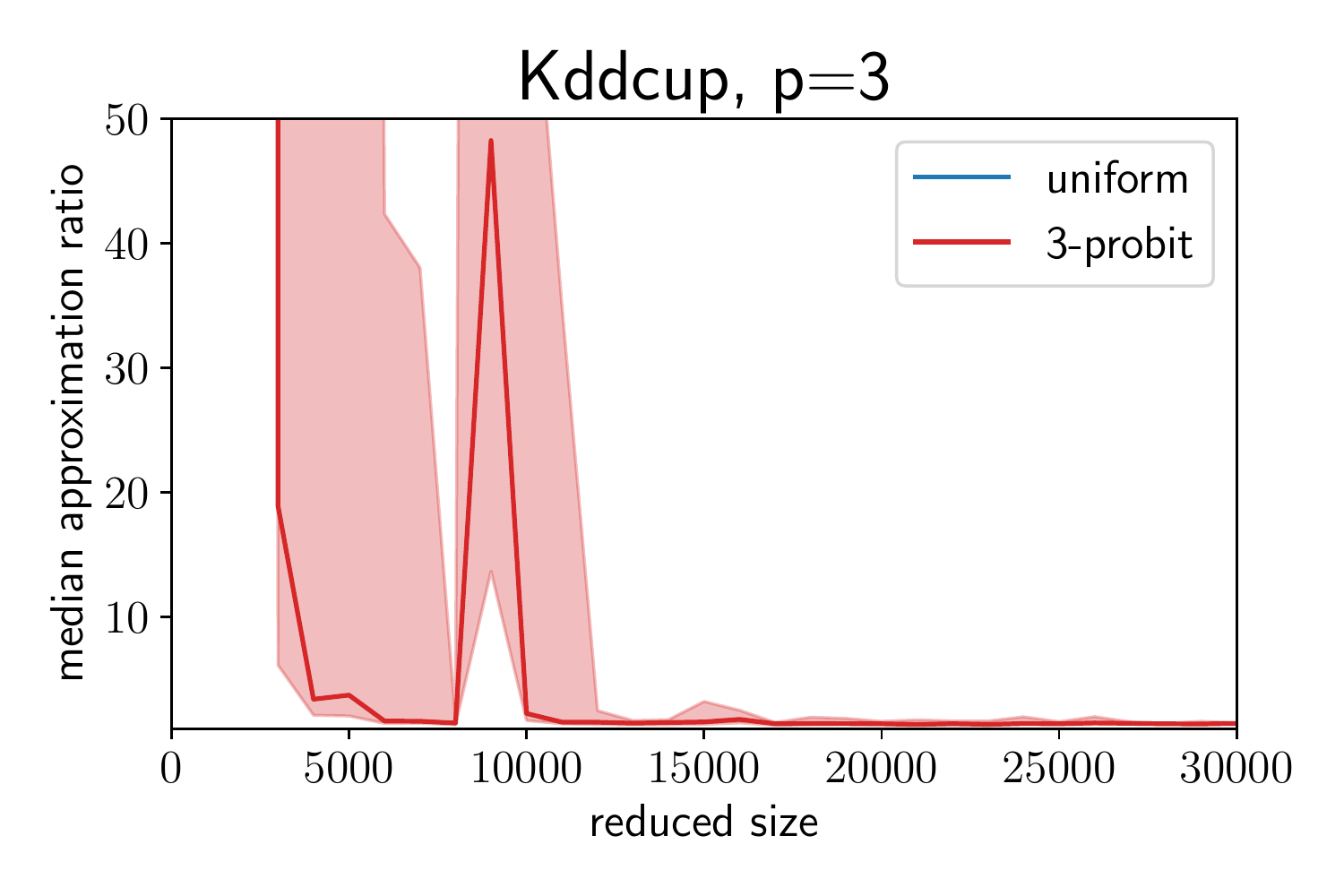}\\

\includegraphics[width=0.309\linewidth]{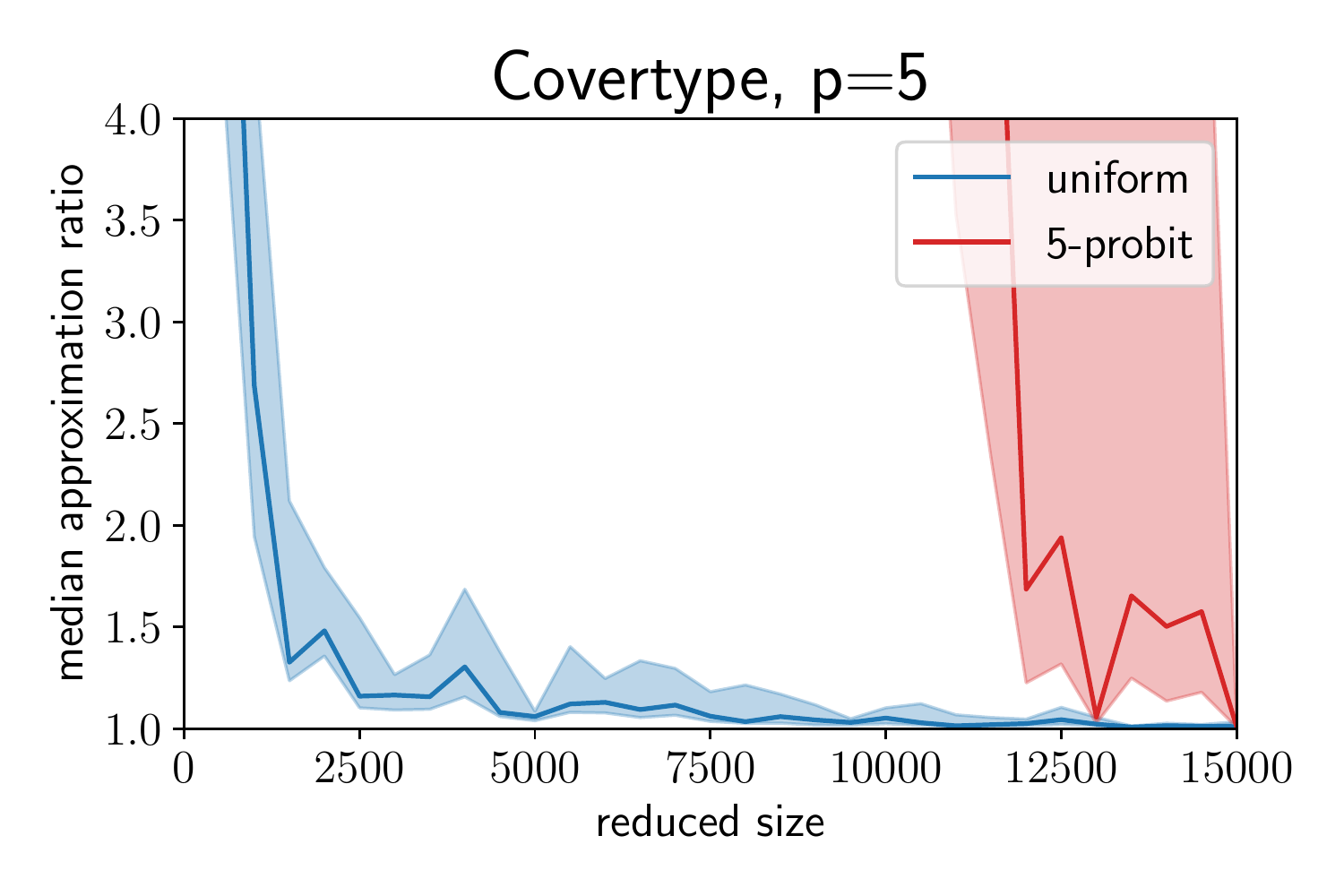}&
\includegraphics[width=0.309\linewidth]{plots/webspam_ratio_plot_p_5.pdf}&
\includegraphics[width=0.309\linewidth]{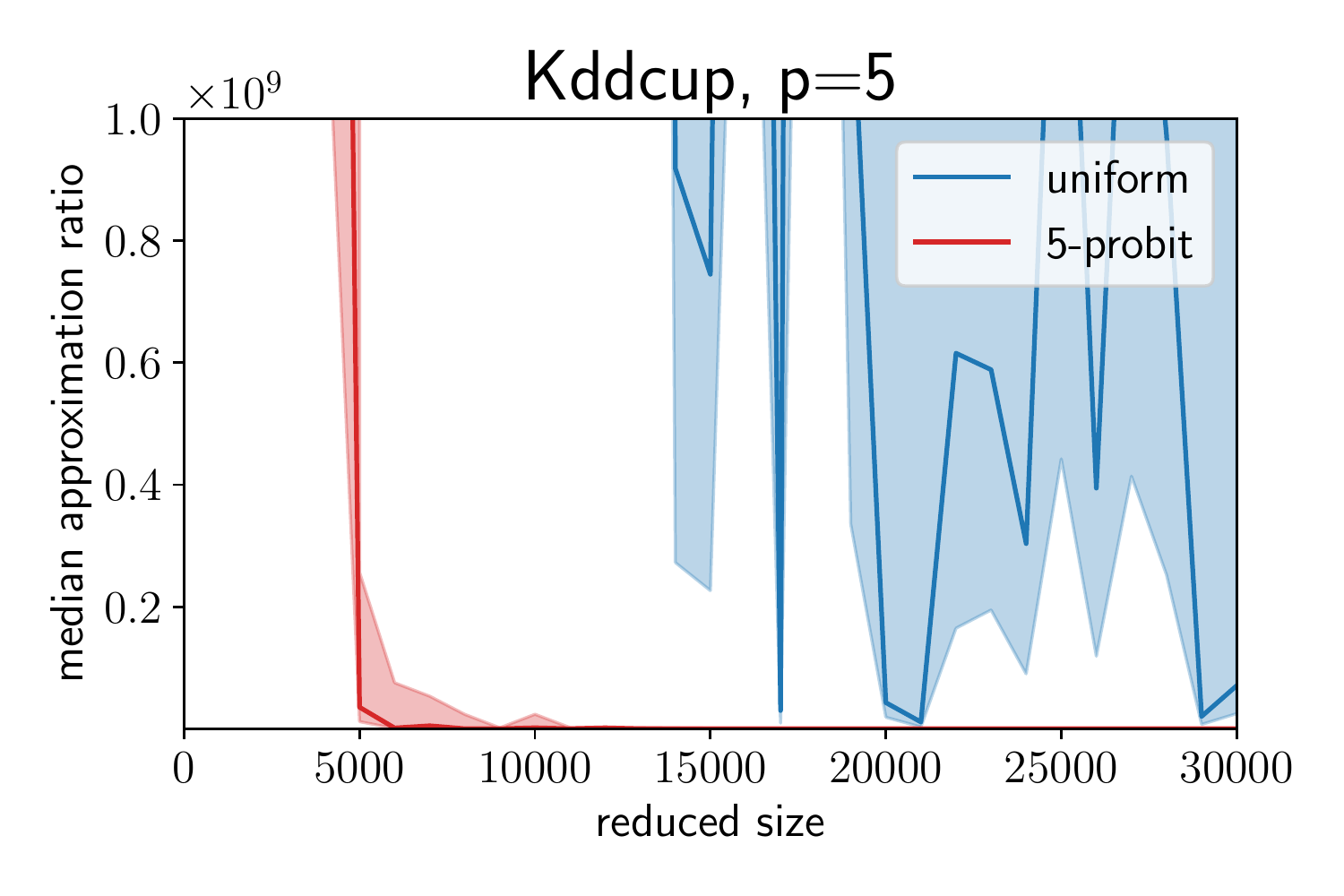}\\
\end{tabular}
\caption{Comparison of approximation ratios for $p\in \{1, 1.5, 2, 3, 5\}$ on the 
data sets Covertype, Webspam and Kddcup. The median approximation ratio denotes for each sample size the median of the approximation ratios over all repetitions. The solid line indicates the median, and the shaded area indicates the normalized interquartile range.}
\label{fig:all-ratio-plots}
\end{center}
\end{figure*}

\begin{figure*}[ht!]
\begin{center}
\begin{tabular}{ccc}
\includegraphics[width=0.309\linewidth]{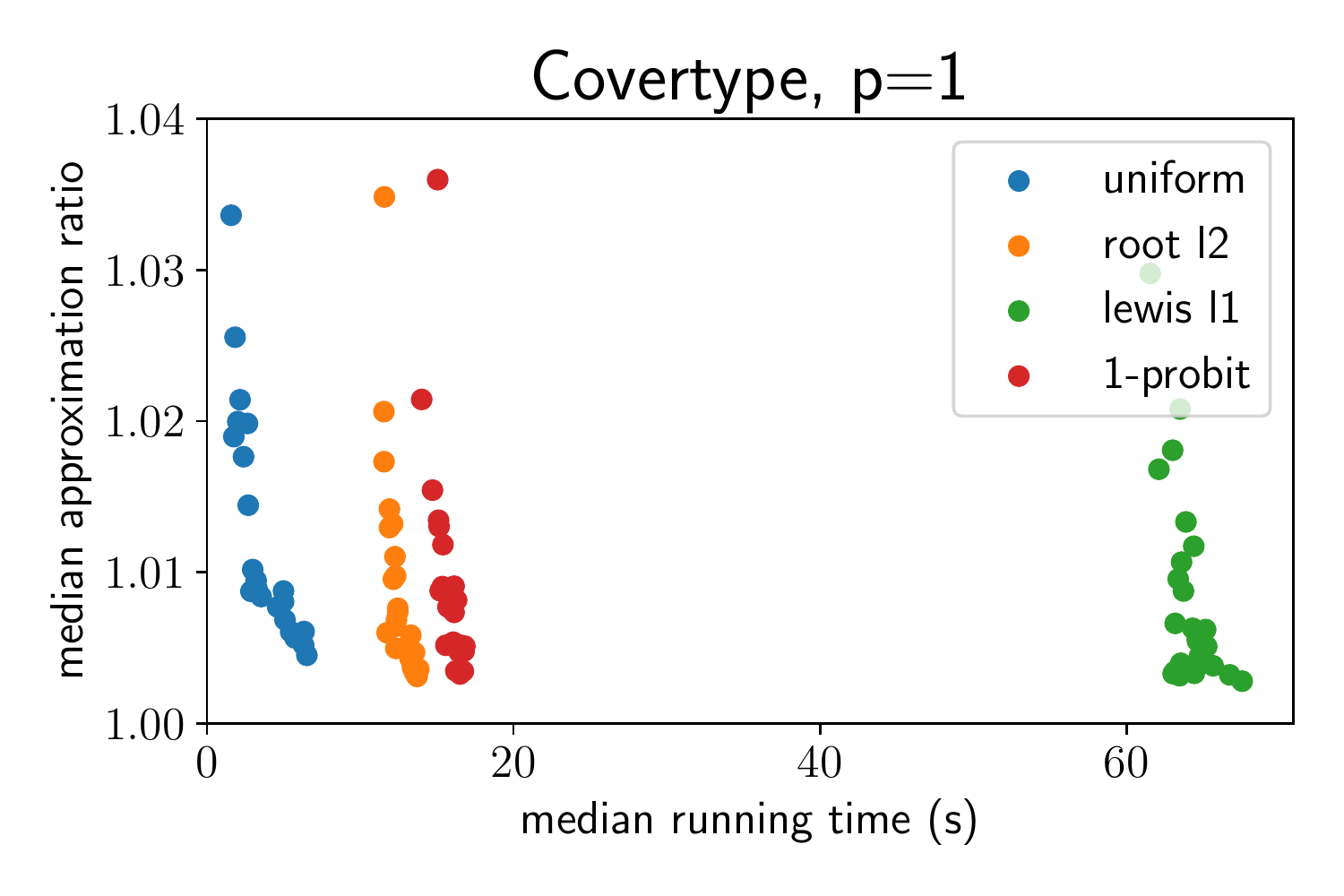}&
\includegraphics[width=0.309\linewidth]{plots/webspam_runtime_plot_p_1.pdf}&
\includegraphics[width=0.309\linewidth]{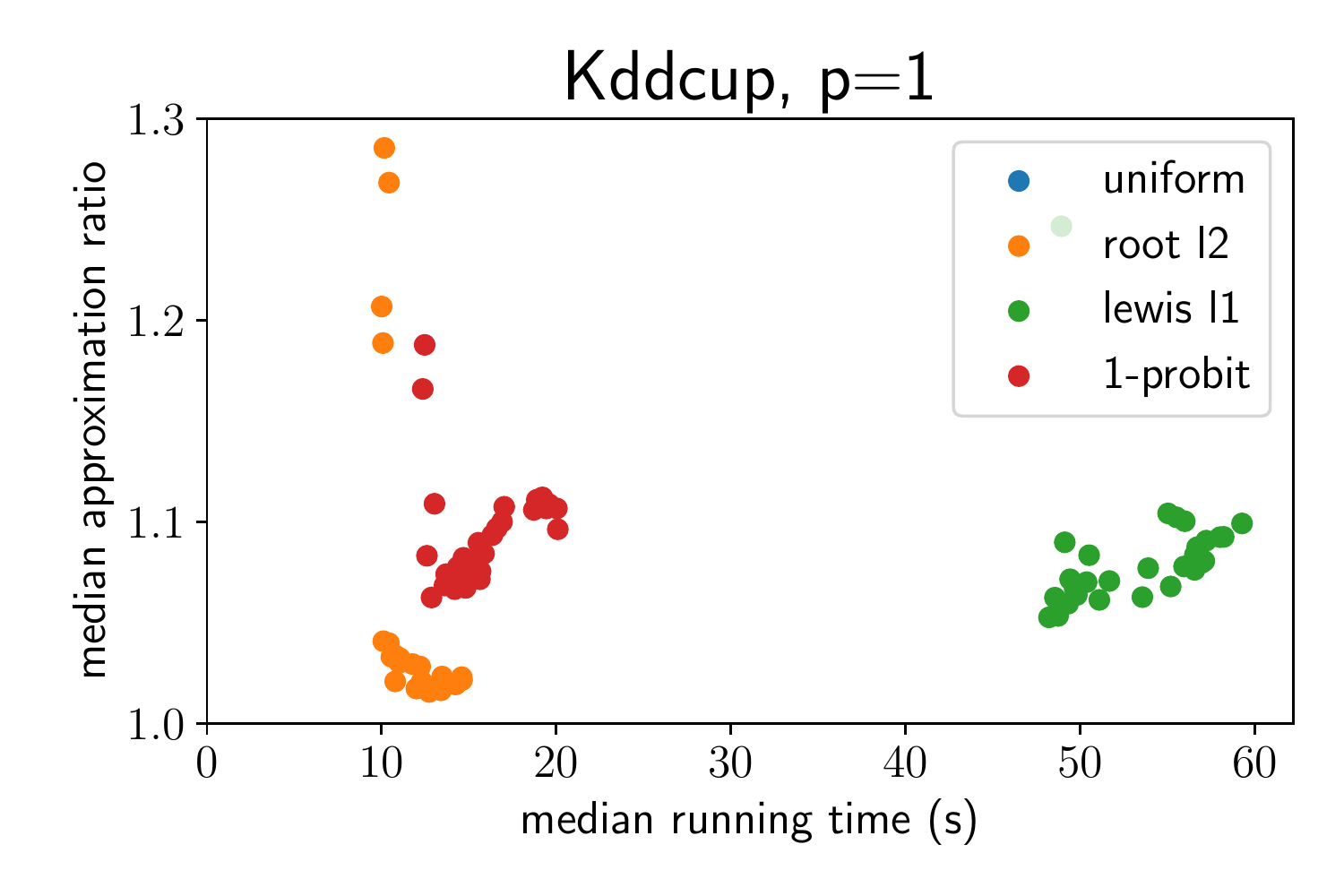}\\

\includegraphics[width=0.309\linewidth]{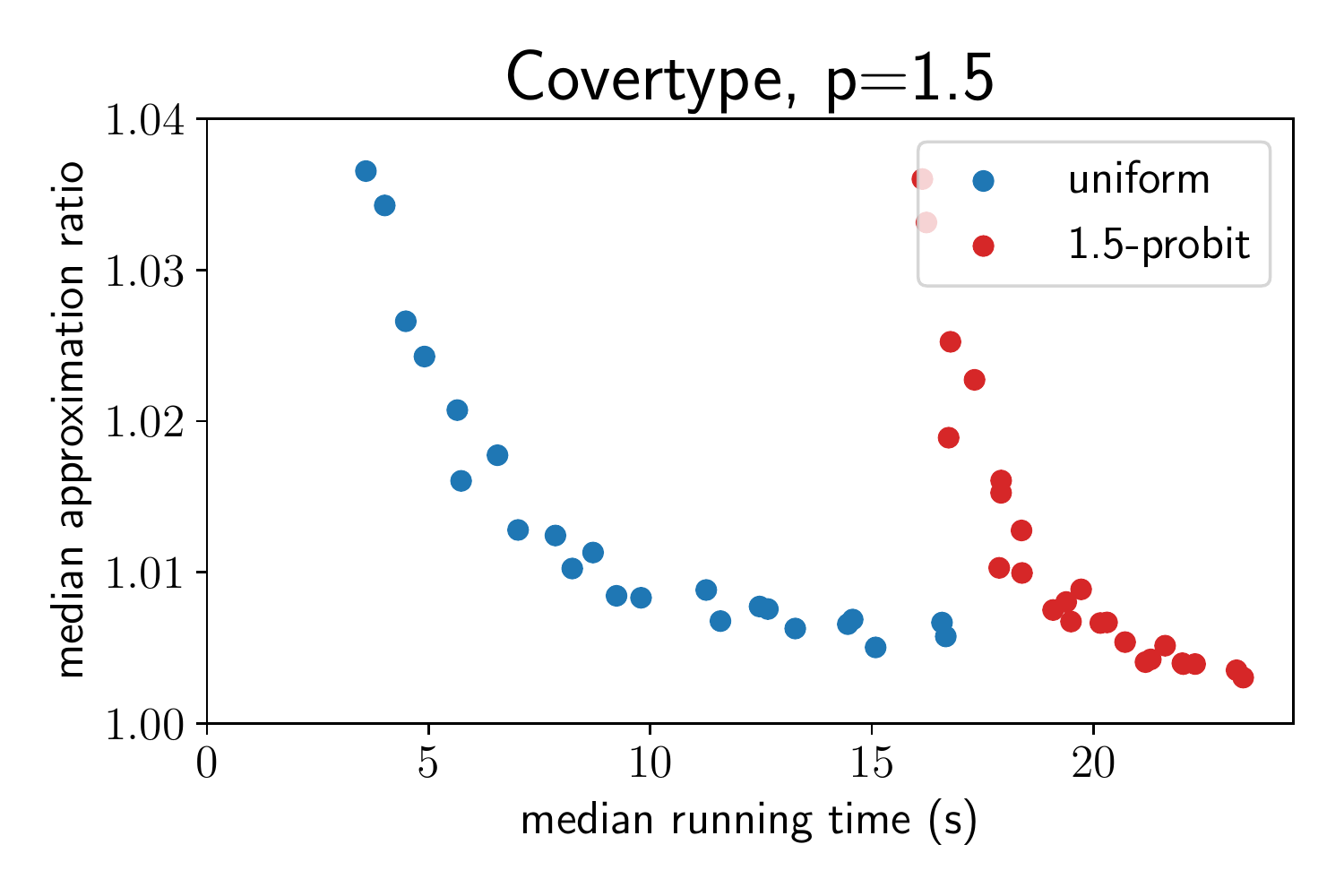}&
\includegraphics[width=0.309\linewidth]{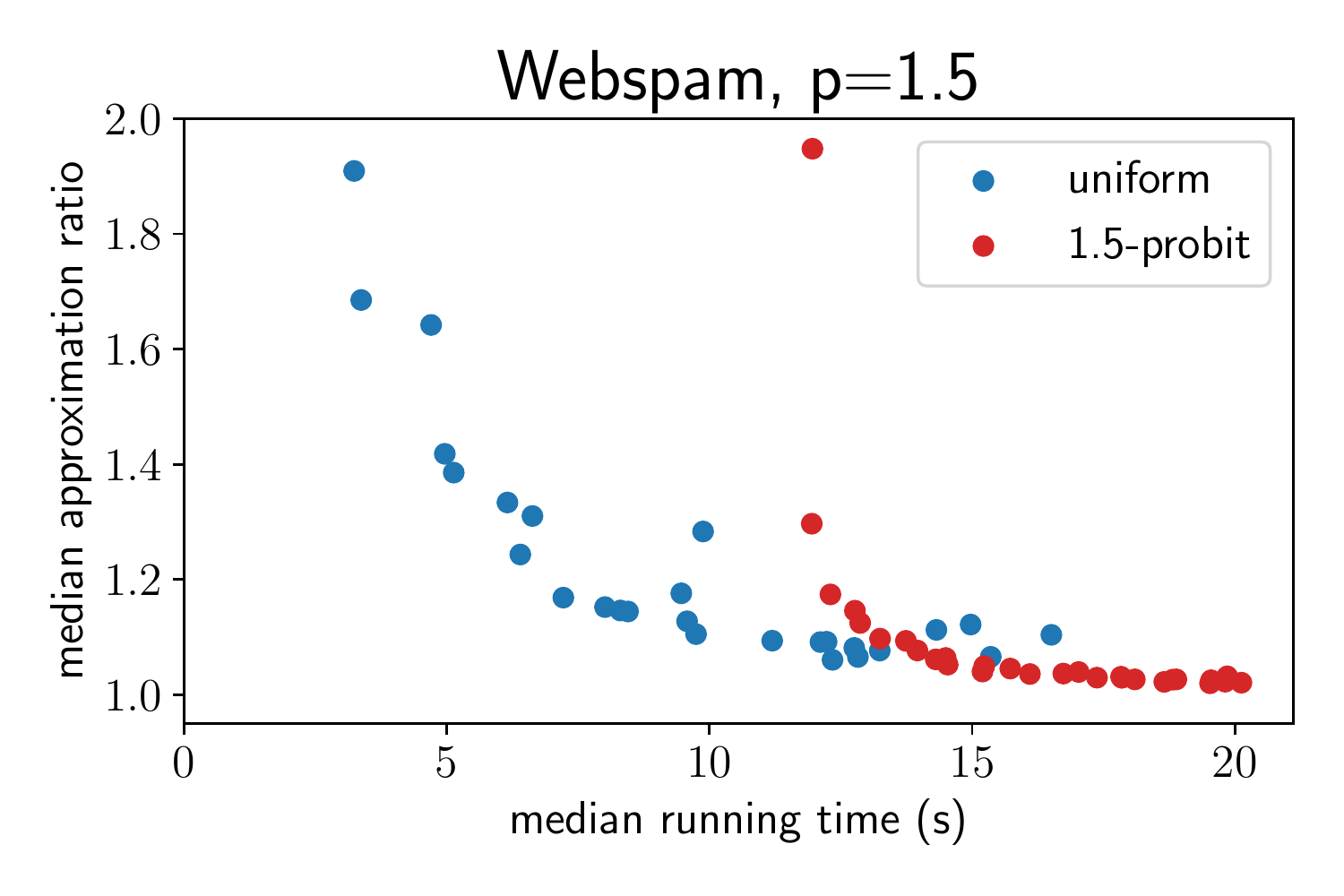}&
\includegraphics[width=0.309\linewidth]{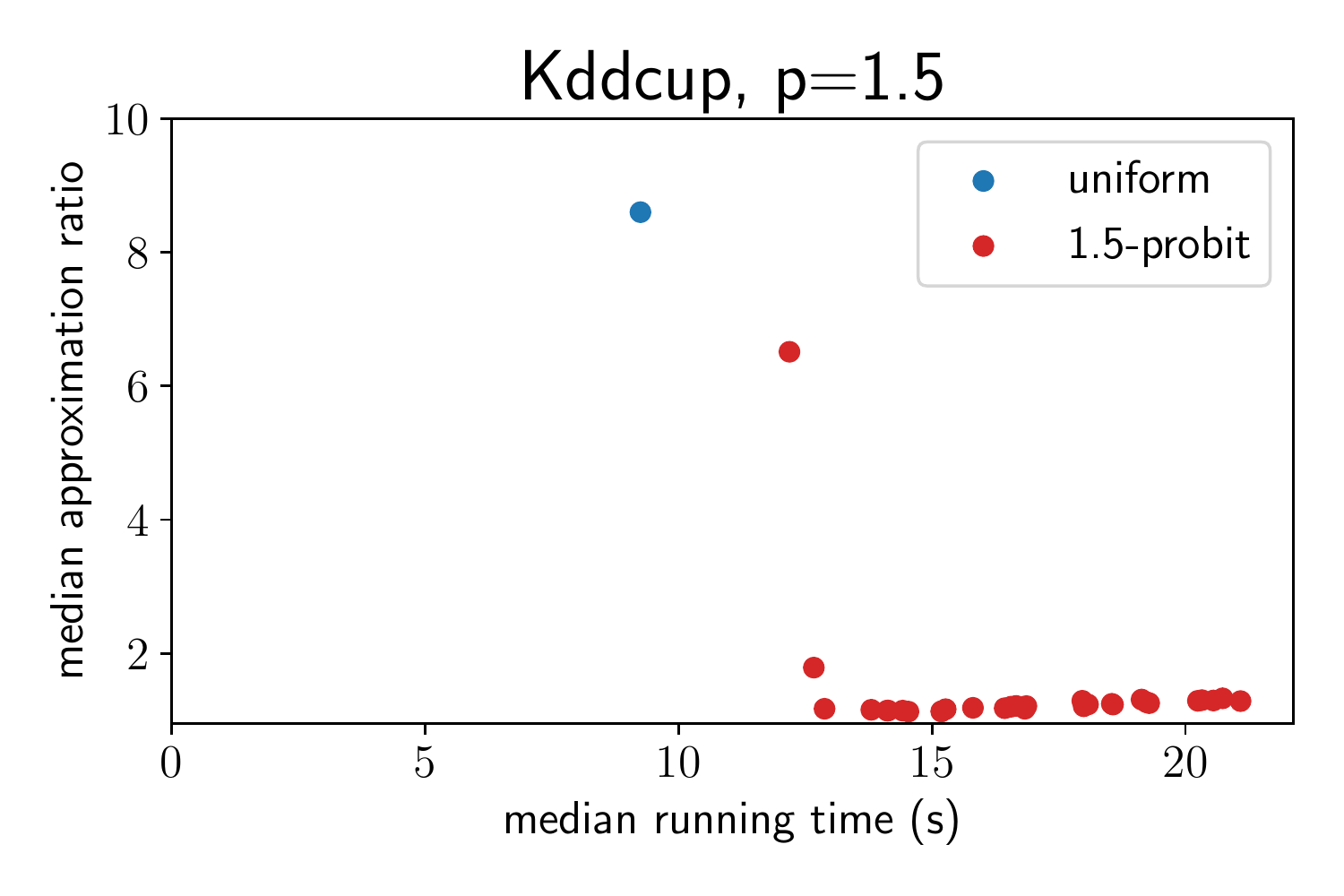}\\

\includegraphics[width=0.309\linewidth]{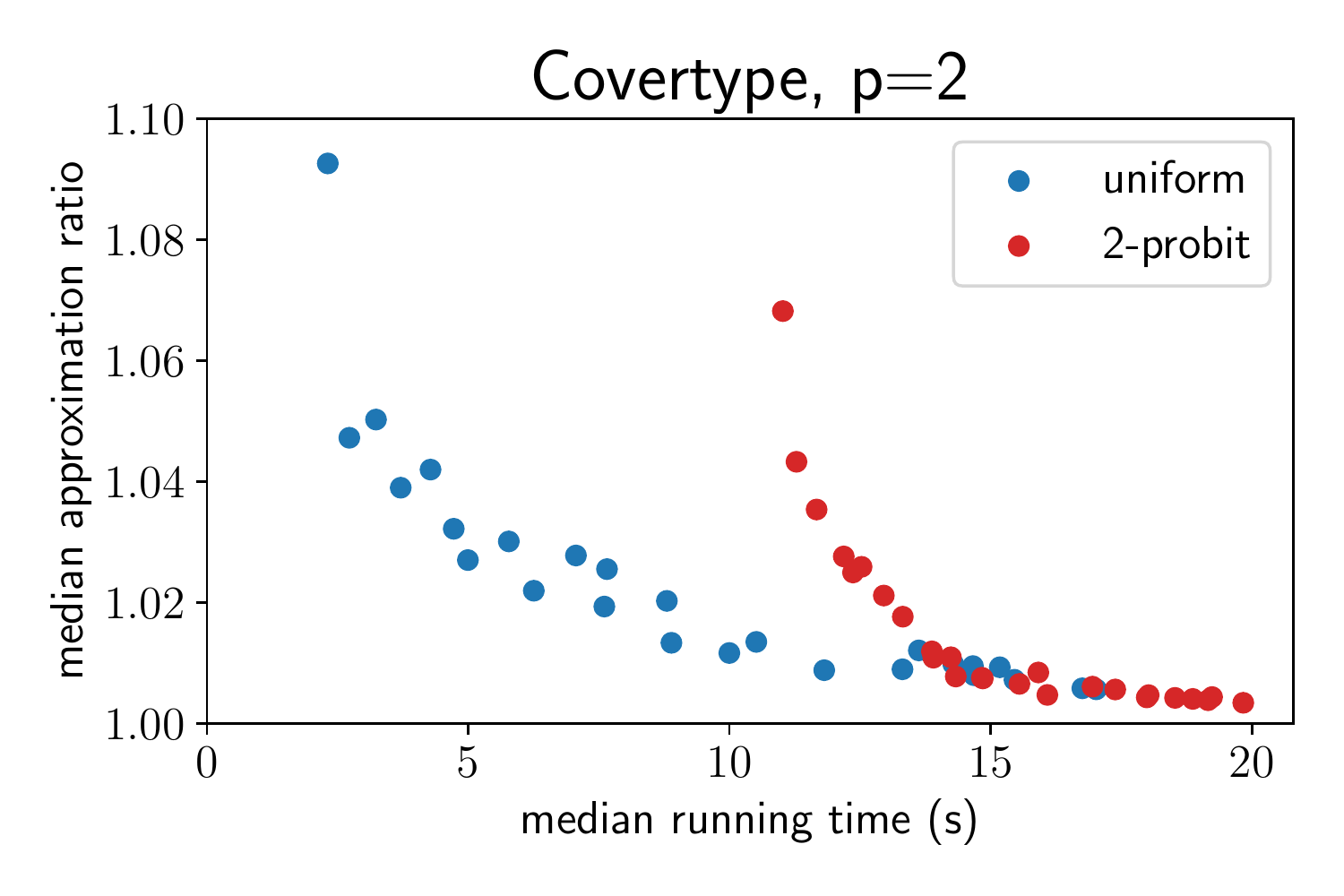}&
\includegraphics[width=0.309\linewidth]{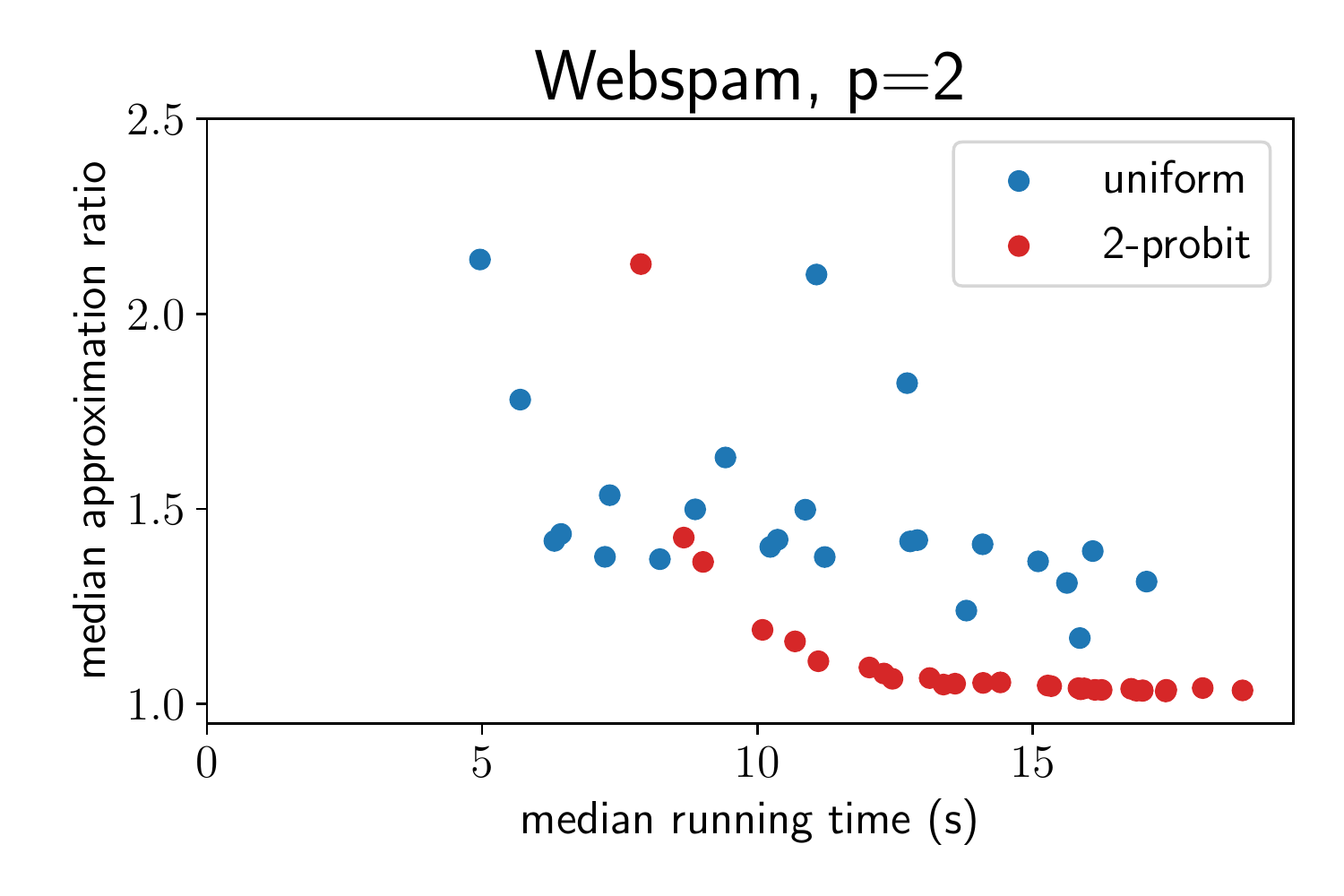}&
\includegraphics[width=0.309\linewidth]{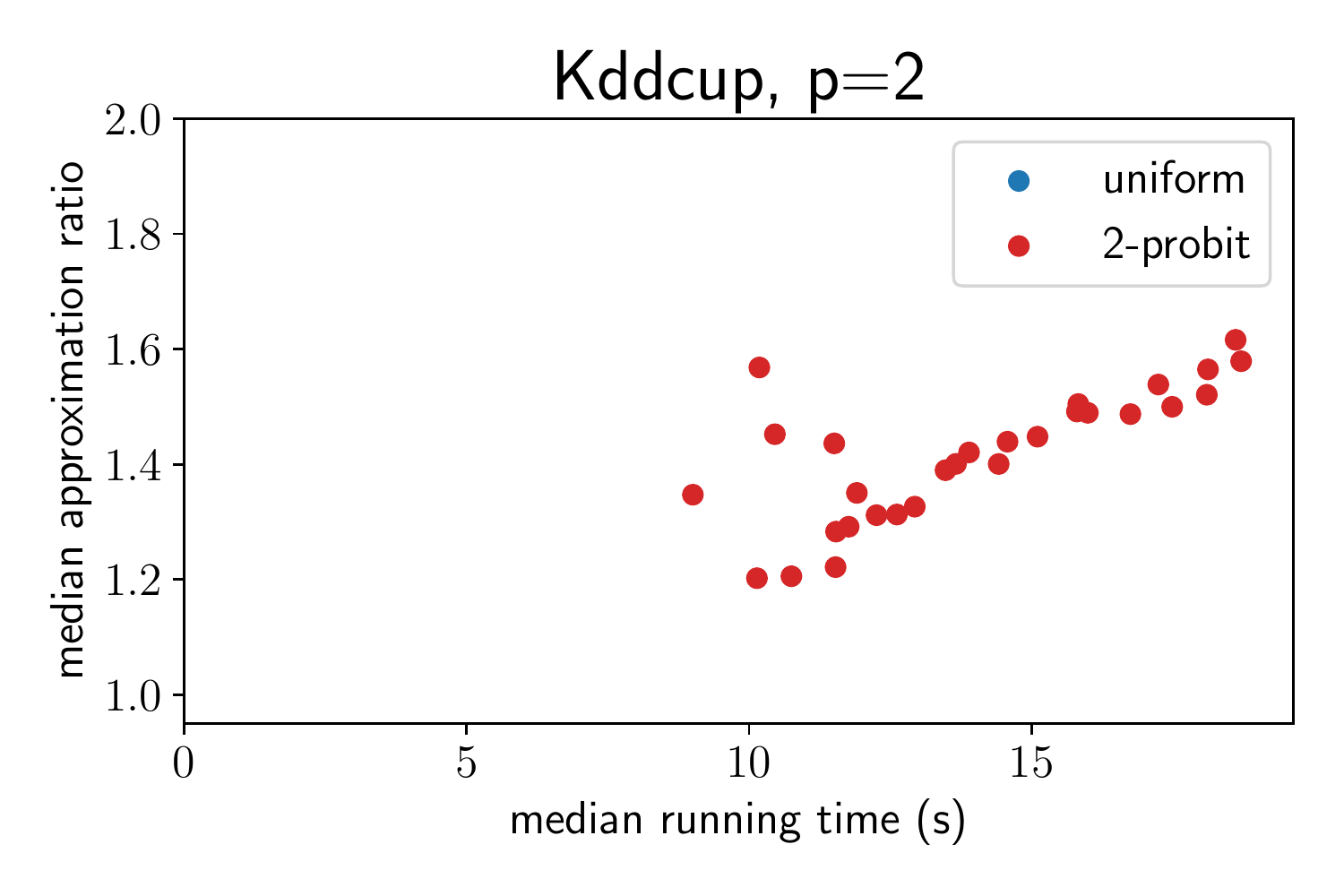}\\

\includegraphics[width=0.309\linewidth]{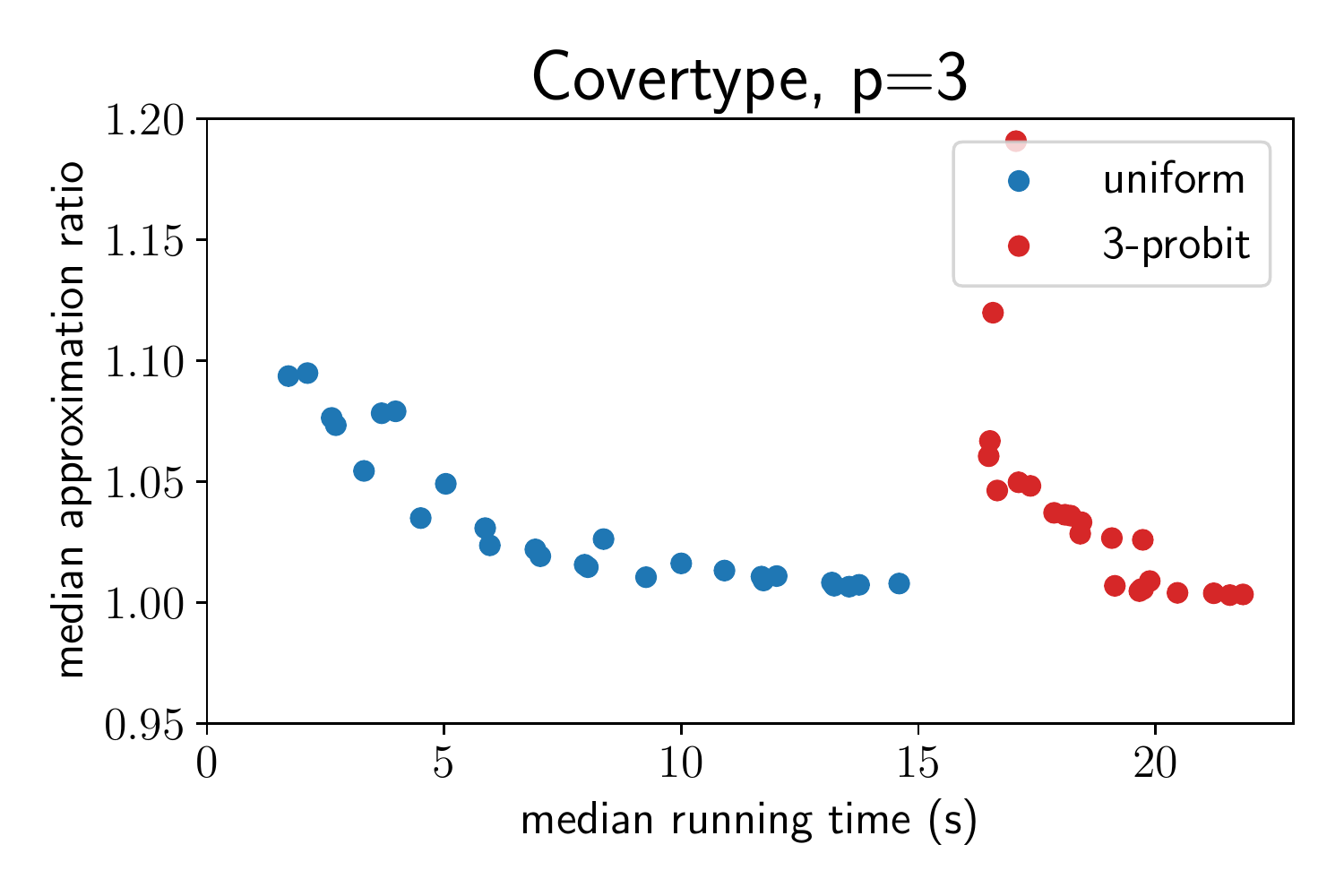}&
\includegraphics[width=0.309\linewidth]{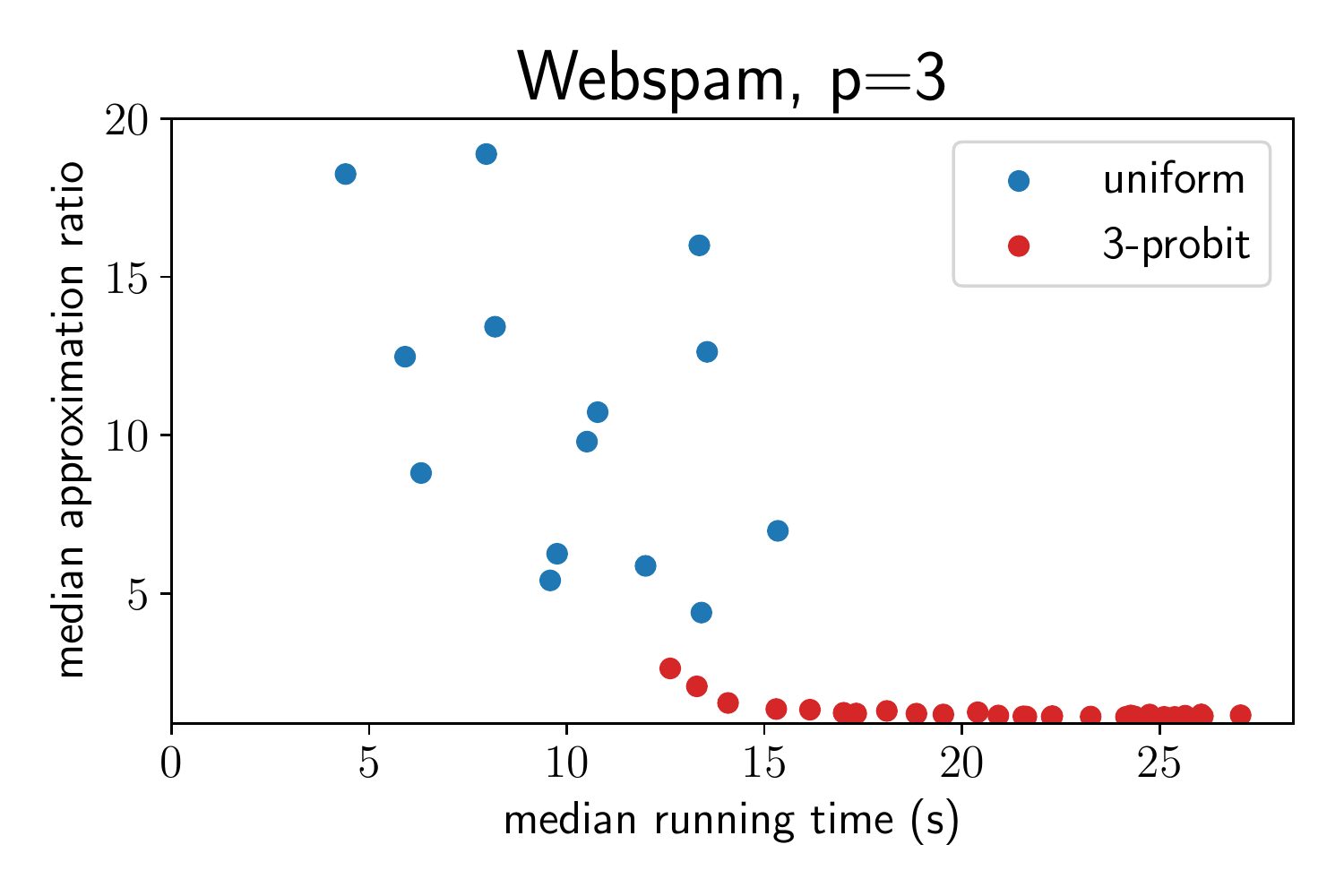}&
\includegraphics[width=0.309\linewidth]{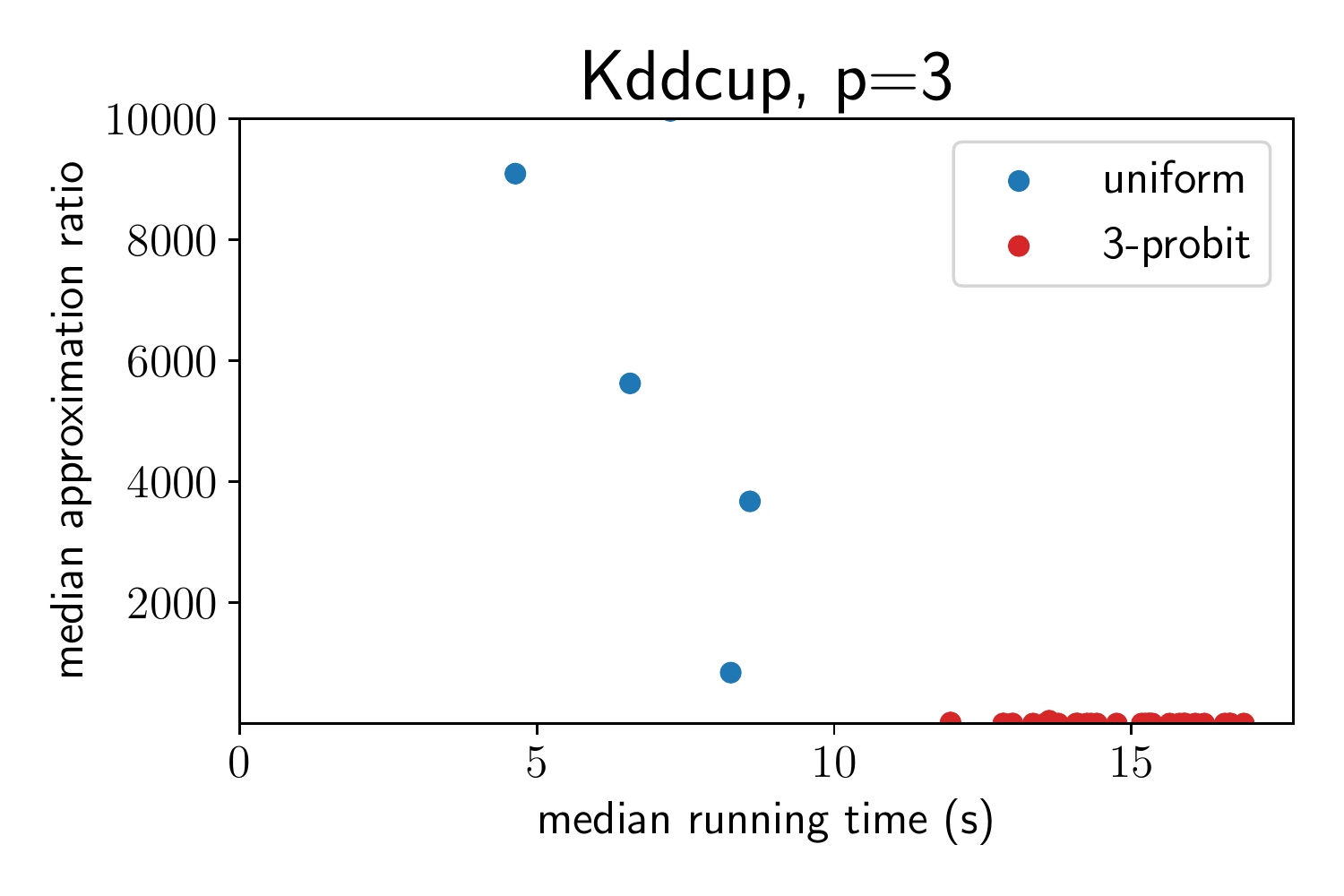}\\

\includegraphics[width=0.309\linewidth]{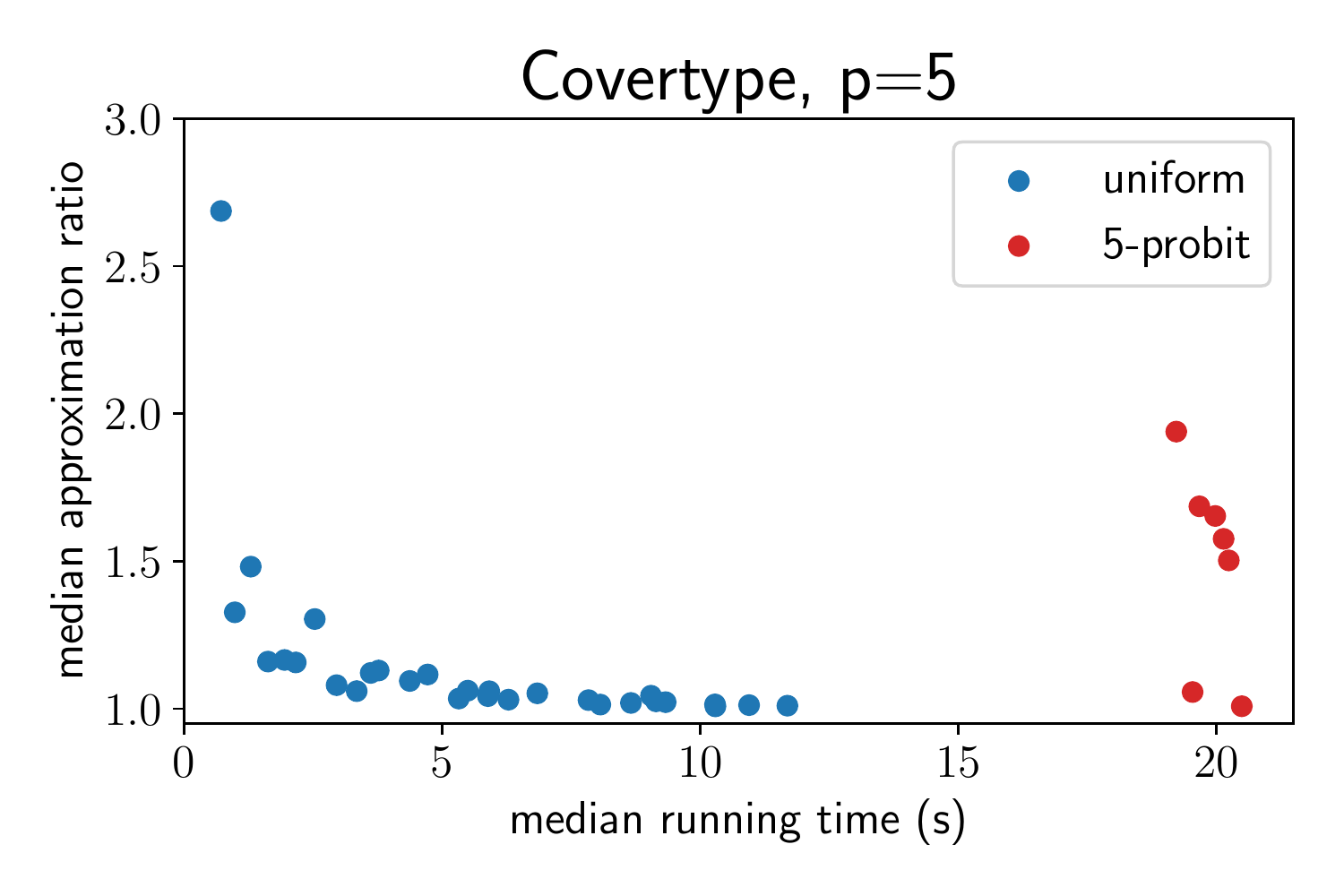}&
\includegraphics[width=0.309\linewidth]{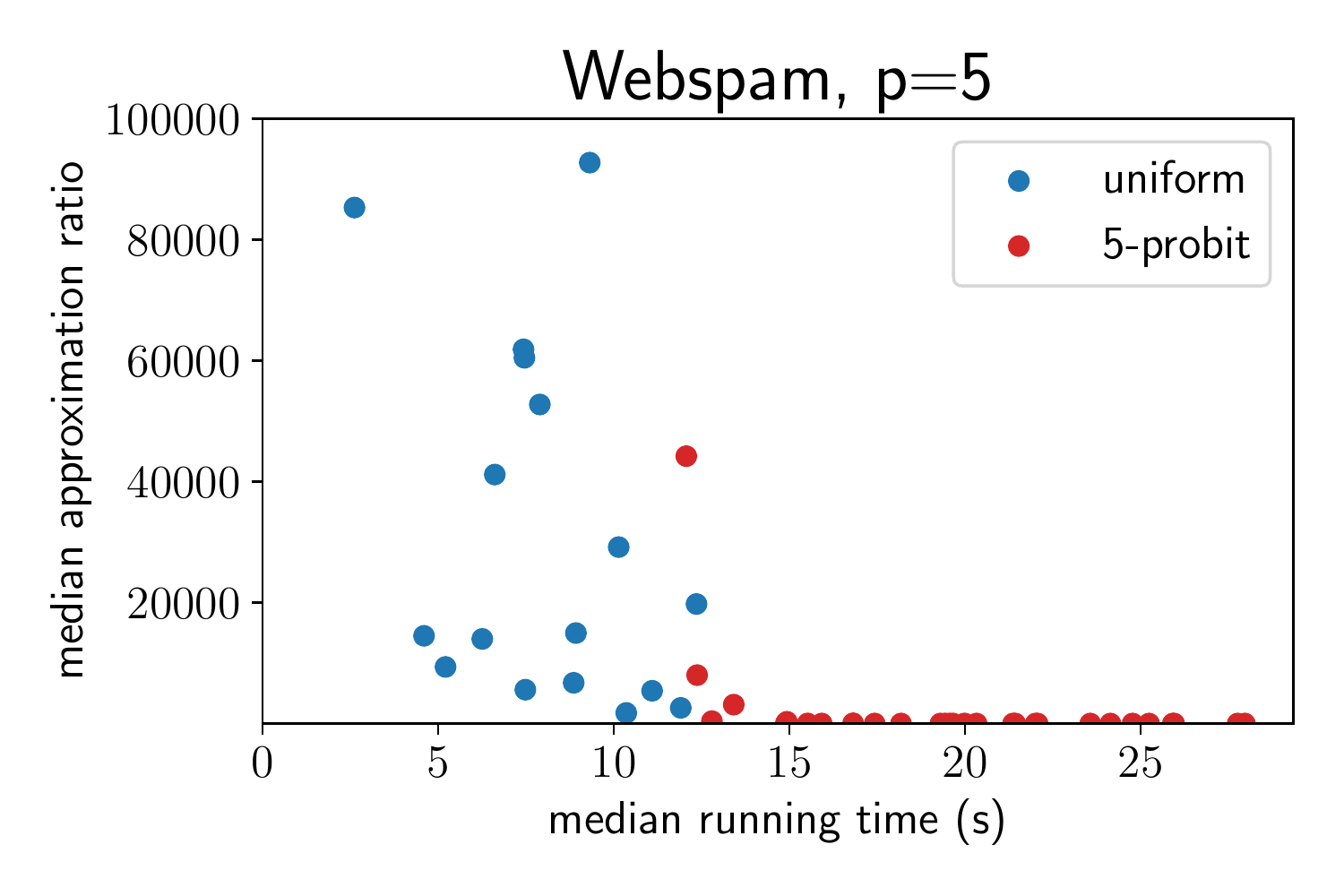}&
\includegraphics[width=0.309\linewidth]{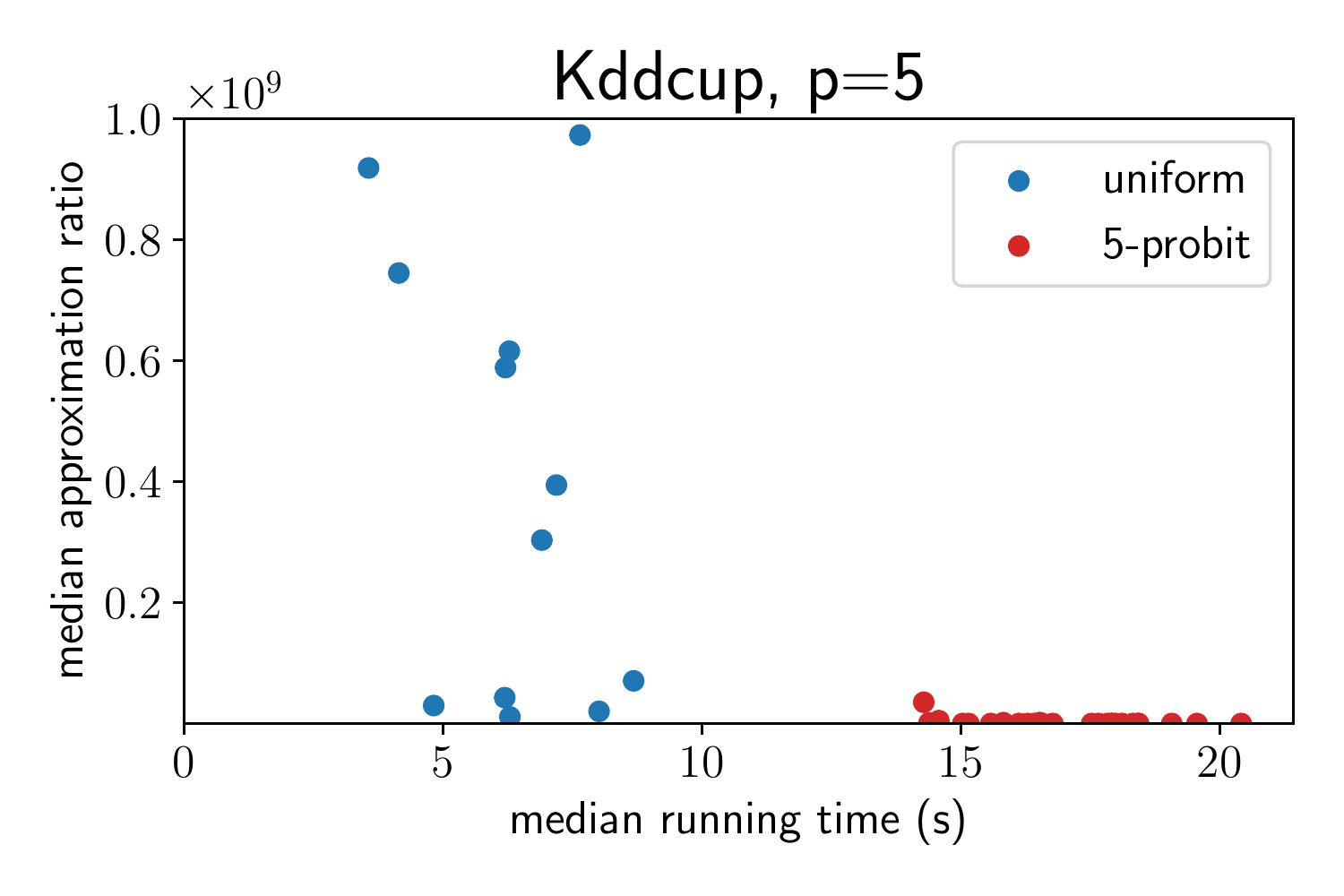}\\
\end{tabular}
\caption{Comparison of total running times (including optimization) vs. accuracy for $p\in \{1, 1.5, 2, 3, 5\}$ on the 
data sets Covertype, Webspam and Kddcup. The median approximation ratio (resp. running time) denotes for each sample size the median of the approximation ratios (resp. running times) over all repetitions.}
\label{fig:all-runtime-plots}
\end{center}
\end{figure*}
\end{document}